\documentclass[11pt]{article}

\usepackage{enumerate}
\usepackage[OT1]{fontenc}
\usepackage{amsmath,amssymb}
\usepackage[numbers, compress]{natbib}
\usepackage[usenames]{color}
\usepackage{fullpage}
\usepackage{natbib}
\usepackage{multirow}
\usepackage{smile}

\usepackage[colorlinks,
            linkcolor=red,
            anchorcolor=blue,
            citecolor=blue
            ]{hyperref}

\def\tr{\mathop{\text{tr}}\kern.2ex}

\long\def\comment#1{}

\def\tr{\mathop{\text{Tr}}}

\newcommand{\bel}{\begin{eqnarray}\label}
\newcommand{\eel}{\end{eqnarray}}
\newcommand{\bes}{\begin{eqnarray*}}
\newcommand{\ees}{\end{eqnarray*}}

\newcommand{\la}{\langle}
\newcommand{\ra}{\rangle}

\def\tr{\mathop{\text{tr}}\kern.2ex}

\long\def\comment#1{}

\def\tr{\mathop{\text{Tr}}}

\def\cX{{\mathcal{X}}}

\def\cN{{\mathcal{N}}}

\def\tr{{\text{Tr}}}

\usepackage{chngcntr}
\usepackage{apptools}
\usepackage{mathrsfs}

\let\emptyset\varnothing

\usepackage{mathrsfs}

\usepackage{breakcites}

\usepackage{chngcntr}
\usepackage{apptools}

\usepackage{microtype}
\def\##1\#{\begin{align}#1\end{align}}
\def\$#1\${\begin{align*}#1\end{align*}}
\usepackage{enumitem}

\let\emptyset\varnothing

\usepackage{mathrsfs}

\usepackage{enumitem}

\newcommand*{\dif}{\mathop{}\!\mathrm{d}}

\newcommand*{\T}{\mathsf{T}}
\newcommand{\vertiii}[1]{{\left\vert\kern-0.25ex\left\vert\kern-0.25ex\left\vert #1 
    \right\vert\kern-0.25ex\right\vert\kern-0.25ex\right\vert}}
\usepackage{hyperref}
\usepackage{cleveref}
\crefname{section}{§}{§§}
\Crefname{section}{§}{§§}

\usepackage{chngcntr}
\usepackage{apptools}

\usepackage[misc]{ifsym}
\usepackage{amsmath}
\usepackage{amsthm}
\usepackage{amssymb}
\usepackage{amsopn}
\usepackage{graphicx}
\usepackage{color}
\usepackage{booktabs}
\usepackage{multicol}
\usepackage{multirow}
\usepackage{caption}
\usepackage{appendix} 
\usepackage{algorithm}
\usepackage{algorithmic}

\usepackage{epstopdf}
\usepackage{tikz}

\usepackage{bm}

\usepackage{listings}
\usepackage{matlab-prettifier}
\usepackage{float}

\usepackage{authblk}
\usepackage{parskip}

\def\##1\#{\begin{align}#1\end{align}}
\def\$#1\${\begin{align*}#1\end{align*}}
\usepackage{enumitem}

\newcommand\extrafootertext[1]{%
    \bgroup
    \renewcommand\thefootnote{\fnsymbol{footnote}}%
    \renewcommand\thempfootnote{\fnsymbol{mpfootnote}}%
    \footnotetext[0]{#1}%
    \egroup
}

\makeatletter
\def\@fnsymbol#1{\ensuremath{\ifcase#1\or 1 \or 2\or
   3\or 4\or 5\or 6\or 7
   \or 8 \else\@ctrerr\fi}}
    \makeatother

\title{{\textbf{\Large{Inducing Equilibria via Incentives: \\Simultaneous Design-and-Play Ensures Global Convergence}}}}

\author
{\small
Boyi Liu$^{1,\ast}$ 
\qquad Jiayang Li$^{1,\ast}$
\qquad Zhuoran Yang$^{2}$
\qquad Hoi-To Wai$^{3}$
\qquad Mingyi Hong$^{4}$
\newline Yu (Marco) Nie$^{1}$
\qquad Zhaoran Wang$^{1}$
}

\date{}

\begin{document}

\extrafootertext{$^\ast$Equal contribution. $^1$Northwestern University; \texttt{\{boyiliu2018, jiayangli2024\}@u.northwestern.edu}, \texttt{\{y-nie, zhaoran.wang\}@northwestern.edu}. $^2$Yale University; \texttt{zhuoran.yang@yale.edu}. $^3$The Chinese University of Hong Kong; \texttt{htwai@se.cuhk.edu.hk}. $^4$University of Minnesota; \texttt{mhong@umn.edu}
}

\maketitle

\begin{abstract}
{
To regulate a social system comprised of self-interested agents, economic incentives are often required to induce a desirable outcome. This incentive design problem naturally possesses a bilevel structure, in which a designer modifies the rewards of the agents with incentives while anticipating the response of the agents, who play a non-cooperative game that converges to an equilibrium. The existing bilevel optimization algorithms raise a dilemma when applied to this problem: anticipating how incentives affect the agents at equilibrium requires solving the equilibrium problem repeatedly, which is computationally inefficient; bypassing the time-consuming step of equilibrium-finding can reduce the computational cost, but may lead the designer to a sub-optimal solution. To address such a dilemma, we propose a method that tackles the designer's and agents' problems simultaneously in a single loop.  Specifically, at each iteration, both the designer and the agents only move one step. Nevertheless, we allow the designer to gradually learn the overall influence of the incentives on the agents, which guarantees optimality after convergence. The convergence rate of the proposed scheme is also established for a broad class of games.
}

\end{abstract}

\section{Introduction}

A common thread in human history is how to ``properly" regulate a social system comprised of self-interested individuals. In a laissez-faire economy, for example, the competitive market itself is the primary regulatory mechanism \citep{smith1791inquiry, friedman1990free}. However, a laissez-faire economy may falter due to the existence of significant ``externalities" \citep{buchanan1962externality, greenwald1986externalities}, which may arise wherever the self-interested agents do not bear the external cost of their behaviors in the entirety. The right response, many argue, is to introduce corrective policies in the form of economic incentives (e.g., tolls, taxes, and subsidies) \citep{maskin1994invisible}. By modifying the rewards of the agents, these incentives can encourage (discourage) the agents to engage in activities that create positive (negative) side effects for the society, and thus guide the self-interests of the agents towards a socially desirable end. For example, carbon taxes can be levied on carbon emissions to protect the environment during the production of goods and services \citep{metcalf2009design}. Surge pricing has been widely used to boost supply and dampen demand in volatile ride-hail markets \citep{mguni2019coordinating}. Lately,  subsidies and penalties were both introduced to overcome vaccine hesitancy in the world's hard-fought battle against the COVID-19 pandemic.

The goal of this paper is to develop a provably efficient method for guiding the agents in a non-cooperative game towards a socially desirable outcome --- e.g., the one that maximizes the social welfare --- by modifying their payoffs with incentives. The resulting problem may be naturally interpreted as a Stackelberg game \citep{von1952theory} in which the ``incentive designer" is the leader while the agents being regulated are the followers. Hence, it naturally possesses a bilevel structure \citep{anandalingam1992hierarchical}: at the upper level, the ``designer" optimizes the incentives by anticipating and regulating the best response of the agents, who play a non-cooperative game at the lower level. As the lower-level agents pursue their self-interests freely, their best response can be predicted by the Nash equilibrium \citep{nash1951non}, which dictates no agent can do better by unilaterally changing their strategy. Accordingly, the incentive design problem is a mathematical program with equilibrium constraints (MPEC) \citep{luo1996mathematical}.

In the optimization literature, MPECs are well-known for their intractability \citep{colson2007overview}. Specifically, even getting a first-order derivative through their bilevel structure is a challenge. In the incentive design problem, for example, to calculate the gradient of the designer's objective at equilibrium, which provides a principled direction for the designer to update the incentives, one must \textit{anticipate} how the equilibrium is affected by the changes \citep{friesz1990sensitivity}. This is usually achieved by performing a sensitivity analysis, which in turn requires differentiation through the lower-level equilibrium problem, either implicitly or explicitly \citep{li2020end}.
No matter how the sensitivity analysis is carried out, the equilibrium problem must be \textit{solved} before updating the incentives. The resulting algorithm thus admits a double loop structure: in the outer loop, the designer iteratively moves along the gradient; but to find the gradient, it must allow the lower level game dynamics to run its course to arrive at the equilibrium given the current incentives. 

Because of the inherent inefficiency of the double-loop structure, many heuristics methods have also been developed for bilevel programs in machine learning \citep{liu2018darts,luketina2016scalable, finn2017model}. When applied to the incentive design problem, these methods assume that the designer \textit{does not solve} the equilibrium exactly to evaluate the gradient. Instead, at each iteration, the game is allowed to run just a few rounds, enough for the designer to obtain a reasonable approximation of the gradient. Although such a method promises to reduce the computational cost significantly at each iteration, it may never converge to the same optimal solution obtained without the approximation.

\textbf{Contribution.} In a nutshell, correctly anticipating how incentives affect the agents at equilibrium requires solving the equilibrium problem repeatedly, which is computationally inefficient.
On the other hand, simply bypassing the time-consuming step of equilibrium-finding may lead the designer to a sub-optimal solution.
This dilemma prompts the following question that motivates this study: \textit{can we obtain the optimal solution to an incentive design problem without repeatedly solving the equilibrium problem?}

In this paper, we propose an efficient principled method that tackles the designer's problem and agents' problem simultaneously in a single loop. At the lower level, we use the \textit{mirror descent} method \citep{nemirovskij1983problem} to model the process through which the agents move towards equilibrium. At the upper level, we use the \textit{gradient descent} method to update the incentives towards optimality. At each iteration, both the designer and the agents only move one step based on the first-order information. However, as discussed before, the upper gradient relies on the corresponding lower equilibrium, which is not available in the single-loop update. Hence, we propose to use the implicit differentiation formula---with equilibrium strategy replaced by the current strategy---to estimate the upper gradient, which might be biased at the beginning. Nevertheless, we prove that if we improve the lower-level solution with larger step sizes, the upper-level and lower-level problems may converge simultaneously at a fast rate. The proposed scheme hence guarantees optimality because it can anticipate the overall influence of the incentives on the agents eventually after convergence. %

\textbf{Organization.} In Section \ref{sec:related}, we discuss related work. In Section \ref{sec:formulation}, we provide the mathematical formulation of the incentive design problem. In Section \ref{sec:algorithm}, we design algorithms for solving the problem. In Section \ref{sec:convergence}, we establish conditions under which the proposed scheme globally converges to the optimal solution and analyze the convergence rate. The convergence analysis is restricted to games with a unique equilibrium. In Section \ref{sec:extension}, we discuss how to apply our algorithms to games with multiple equilibria. Eventually, we conduct experiments to test our algorithms in Section \ref{sec:experiments}.

\textbf{Notation.} We denote $\la \cdot, \cdot \ra$ as the inner product in vector spaces. 
For a vector form $a = (a^i)$, we denote $a^{-i} = (a^j)_{j \neq i}$. For a finite set $\mathcal{X} \in \mathbb R^n$, we denote $\Delta(\mathcal{X}) = \{\pi \in \mathbb R_{+}^n: \sum_{x^i \in \mathcal{X}} \pi_{x^i} = 1\}$. For any vector norm $\|\cdot\|$, we denote $\|\cdot\|_* = \sup_{\|z\| \leq 1} \la \cdot, z \ra$ as its dual norm. We refer readers to Appendix \ref{appendix:notation} for a collection of frequently used problem-specific notations.

\section{Related work}
\label{sec:related}

The incentive design problem studied in this paper is a special case of mathematical programs with equilibrium constraints (MPEC) \citep{harker1988existence}, a class of optimization problems constrained by equilibrium conditions. MPEC is closely related to bilevel programs \citep{colson2007overview}, which bind two mathematical programs together by treating one program as part of the constraints for the other.

\textbf{Bilevel Programming.} In the optimization literature, bilevel programming was first introduced to tackle resource allocation problems \citep{bracken1973mathematical} and has since found applications in such diverse topics as revenue management, network design, traffic control, and energy systems. In the past decade, researchers have discovered numerous applications of bilevel programming in machine learning, including meta-learning (ML) \citep{finn2017model}, adversarial learning \citep{jiang2021learning}, hyperparameter optimization, \citep{maclaurin2015gradient} and neural architecture search \citep{liu2018darts}.  These newly found bilevel programs in ML are often solved by gradient descent methods, which require differentiating through the  (usually unconstrained) lower-level optimization problem  \citep{liu2021investigating}. The differentiation  can be carried out either implicitly on the optimality conditions as in the conventional sensitivity analysis \citep[see e.g.,][]{amos2017optnet, rajeswaran2019meta, bai2019deep}, or explicitly by unrolling the numerical procedure used to solve the lower-level problem \citep[see e.g.,][]{maclaurin2015gradient, franceschi2017forward}. In the explicit approach, one may  ``partially" unroll the solution procedure (i.e., stop after just a few rounds, or even only one round) to reduce the computational cost. Although this popular heuristic has delivered satisfactory performance on many practical tasks  \citep{luketina2016scalable, metz2016unrolled, finn2017model, liu2018darts}, it cannot guarantee optimality for bilevel programs under the general setting, as it cannot derive the accurate upper-level gradient at each iteration \citep{wu2018understanding}. 

\textbf{MPEC.} Unlike bilevel programs, MPEC is relatively under-explored in the ML literature so far. Recently, \citet{li2020end} extended the explicit differentiation method for bilevel programs to MPECs. Their algorithm unrolls an iterative projection algorithm for solving the lower-level problem, which is formulated as a variational inequality (VI) problem. Leveraging the recent advance in differentiable programming \citep{amos2017optnet}, they embedded each projection iteration as a differentiable layer in a computational graph, and accordingly, transform the explicit differentiation as standard backpropagation through the graph. The algorithm proposed by \citet{li2022differentiable} has a similar overall structure, but theirs casts the lower-level solution process as the imitative logit dynamics \citep{bjornerstedt1994nash} drawn from the evolutionary game theory, which can be more efficiently unrolled. Although backpropagation is efficient, constructing and storing such a graph --- with potentially a large number of projection layers needed to find a good solution to the lower-level problem --- is still demanding. To reduce this burden, partially unrolling the iterative projection algorithm is a solution. Yet, it still cannot guarantee optimality for MPECs due to the same reason as for bilevel programs.

The simultaneous design-and-play approach is proposed to address this dilemma. Our approach follows the algorithm of \citet{hong2020two} and \citet{chen2021single}, which solves bilevel programs via single-loop update. Importantly, they solve both the upper- and the lower-level problem using a gradient descent algorithm and establish the relationship between the convergence rate of the single-loop algorithm and the step size used in gradient descent.   However, their algorithms are limited to the cases where the lower-level optimization problem is \textit{unconstrained}. Our work extends these single-loop algorithms to MPECs that have an equilibrium problem at the lower level.  
We choose mirror descent  as the solution method to the lower-level problem because of its broad applicability to optimization problems with constraints \citep{nemirovskij1983problem} and generality in the behavioral interpretation of games \citep{mertikopoulos2019learning, krichene2015convergence}.  We show that the convergence of the proposed simultaneous design-and-play approach relies on the setting of the step size for both the upper- and lower-level updates, a finding that echos the key result in \citep{hong2020two}.  We first give the convergence rate under mirror descent and the  unconstrained assumption and then extend the result to the constrained case.  For the latter, we show that convergence cannot be guaranteed if the lower-level solution gets too close to the boundary early in the simultaneous solution process.  To avoid this trap, the standard mirror descent method is revised to carefully steer the lower-level solution away from the boundary.

\section{Problem Formulation}\label{sec:formulation}

We study incentive design in both atomic games \citep{nash1951non} and nonatomic games \citep{schmeidler1973equilibrium}, classified depending on whether the set of agents is endowed with an atomic or a nonatomic measure. In social systems, both types of games can be useful, although the application context varies. Atomic games are typically employed  when each agent has a non-trivial influence on the rewards of other agents. In a nonatomic game, on the contrary, a single agent's influence is negligible and the reward could only be affected by the collective behavior of agents. 

\vskip4pt
\noindent{\bf Atomic Game.} Consider a game played by a finite set of agents $\mathcal N = \{1, \ldots, n \}$, where each agent $i \in \mathcal N$ selects a strategy $a^i \in \mathcal A^i \subseteq {\mathbb{R}}^{d^i}$ to maximize the reward received, which is determined by a continuously differentiable function $u^i: \mathcal A = \prod_{i \in \mathcal N} \mathcal{A}^i \to \mathbb R$. Formally, a joint strategy $a_* \in \mathcal A$ is a Nash equilibrium if
\begin{equation*}
    u^i(a^i_*, a^{-i}_*) \geq u^i(a^i, a^{-i}_*), \quad \forall~a^i \in \mathcal{A}^i, \quad \forall i \in \cN.
\end{equation*}

Suppose that for all $i \in \mathcal{N}$, the strategy set $\mathcal{A}^i$ is closed and convex, and the reward function $u^i$ is convex in $a^i$, then $a_* \in \mathcal{A}$ is a Nash equilibrium if and only if there exists $\lambda^i, \ldots, \lambda^n > 0$ such that \citep{scutari2010convex}
\begin{equation}
   \sum_{i = 1}^n \lambda^i \cdot \bigl\la\nabla_{a^i} u^i(a_*), a^i - a^i_* \bigr\ra \leq 0, \quad \text{for all}~a \in \mathcal{A}.
   \label{eq:atomic-eq}
\end{equation}

\begin{example}[Oligopoly model]
\label{eg:cournot}
In an oligopoly model, there is finite set $\cN = \{1, \ldots, n\}$ of firms, each of which supplies the market with a quantity $a^i$ ($a^i \geq 0$) of goods. Under this setting, we have $\mathcal{A} = \mathbb R^n_{+}$. The good is then priced as $p(q) = p_0 - \gamma \cdot  q$, where $p_0, \gamma > 0$ and $q = \sum_{j \in \cN} a^i$ it the total output. The profit and the marginal profit of the firm $i$ are then given by
\$
u^i(a) = a^i \cdot \biggl(p_0 - \gamma \cdot \sum_{j \in \cN} a^j\biggr) - c^i,\quad \nabla_{a^i} u^i(a) = p_0 - \gamma \cdot \biggl( a^i  + \sum_{j \in \cN} a^j\biggr),
\$
respectively, where $c^i$ is the constant marginal cost\footnote{Throughout this paper, we use the term ``reward'' to describe the scenario where the agents aim to maximize $u^i$, and use ``cost'' when the agents aim to do the opposite.} for firm $i$.

\end{example}

\vskip4pt
\noindent{\bf Nonatomic Game.} Consider a game played by a continuous set of agents, which can be divided into a finite set of classes $\mathcal{N} = \{1, \ldots, n \}$. We assume that each $i \in \mathcal{N}$ represents a class of infinitesimal and homogeneous agents sharing the finite strategy set $\mathcal A^i$ with $|\mathcal{A}^i| = d^i$. The mass distribution for the class $i$ is defined as a vector $q^i \in \Delta(\mathcal{A}^i)$ that gives the proportion of agents using each strategy. Let the cost of an agent in class $i$ to select a strategy $a \in \mathcal{A}^i$ given $q = (q^1, \ldots, q^n)$ be $c^{ia}(q)$.  Formally, a joint mass distribution $q \in \Delta(\mathcal{A}) = \prod_{i \in \mathcal{N}} \Delta(\mathcal{A}^i)$ is a Nash equilibrium, also known as a Wardrop equilibrium \citep{wardrop1952road}, if for all $i \in \mathcal{N}$, there exists $b^i$ such that
\begin{equation*}
\begin{cases}
c^{ia}(q_*) = b^i,\quad \text{if}~q^{ia}_* > 0, \\
c^{ia}(q_*) \geq b^i, \quad \text{if}~q^{ia}_* = 0.
\end{cases}
\end{equation*}
The following result extends the VI formulation to Nash equilibrium in nonatomic game: denote $c^i(q) = (c^{ia}(q))_{a \in \mathcal{A}^i}$, then $q_*$ is a Nash equilibrium if and only if \citep{dafermos1980traffic}
\begin{equation}
   \sum_{i \in \mathcal{N}} \lambda^i \cdot \bigl\la c^i(q_*), q^i - q^i_* \bigr\ra \geq 0, \quad \text{for all}~q \in \Delta(\mathcal{A}).
   \label{eq:nonatomic-eq}
\end{equation}

\begin{example}[Routing game]
\label{eg:routing}
Consider a set of agents traveling from source nodes to sink nodes in a directed graph with nodes $\mathcal{V}$ and edges $\mathcal{E}$. Denote $\mathcal{N} \subseteq \mathcal{V} \times \mathcal{V}$ as the set of source-sink pairs, $\mathcal{A}^i  \subseteq 2^{\mathcal{E}}$ as the set of paths connecting $i \in \mathcal{N}$ and $\mathcal{E}^{ia} \subseteq \mathcal{E}$ as the set of all edges on the path $a \in \mathcal{A}^i$. Suppose that each source-sink pair $i \in \mathcal{N}$ is associated with $\rho^i$ nonatomic agents aiming  to choose a route from $\mathcal{A}^i$  to minimize the total cost incurred. Let $q^{ia} \in \Delta(\mathcal{A}^i)$ be the proportion of travelers using the path $a \in \mathcal{A}_w$, $x_e \in \mathbb R_+$ be the number of travelers using the edge $e$ and $t^{e}(x^e) \in \mathbb R_+$ be the cost for using edge $e$. Then we have
$
    x^e = \sum_{i \in \mathcal{N}} \sum_{a \in \mathcal{A}^i} \rho^i \cdot q^{ia} \cdot \delta^{eia},
$
where $\delta^{eik}$ equals 1 if $e \in \mathcal{E}^{ia}$ and 0 otherwise. The total cost for a traveler selecting a path $a \in \mathcal{A}^i$ will then be $c^{ia}(q) = \sum_{e \in \mathcal{E}} t^{e}(x^e) \cdot \delta^{eia}$.
\end{example}
\vskip4pt
\noindent{\bf Incentive Design.} Despite the difference,  we see that an equilibrium of both atomic and nonatomic games can be formulated as a solution to a corresponding VI problem in the following form 
\begin{equation}
   \sum_{i \in \mathcal{N}} \lambda^i \cdot \bigl\la v^i(x_*), x^i - x^i_* \bigr\ra \leq 0, \quad \text{for all}~x \in \mathcal{X} = \prod_{i \in \mathcal{N}} \mathcal{X}^i,
   \label{eq:game-eq}
\end{equation}
where $v^i$ and $\mathcal{X}^i$ denote different terms in the two types of games. Suppose that there exists an incentive designer aiming to induce a desired equilibrium. To this end, the designer can add incentives $\theta \in \Theta \subseteq {\mathbb{R}}^d$, which is assumed to enter the reward/cost functions and thus leads to a parameterized $v_{\theta}^i(x)$. We assume that the designer's objective is determined by a function $f: \Theta \times \mathcal{X} \to {\mathbb{R}}$. Denote $S(\theta)$ as the solution set to \eqref{eq:game-eq}. We then obtain the uniform formulation of the incentive design problem for both atomic games and nonatomic games
\begin{equation}
\min_{\theta\in \Theta}  f_*(\theta) = f(\theta, x_*),\quad \text{s.t.} \quad  x_* \in S(\theta).
\label{eq:bilevel}
\end{equation}
If the equilibrium problem admits multiple solutions, the agents may converge to different ones and it would be difficult to determine which one can better predict the behaviors of the agents without additional information. In this paper, we \textit{first} consider the case where the game admits a unique equilibrium. Sufficient conditions under which the game admits a unique equilibrium will also be provided later.  We would consider the non-unique case later and show that our algorithms can still become applicable by adding an appropriate regularizer in the cost function. %

\vskip4pt
\noindent{\bf Stochastic Environment.} In the aforementioned settings, $v_{\theta}^i(x)$ is a deterministic function. Although most MPEC algorithms in the optimization literature follow this deterministic setting, in this paper, we hope our algorithm can handle more realistic scenarios. Specifically, 
in the real world, the environment could be stochastic if some environment parameters that fluctuate over days. In a traffic system, for example, both worse weather and special events may affect the road condition, hence the travel time $v_{\theta}^i(x)$ experienced by the drivers. We expect our algorithm can still work in the face of such stochasticity. To this end, we assume that $v_{\theta}^i(x)$ represents the expected value of the cost function. On each day, however, the agents can only receive a noised feedback $\hat v_{\theta}^i$ as an estimation. In the next section, we develop algorithms based on such noisy feedback.   %

\section{Algorithm}
\label{sec:algorithm}

We propose to update $\theta$ and $x$ simultaneously to improve the computational efficiency. The game dynamics at the lower level is modeled using the mirror descent method.  Specifically, at the stage $k$, given $\theta_k$ and $x_k$, the agent first receives $v^i_{\theta_k}(x_k)$ as the feedback.
After receiving the feedback, the agents update their strategies via
\begin{equation}
x^i_{k+1} = \argmax_{x^i \in \cX^i} \bigl\{ \la v^i_{\theta_k}(x_k), x^i\ra - 1/\beta^i_k \cdot D_{\psi^i}(x^i_k, x^i)\bigr\},
\end{equation}
where $D_{\psi^i}(x^i_k, x^i)$ is the Bregman divergence induced by a strongly convex function $\psi^i$.
The accurate value of $\nabla f_*(\theta_k)$, the gradient of its objective function, equals
\begin{equation*}
    \quad \nabla_{\theta} f\bigl(\theta_k, x_*({\theta_k})\bigr) + \bigl[\nabla_{\theta} x_*({\theta_k})\bigr]^\top \cdot \nabla _{x} f\bigl(\theta_k, x_*({\theta_k})\bigr),
\end{equation*}
which requires the exact lower-level equilibrium $x_*({\theta_k})$. However, at the stage $k$, we only have the current strategy $x_k$. Therefore, we also have to establish  an estimator of $\nabla x_*(\theta_k)$ and $\nabla f_*(\theta_k)$ using $x_k$, the form of which will be specified later.

\smallskip
\begin{remark}
The standard gradient descent method is double-loop because at each $\theta_k$ it involves an inner loop for solving the exact value of $x_*(\theta_k)$ and then calculating the exact gradient.
\end{remark}

\subsection{Unconstrained Game}

We first consider unconstrained games with $\cX^i = \mathbb R^{d^i}$, for all $i \in \mathcal{N}$. 
We select $\psi^i(\cdot)$ as smooth function, i.e., there exists a constant $H_\psi \geq 1$ such that for all $i \in \cN$ and $x^i, x^{i\prime} \in \cX^i$,
\begin{equation}
\bigl\|\nabla \psi^i(x^i) - \nabla \psi^i(x^{i\prime}) \bigr\|_2 \leq H_{\psi} \cdot \|x^i - x^{i\prime}\|_2.
\label{eq:smooth-psi}
\end{equation}
Example of $\psi^i$ satisfying this assumption include (but is not limited to) $\psi^i(x^i) = (x^{i})^{\top} Q^i x^i/2$, where $Q^i \in \mathbb R^{d^i} \times \mathbb R^{d^i} $ is a positive definite matrix. It can be directly checked that we can set $H_{\psi}  = \max_{i \in \mathcal{N}} \delta^i$, where $\delta^i$ is the largest singular value of $Q^i$. In this case,
the corresponding Bregman divergence becomes $D_{\psi^i}(x^i, x^{i\prime}) = (x^i - x^{i\prime})^\top Q^i (x^i - x^{i\prime})/2$, which is known as the squared Mahalanobis distance. Before laying out the algorithm, we first give the following lemma characterizing $\nabla_{\theta} x_*(\theta)$.
\begin{lemma}\label{lemma:jacobian}
When $\cX^i = {\mathbb{R}}^{d^i}$ and $\nabla_{x} v_{\theta}(x_*(\theta))$ is non-singular, it holds that
\$
\nabla_{\theta} x_*(\theta) = -\Bigl[\nabla_{x} v_{\theta}\bigl(x_*(\theta)\bigr)\Bigr]^{-1} \cdot \nabla_{\theta} v_{\theta} \bigl(x_*(\theta)\bigr).
\$
\end{lemma}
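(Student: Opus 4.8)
The plan is to reduce the variational inequality \eqref{eq:game-eq} to a square nonlinear system in the unconstrained case, then invoke the implicit function theorem and differentiate.

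First I would observe that when $\cX^i = \RR^{d^i}$ for every $i$, the feasible set in \eqref{eq:game-eq} is all of $\RR^{d}$. Fix $\theta$ and let $x_* = x_*(\theta)$ be the (unique) solution. For any index $i\in\cN$, any direction $e\in\RR^{d^i}$, and any $t>0$, the test point $x$ obtained from $x_*$ by replacing $x^i_*$ with $x^i_* + t e$ (and keeping the other blocks) is feasible, so \eqref{eq:game-eq} yields $\lambda^i\,t\,\la v^i_\theta(x_*),e\ra \le 0$; using $-e$ in place of $e$ gives the reverse inequality. Since $\lambda^i>0$, this forces $\la v^i_\theta(x_*),e\ra = 0$ for all $e$, hence $v^i_\theta(x_*)=0$. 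Writing $v_\theta = (v^1_\theta,\dots,v^n_\theta)$, the equilibrium condition therefore collapses to the equation $v_\theta(x_*(\theta)) = 0$.

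Next I would apply the implicit function theorem to the map $\Phi(\theta,x) := v_\theta(x)$ at the point $(\theta, x_*(\theta))$. The map $\Phi$ is continuously differentiable (it inherits this from the continuous differentiability of the reward/cost functions that define $v_\theta$), and by hypothesis its partial Jacobian in $x$, namely $\nabla_x \Phi(\theta, x_*(\theta)) = \nabla_x v_\theta(x_*(\theta))$, is non-singular. Hence there is a $C^1$ map $\theta \mapsto \tilde x(\theta)$ defined near $\theta$ with $v_\theta(\tilde x(\theta)) \equiv 0$; by the uniqueness of the equilibrium assumed in the problem formulation, $\tilde x(\theta) = x_*(\theta)$, so in particular $x_*(\cdot)$ is differentiable and $\nabla_\theta x_*(\theta)$ is well defined. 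Differentiating the identity $v_\theta(x_*(\theta)) \equiv 0$ with respect to $\theta$ via the chain rule gives
\begin{equation*}
\nabla_\theta v_\theta\bigl(x_*(\theta)\bigr) + \nabla_x v_\theta\bigl(x_*(\theta)\bigr)\cdot \nabla_\theta x_*(\theta) = 0 .
\end{equation*}
Left-multiplying by $\bigl[\nabla_x v_\theta(x_*(\theta))\bigr]^{-1}$ and rearranging yields the claimed formula.

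The argument has no deep obstacle; the only points requiring care are (i) justifying that $x_*(\cdot)$ is genuinely differentiable rather than merely continuous, which is exactly what the implicit function theorem together with the uniqueness assumption delivers, and (ii) keeping the ordering of the Jacobian factors (and the transpose conventions used elsewhere in the paper) consistent so that $\nabla_\theta x_*(\theta)$ comes out with the correct shape.
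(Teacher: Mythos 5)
Your proposal is correct and follows essentially the same route as the paper: reduce the unconstrained equilibrium condition to the stationarity equation $v_\theta(x_*(\theta))=0$ and differentiate it implicitly in $\theta$. The only difference is that you spell out the reduction from the variational inequality and invoke the implicit function theorem to justify differentiability of $x_*(\cdot)$, steps the paper's proof takes for granted.
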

\begin{proof}
See Appendix \ref{appendix:sensitivity} for detailed proof.
\end{proof}

For any given $\theta \in \Theta$ and $x \in \cX$, we define
\#\label{grad_exten}
\tilde{\nabla} f(\theta,x) =  \nabla_{\theta} f(\theta, x) - \bigl[\nabla_{\theta} v_{\theta} (x) \bigr]^\top \cdot \bigl[\nabla_{x} v_{\theta}(x) \bigr]^{-1} \cdot \nabla _{x} f(\theta, x).
\#
Although we cannot obtain the exact value of $\nabla f_*(\theta_k)$, we may use $\hat \nabla f(\theta_k, x_k)$ as a \textit{surrogate} and update $\theta_k$ based on $\hat \nabla f_*(\theta_k, x_k)$ instead. Now we are ready to present the following bilevel incentive design algorithm for unconstrained games (see Algorithm \ref{alg_BG}).

\begin{algorithm} [H]

	\caption{Bilevel incentive design for unconstrained games}
	\label{alg_BG}
	\begin{algorithmic}
{\small
\STATE {\bf Input:}
$\theta_0 \in \Theta$, $x_0 \in \cX = {\mathbb{R}}^d$, where $d = \sum_{i \in \mathcal{N}} d^i$, sequence of step sizes $(\alpha_k, \{\beta^i_k\}_{i \in \cN})$.

{\bf For $k=0,1,\dots$ do:}

{\addtolength{\leftskip}{0.2in}

\STATE Update strategy profile
\#\label{eq:xupdate}
x^i_{k+1} = \argmax_{x^i \in \cX^i} \bigl\{ \la \hat{v}^i_k, x^i\ra - 1/\beta^i_k \cdot D_{\psi^i}(x^i_k, x^i)\bigr\},
\#
for all $i \in \cN$, where $\hat{v}^i_k$ is an estimator of $v^i_{\theta_k}(x_k)$.

}

{\addtolength{\leftskip}{0.2in}

\STATE  Update incentive parameter 
\$
\theta_{k+1} = \argmax_{\theta \in \Theta} \bigl\{\la \tilde{\nabla} f(\theta_k, x_{k+1}),\theta \ra - {1/ \alpha_k}\cdot\|\theta-\theta_k\|_2^2 \bigr\}.
\$

}
{\bf EndFor}
\STATE {\bf Output:} Last iteration incentive parameter $\theta_{k+1}$ and strategy profile $x_{k+1}$.
}
	\end{algorithmic}
\end{algorithm}

In Algorithm \ref{alg_BG}, if $\theta_k$ and $x_k$  converge to fixed points $\bar \theta$ and $\bar x$, respectively, then
$
    \bar x = x_*(\bar \theta)
$
is expected to be satisfied. Thus, we would also have
$
    \hat \nabla f(\bar \theta, \bar x) = \nabla f_*(\bar \theta).
$
Thus, the optimality of $\bar \theta$ can be then guaranteed. It implies that the algorithm would find the optimal solution if it converges. Instead, the difficult part is
how to design appropriate step sizes that ensure convergence. In this paper, we provide such conditions in Section \ref{sec:convergence-un}.

\subsection{Simplex-Constrained Game}
\label{sec:constrained}
We then consider the case where for all $i \in \cN$, $x^i$ is constrained within the probability simplex
\$
\Delta\bigl([d^i]\bigr) = \bigl\{x^i \in {\mathbb{R}}^{d^i}\,\big|\, x^i \geq 0, (x^i)^\top {\bf 1}_{d^i}  = 1\bigr\},
\$
where ${\bf 1}_{d^i} \in {\mathbb{R}}^{d^i}$ is the vector of all ones. Here we remark that any classic game-theoretic models are simplex-constrained. In fact, as long as the agent faces a finite number of choices and adopts a mixed strategy, its decision space would be a probability simplex \citep{nash1951non}. In addition, some other types of decisions may also be constrained by a simplex. For example, financial investment concerns how to split the money on different assets. In such a scenario, the budget constraint can also be represented by a probability simplex.

In such a case, we naturally consider $\psi^i(x^i) = - \sum_{j \in [d^i]}[x^i]_j \cdot \log[x^i]_j$, which is the Shannon entropy. Such a choice gives the Bregman divergence $D_{\psi^i}(x^i, x^{i\prime}) = \sum_{j \in [d^i]}[x^i]_j\cdot \log([x^{i\prime}]_j/[x^i]_j)$, which is known as the KL divergence. In this case, we still first need to characterize $\nabla_{\theta} x_*(\theta)$, which also has an analytic form. Specifically, if we define a function $h_{\theta}(x) = (h_{\theta}^i(x^i))_{i \in \mathcal{N}}$ that satisfies
\begin{equation*}
    h_{\theta}^i(x^i) = \argmax_{x^{\prime i} \in \cX^i} \Bigl\{ \bigl\la v^i_{\theta}(x^i), x^{\prime i}\bigr\ra - 1/\beta^i_k \cdot D_{\psi^i}(x^i, x^{\prime i})\Bigr\},
\end{equation*}
then for any $\theta$, $x^i_*(\theta)$ satisfies 
$x^i_*(\theta) = h_{\theta}(x^i_*(\theta))$ \citep{dafermos1983iterative}.
Implicitly differentiating through this fixed point equation then yields $\nabla x_*(\theta) = \nabla_{\theta} h_{\theta}(x_*(\theta)) \cdot (I - \nabla_{x} h_{\theta}(x_*(\theta)))^{-1}$.
Then, similar to \eqref{grad_exten}, we may use
\#\label{eq:grad-exten-kl}
    \tilde{\nabla} f(\theta, x) =  \nabla_{\theta} f(\theta, x) - \nabla_{\theta} h_{\theta}(x) \cdot \bigl(I - \nabla_{x} h_{\theta}(x)\bigr)^{-1} \cdot \nabla _{x} f(\theta, x)
\#
to approximate the actual gradient $\nabla f_*(\theta)$ and then update $\theta_k$ based on $\nabla f_*(\theta_k)$ instead.

\smallskip
\begin{remark}
The mapping $h_{\theta}(x)$ has an analytic expression, which reads
\begin{equation*}
    h_{\theta}^i(x) = x^i \cdot \exp\bigl(-\beta_k^i \cdot v_{\theta}^i(x)\bigr) \Big/ \Bigl\|x^i \cdot \exp\bigl(-\beta_k^i \cdot v_{\theta}^i(x)\bigr)\Bigr\|_1.
\end{equation*}
Hence, both $\nabla_{x} h_{\theta}(x)$ and $\nabla_{\theta} h_{\theta}(x)$ can also be calculated analytically.
\end{remark}

In addition to a different gradient estimate, we also modify Algorithm \ref{alg_BG} to keep the iterations $x_k$ from hitting the boundary at the early stage. The modification involves an additional step that mixes the strategy with a uniform strategy ${\bf 1}_{d^i} / d^i$, i.e., imposing an additional step
\$
\tilde{x}^i_{k+1} = (1 - \nu_{k+1})\cdot x_{k+1} + \nu_{k+1}\cdot {\bf 1}_{d^i}/d^i
\$
upon finishing the update \eqref{eq:xupdate}, where $\nu_{k+1} \in (0,1)$ is a the mixing parameter, decreasing to 0 when $k \to \infty$. In the following, we give the formal presentation of the modified bilevel incentive design algorithm for simplex-constrained games (see Algorithm \ref{alg:kl}). 

\begin{algorithm}[H]

	\caption{Bilevel incentive design for simplex constrained games}
	
	\label{alg:kl}
	\begin{algorithmic}
{\small
\STATE {\bf Input:}
$\theta_0 \in \Theta$, $x_0 \in \cX$, step sizes $(\alpha_k, \{\beta_k^i\}_{i \in \cN}), k\geq 0$, and mixing parameters $\nu_k, k \geq 0$.

{\bf For $k=0,1,\dots$ do:}

{\addtolength{\leftskip}{0.2in}

\STATE Update strategy profile
\#
{x}^i_{k+1} &= \argmax_{x^i \in \Delta([d^i])} \bigl\{ \la \hat{v}^i_k, x^i\ra - 1/\beta^i_k \cdot D_{\psi^i}(\tilde{x}^i_k, x^i)\bigr\}, \notag\\
\tilde{x}^i_{k+1} &= (1 - \nu_k)\cdot {x}_{k+1} + \nu_{k}\cdot {\bf 1}_{d^i}/d^i, \label{eq:xmix}
\#
for all $i \in \cN$, where $\hat{v}^i_k$ is an estimator of $v^i_{\theta_k}(\tilde{x}_k)$.

}

{\addtolength{\leftskip}{0.2in}

\STATE  Update incentive parameter 
\$
\theta_{k+1} \leftarrow \argmax_{\theta \in \Theta} \bigl\{\la \tilde{\nabla} f(\theta_k, \tilde{x}_{k+1}),\theta \ra - {1/ \alpha_k}\cdot\|\theta-\theta_k\|_2^2 \bigr\}.
\$

}
{\bf EndFor}
\STATE {\bf Output:} Last iteration incentive parameter $\theta_{k+1}$ and strategy profile $x_{k+1}$.
}
	\end{algorithmic}
\end{algorithm}

Similar to Algorithm \ref{alg_BG}, at the core of the convergence of Algorithm \ref{alg:kl} is still the step size. This case is even more complicated, as we need to design $\alpha_k$, $\beta_k$, and $\nu_k$ at the same time. In this paper, a provably convergent scheme is provided in Section \ref{sec:convergence-sim}.

Before closing this section, we remark that the algorithm can be easily adapted to other types of constraints by using another $h_{\theta}(x)$ to model the game dynamics. Particularly, the projected gradient descent dynamics has very broad applicability. In this case, the algorithm for calculating $\nabla_{\theta} h_{\theta}(x)$ and  $\nabla_{x} h_{\theta}(x)$ is given by, for example, \citet{amos2017optnet}. The additional step \eqref{eq:xmix} then becomes unnecessary as it is dedicated to simplex constraints.

\section{Convergence Analysis}
\label{sec:convergence}

In this section, we study the convergence of the proposed algorithms. For simplicity, define $\overline{D}_{\psi}\bigl(x, x'\bigr) = \sum_{i \in \cN} D_{\psi^i}\bigl(x^i, x^{i\prime}\bigr).$  We make the following assumptions.

\vspace{5pt}
\begin{assumption}\label{assumption:game}
The lower-level problem in \eqref{eq:bilevel} satisfies the following conditions.
\begin{itemize}
    \item The strategy set $\cX^i$ of agent $i$ is a nonempty, compact, and convex subset of ${\mathbb{R}}^{d^i}$.
    \item For each $i \in \cN$, the gradient $v_{\theta}^i(\cdot)$ is $H_u$-Lipschitz continuous with respect to $\overline{D}_{\psi}$, i.e., for all $i \in \cN$ and $x, x' \in \cX$, $\|v_{\theta}^i(x) - v_{\theta}^i(x')\|^2_* \leq H^2_u\cdot \overline{D}_{\psi}(x, x^{\prime})$.
    \item There exist constants $\rho_\theta, \rho_x > 0$ such that for all $x \in \cX$ and $\theta  \in \Theta$, $\|\nabla_\theta v_\theta (x)\|_2 < \rho_\theta$, and $\|[\nabla_x v_\theta (x)]^{-1}\|_2 \leq 1/\rho_x$.
    \item For all $\theta \in \Theta$, the equilibrium $x_*(\theta)$ of the game is strongly stable with respect to $\overline{D}_{\psi}$, i.e., for all $x \in \mathcal{X}$, $\sum_{i \in \cN}\lambda^i\cdot\la v^i_{\theta}(x), x^i_*(\theta) - x^i \ra \geq  \overline{D}_{\psi}(x_*(\theta), x)$.
\end{itemize}
\end{assumption}

\begin{assumption}\label{assumption:f}
The upper-level problem in \eqref{eq:bilevel} satisfies the following properties.
\begin{itemize}
    \item The set $\Theta$ is  compact and convex. The function $f_*(\theta)$ is $\mu$-strongly convex and $\nabla f_*(\theta)$ has $2$-norm uniformly bounded by $M$.
    \item The extended gradient $\tilde{\nabla} f(\theta,x)$ is $\tilde{H}$-Lipschitz continuous with respect to $\overline{D}_\psi$, i.e., for all $x, x' \in \cX$, $\|\tilde{\nabla} f(\theta,x) - \tilde{\nabla} f(\theta, x')\|^2_2 \leq \tilde{H}^2\cdot \overline{D}_\psi(x, x')$.
\end{itemize}  
\end{assumption}

\begin{assumption}\label{assumption:estimate}
Define the filtration by $\cF_0^\theta = \{\theta_0\}$, $\cF_0^x = \emptyset$,
$\cF_k^\theta = \cF_{k-1}^\theta \cup \{x_{k-1}, \theta_k\}$, and $\quad \cF_k^x = \cF_{k-1}^x \cup \{\theta_k, x_k\}.$
We impose the following assumptions.
\begin{itemize}
    \item The feedback $\hat{v}_k$ is an unbiased estimate, i.e., for all $i \in \cN$, we have $\EE[\hat{v}^i_k\,|\,\cF_k^x]  = v^i_{\theta_k}(x_k)$.
    \item The feedback $\hat{v}_k$ has bounded mean squared estimation errors, i.e., there exists $\delta_u >0$ such that $\EE[\|\hat{v}^i_k - v^i_{\theta_k}(x_k)\|^2_*\,|\,\cF_k^x]  \leq \delta_u^2$ for all $i \in \cN$.
\end{itemize}
\end{assumption}

Below we discuss when the proposed assumptions hold, and if they are violated, how would our algorithm works.
Assumption \ref{assumption:game} includes the condition that $x_*(\theta)$ is strongly stable. In this case, it is the unique Nash equilibrium of the game \citep{mertikopoulos2019learning}. It is also a common assumption in the analysis of the mirror descent dynamics itself \citep{d2021stochastic}. We provide sufficient conditions for checking strong stability in Appendix \ref{app:vs}. We refer the readers to Section \ref{sec:extension} for an explanation of how to extend our algorithm when this assumption is violated.
Assumption \ref{assumption:f} includes the convexity of the upper-level problem, which is usually a necessary condition to ensure global convergence. Yet, without convexity, our algorithm can still converge to a local minimum. 
Assumption \ref{assumption:estimate} becomes unnecessary if we simply assume that the environment is deterministic. In this case, the accurate value of $v_{\theta}^i(x)$ is available. Yet, if the noises are added to the feedback, assuming that the noisy feedback is unbiased and bounded is still reasonable.

\subsection{Unconstrained Game}
\label{sec:convergence-un}

In this part, we establish the convergence guarantee of Algorithm \ref{alg_BG} for unconstrained games. We define the optimality gap $\epsilon^\theta_k$ and the equilibrium gap $\epsilon^x_{k+1}$ as 
\$
\epsilon^\theta_k := \EE\bigl[\|\theta_k - \theta_*\|_2^2\bigr],\quad \epsilon_{k+1}^x := \EE\Bigl[\overline{D}_\psi\bigl(x_*(\theta_k), x_{k+1} \bigr)\Bigr].
\$
 We track such two gaps as the convergence criteria in the subsequent results.

\begin{theorem}\label{thm:main}
For Algorithm \ref{alg_BG}, set the step sizes $\alpha_k = \alpha/(k+1)$, $\beta_k = \beta/(k+1)^{2/3}$, and $\beta_k^i = \lambda^i \cdot \beta_k$ with constants $\alpha > 0$ and $\beta>0$ satisfying
\$
\beta \leq 1/N\cdot H_u^2\|\lambda\|_2^2, \quad \alpha/\beta^{3/2} \leq 1/12 \cdot H_\psi\tilde{H}H_*,
\$
where $H_* = \rho_\theta/\rho_x$. Suppose that Assumptions \ref{assumption:game}-\ref{assumption:estimate} hold, then we have
\$
\epsilon_k^\theta  = O(k^{-2/3}), \quad \epsilon_k^x = O(k^{-2/3}).
\$
\end{theorem}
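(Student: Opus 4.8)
The plan is to set up a coupled recursion for the two error sequences $\epsilon_k^\theta$ and $\epsilon_k^x$ and then solve the coupled system by induction. First I would analyze the lower-level (mirror descent) step. Using the variational stability assumption together with the standard three-point identity for Bregman divergences, I expect to obtain a one-step contraction of the form
\$
\overline{D}_\psi\bigl(x_*(\theta_k), x_{k+1}\bigr) \leq (1 - c_1\beta_k)\,\overline{D}_\psi\bigl(x_*(\theta_k), x_k\bigr) + c_2\beta_k^2\bigl(H_u^2 \overline{D}_\psi(x_*(\theta_k),x_k) + \delta_u^2\bigr),
\$
where the $\beta_k^2$ term collects the feedback-estimation error $\delta_u$ and the quadratic remainder; the condition $\beta \leq 1/(N H_u^2\|\lambda\|_2^2)$ is what makes the net coefficient $(1 - c_1\beta_k + c_2 H_u^2\beta_k^2)$ genuinely contractive. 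Then I would add the "drift" term $\overline{D}_\psi(x_*(\theta_{k-1}), x_k) \to \overline{D}_\psi(x_*(\theta_k), x_k)$, which costs $O(\|\theta_k - \theta_{k-1}\|_2^2) = O(\alpha_{k-1}^2 M^2)$ using the boundedness of $\nabla f_*$ and the Lipschitz dependence of $x_*(\cdot)$ on $\theta$ (from Lemma \ref{lemma:jacobian} and the bounds $\rho_\theta,\rho_x$ in Assumption \ref{assumption:game}). This yields a recursion $\epsilon_{k+1}^x \leq (1-c_1\beta_k)\epsilon_k^x + O(\beta_k^2) + O(\alpha_k^2)$.

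Next I would handle the upper-level step. Since $\theta_{k+1}$ is a projected/prox step with step size $\alpha_k$ and the estimator $\hat\nabla f_k$ is unbiased for $\tilde\nabla f(\theta_k, x_{k+1})$, expanding $\|\theta_{k+1}-\theta_*\|_2^2$ and using $\mu$-strong convexity of $f_*$ gives
\$
\epsilon_{k+1}^\theta \leq (1 - c_3\mu\alpha_k)\epsilon_k^\theta + c_4\alpha_k\,\EE\bigl\|\tilde\nabla f(\theta_k,x_{k+1}) - \nabla f_*(\theta_k)\bigr\|_2^2 + c_5\alpha_k^2(\delta_f^2 + M^2).
\$
The crucial bias term $\|\tilde\nabla f(\theta_k,x_{k+1}) - \nabla f_*(\theta_k)\|_2^2$ is exactly the error of the implicit-differentiation formula evaluated at $x_{k+1}$ instead of $x_*(\theta_k)$; by the Lipschitz property of $\tilde\nabla f$ with respect to $\overline D_\psi$ in Assumption \ref{assumption:f}, it is bounded by $\tilde H^2 \overline D_\psi(x_*(\theta_k), x_{k+1}) = \tilde H^2 \cdot (\text{the quantity inside }\epsilon_{k+1}^x)$. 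So the upper recursion becomes $\epsilon_{k+1}^\theta \leq (1-c_3\mu\alpha_k)\epsilon_k^\theta + c_4\tilde H^2\alpha_k\,\epsilon_{k+1}^x + O(\alpha_k^2)$, i.e. the upper error is driven by the lower error.

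Finally I would close the loop. Plugging $\alpha_k = \alpha/(k+1)$ and $\beta_k = \beta/(k+1)^{2/3}$, the lower recursion has contraction factor $1 - \Theta(k^{-2/3})$ and additive noise $O(k^{-4/3}) + O(k^{-2}) = O(k^{-4/3})$, which by a standard sequence lemma forces $\epsilon_k^x = O(k^{-2/3})$. Feeding this into the upper recursion, whose contraction factor is $1 - \Theta(k^{-1})$, the dominant forcing term is $\alpha_k \epsilon_{k+1}^x = O(k^{-1}\cdot k^{-2/3}) = O(k^{-5/3})$, and the same sequence lemma gives $\epsilon_k^\theta = O(k^{-2/3})$. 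The role of the constant constraint $\alpha/\beta^{3/2} \leq 1/(12 H_\psi \tilde H H_*)$ is to guarantee that when the two recursions are combined into a single Lyapunov function $\Phi_k = \epsilon_k^\theta + \gamma\,\epsilon_k^x$ (for a suitable weight $\gamma$), the cross terms do not overwhelm the contraction — it controls the relative speed of the two time scales. I would do the induction on $\Phi_k \leq C/(k+1)^{2/3}$, choosing $C$ large enough at the base case and verifying the inductive step using $(1+1/k)^{2/3} \leq 1 + \frac{2}{3k} + O(k^{-2})$.

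The main obstacle I anticipate is the time-scale bookkeeping in the coupled induction: one must choose the Lyapunov weight $\gamma$ and the constant $C$ so that the $\alpha_k\epsilon^x_k$ cross term in the $\theta$-recursion and the $\alpha_k^2$ drift term in the $x$-recursion are both absorbed, and this is precisely where the precise constants ($\beta \leq 1/(N H_u^2\|\lambda\|_2^2)$ for lower-level contractivity, $\alpha/\beta^{3/2}$ small for the coupling) must be used rather than merely asymptotic rates; getting the inequality $(1 - c_3\mu\alpha_k)\frac{C}{k^{2/3}} + (\text{forcing}) \leq \frac{C}{(k+1)^{2/3}}$ to hold requires the forcing to be $o(\alpha_k/k^{2/3})$ with a small enough implied constant, which is exactly what the step-size conditions deliver.
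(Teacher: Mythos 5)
Your overall architecture --- a contraction recursion for $\epsilon_k^x$ driven by variational stability, a strongly convex recursion for $\epsilon_k^\theta$ whose forcing term is the gradient bias bounded by $\tilde H^2\epsilon_{k+1}^x$ via the Lipschitzness of $\tilde\nabla f$, and then solving the coupled system under $\alpha_k\sim k^{-1}$, $\beta_k\sim k^{-2/3}$ --- is exactly the paper's. (The paper unrolls the two recursions sequentially and invokes a sequence lemma rather than running your Lyapunov induction, but that difference is cosmetic.)

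The one step that would not go through as written is your treatment of the drift term. You claim that replacing $\overline{D}_{\psi}(x_*(\theta_{k-1}),x_k)$ by $\overline{D}_{\psi}(x_*(\theta_k),x_k)$ costs an additive $O(\|\theta_k-\theta_{k-1}\|_2^2)=O(\alpha_{k-1}^2M^2)$. There are two problems. First, a Bregman divergence is not quadratic in its first argument: perturbing $x_*(\theta_{k-1})\to x_*(\theta_k)$ produces a cross term that is \emph{linear} in $\|x_*(\theta_k)-x_*(\theta_{k-1})\|$, and a naive bound therefore yields only $O(\alpha_{k-1})$, which would degrade the lower-level rate to $O(\alpha_k/\beta_k)=O(k^{-1/3})$. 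To recover a quadratic cost one must use an almost-triangle inequality of the form $D_\psi(x,z)\le(1+1/\gamma)\cdot D_\psi(y,z)+O(H_\psi^2\gamma)\cdot\|x-y\|^2$ (the paper's Lemma \ref{lemma:lipschitz-bregman}) with $\gamma\sim 1/\beta_k$ so that the multiplicative factor $(1+1/\gamma)$ can be absorbed into the contraction $(1-\beta_k/2)$; the additive cost is then $O(H_\psi^2H_*^2\alpha_{k-1}^2/\beta_k\cdot\|\hat{\nabla} f_{k-1}\|_2^2)$, i.e.\ amplified by $1/\beta_k$. Second, $\|\theta_k-\theta_{k-1}\|_2\le\alpha_{k-1}\|\hat{\nabla} f_{k-1}\|_2$, and $\EE[\|\hat{\nabla} f_{k-1}\|_2^2]$ is not bounded by $M^2$: it is bounded by $\delta_f^2+2M^2+2\tilde H^2\overline{D}_{\psi}(x_*(\theta_{k-1}),x_k)$, so the drift feeds $\epsilon_k^x$ back into its own contraction coefficient. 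It is precisely in absorbing the resulting $H_\psi^2\tilde H^2H_*^2\alpha_k^2/\beta_k$ perturbation of that coefficient that the hypothesis $\alpha/\beta^{3/2}\le 1/(12H_\psi\tilde H H_*)$ is consumed (the paper shows $16H_\psi^2\tilde H^2H_*^2\alpha_k^2/\beta_k\le\beta_k/8$), not in the Lyapunov coupling as you suggest. Once this is repaired the drift contributes $O(\beta_k^2)$ forcing, and your rate computation goes through unchanged.
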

\begin{proof}
See Appendix \ref{appendix:proof-main} for detailed proof and a detailed expression of convergence rates.
\end{proof}

\subsection{Simplex-Constrained Game}

\label{sec:convergence-sim}
In this part, we establish the convergence guarantee of Algorithm \ref{alg:kl} for simplex constrained games. We still define  optimality gap $\epsilon^\theta_k$ as
$
\epsilon^\theta_k = \EE\bigl[\|\theta_k - \theta_*\|_2^2\bigr].
$
Yet, corresponding to \eqref{eq:xmix}, we track $\tilde{\epsilon}_{k+1}^x$ as a measure of convergence for the strategies of the agents, which is defined as
\$
\tilde{\epsilon}_{k+1}^x = \EE\Bigl[\overline{D}_{\psi}\bigl(\tilde{x}_*(\theta_k), \tilde{x}_{k+1} \bigr)\Bigr],
\$
where 
$%
\tilde{x}_*(\theta_{k}) = (1 - \nu_k)\cdot {x}_*(\theta_k) + \nu_k\cdot {\bf 1}/d^i.
$
We are now ready to give the convergence guarantee of Algorithm \ref{alg:kl}.

\begin{theorem}\label{thm:main-kl}
For Algorithm \ref{alg:kl}, set the step sizes $\alpha_k = \alpha/(k+1)^{1/2}$, $\beta_k = \beta/(k+1)^{2/7}$, $\beta_k^i = \lambda^i \cdot \beta_k$, and $\nu_k = \nu/(k+1)^{4/7}$ with constants $\alpha>0$ and $\beta>0$ satisfying
\$
\beta \leq 1/6 \cdot NH_u^2\|\lambda\|_2^2, \quad \alpha/\beta^{3/2} \leq 1/7 \cdot \tilde{H}\tilde{H}_*,
\$
where $\tilde{H}_* = (1+d)\rho_\theta/\rho_x$. Suppose that Assumptions \ref{assumption:game}-\ref{assumption:estimate} hold. If there exists some constant $V_* > 0$ such that $\|v_\theta(x_*(\theta))\|_\infty \leq V_*$ for any $\theta \in \Theta$,  we then have
\$%
\epsilon_k^\theta  = {O}(k^{-2/7}), \quad \tilde{\epsilon}_k^x = {O}(k^{-2/7}).
\$
\end{theorem}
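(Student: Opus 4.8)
The plan is to mirror the structure of the proof of Theorem~\ref{thm:main}, but to carefully account for the mixing step~\eqref{eq:xmix} and the fact that the entropic Bregman divergence is not globally Lipschitz-smooth near the boundary of the simplex. As in the unconstrained case, I would run a coupled recursion on the pair $(\epsilon^\theta_k, \tilde\epsilon^x_k)$. The first ingredient is a one-step descent inequality for the upper level: using the $\mu$-strong convexity of $f_*$ from Assumption~\ref{assumption:f}, the projection-type update for $\theta_{k+1}$, and unbiasedness plus the $\delta_f^2$ variance bound from Assumption~\ref{assumption:estimate}, I would obtain something of the form
\#
\epsilon^\theta_{k+1} \leq (1 - c_1 \alpha_k \mu)\,\epsilon^\theta_k + c_2 \alpha_k^2 + c_3 \alpha_k \cdot \bigl(\text{bias from using } \tilde x_{k+1} \text{ in place of } x_*(\theta_k)\bigr),
\notag
\#
where the bias term is controlled by the Lipschitz continuity of $\tilde\nabla f$ with respect to $\overline D_\psi$ (Assumption~\ref{assumption:f}, last bullet) and hence by $\tilde\epsilon^x_{k+1}$ plus an $O(\nu_k)$ term measuring the gap between $\tilde x_*(\theta_k)$ and $x_*(\theta_k)$.

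The second, and harder, ingredient is the lower-level recursion for $\tilde\epsilon^x_k$. Here I would use the standard mirror-descent three-point identity together with the variational stability assumption (Assumption~\ref{assumption:game}, last bullet) to get a contraction of $\overline D_\psi(\tilde x_*(\theta_k), \tilde x_{k+1})$ at rate roughly $1 - c\beta_k$, paying the usual $O(\beta_k^2 H_u^2 \|\lambda\|_2^2)$ step-size-squared penalty and an $O(\delta_u^2 \beta_k^2)$ noise penalty. Two extra error sources must be tracked: (i) the drift of the moving target, $\overline D_\psi(\tilde x_*(\theta_k), \tilde x_{k+1})$ versus $\overline D_\psi(\tilde x_*(\theta_{k-1}), \tilde x_k)$, which is bounded via $\|\theta_{k+1}-\theta_k\|_2 = O(\alpha_k)$ and the sensitivity formula of Lemma~\ref{lemma:jacobian-full} (this uses the bound $\tilde H_* = (1+d)H_\rho/H_x$ and is where the estimate $\hat J_\theta$ of $J_\theta$ enters); and (ii) the mixing-induced perturbation, which contributes terms of order $\nu_k - \nu_{k+1}$ and, crucially, terms where $1/\beta_k$ multiplies a Bregman divergence that could blow up like $\log(1/\nu_k)$ if $\tilde x_k$ were allowed near the boundary — this is exactly the reason $\nu_k = (k+1)^{-4/7}$ is chosen, so that $\nu_k$ decays slowly enough relative to $\beta_k = \beta(k+1)^{-2/7}$ to keep $\beta_k^{-1}\log(1/\nu_k)$ and $\beta_k^{-1}(\nu_k-\nu_{k+1})$ summable against the contraction. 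The hypothesis $\|v_\theta(x_*(\theta))\|_\infty \le V_*$ is what bounds the feedback magnitude on the mixed iterates so these boundary terms stay finite.

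With both recursions in hand, I would combine them into a single Lyapunov-type inequality, e.g. for $\Phi_k = \epsilon^\theta_k + \eta\, \tilde\epsilon^x_k$ with a suitably small constant $\eta$ that makes the cross terms (the $c_3\alpha_k \tilde\epsilon^x_{k+1}$ in the $\theta$-recursion and the $O(\alpha_k^2/\beta_k)$ drift term in the $x$-recursion) absorbable — this is precisely where the constraint $\alpha/\beta^{3/2} \le \frac{1}{7}\tilde H\tilde H_*$ is used, analogously to the $\alpha/\beta^{3/2}$ condition in Theorem~\ref{thm:main}. The resulting scalar recursion has the form $\Phi_{k+1} \le (1 - a_k)\Phi_k + b_k$ with $a_k \asymp (k+1)^{-1/2}$ (from $\alpha_k$) and $b_k \asymp (k+1)^{-2/7-\text{something}}$ coming from the dominant among $\beta_k^2$, $\alpha_k^2$, $\nu_k$, and the boundary terms; balancing these exponents is what yields the $O(k^{-2/7})$ rate. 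I would close by invoking a standard lemma on such recursions (the same one used for Theorem~\ref{thm:main}) to conclude $\epsilon^\theta_k = O(k^{-2/7})$ and $\tilde\epsilon^x_k = O(k^{-2/7})$.

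I expect the main obstacle to be step~(ii) above: getting the mixing-perturbation and moving-target terms in the lower-level recursion to be summable simultaneously against the $1-c\beta_k$ contraction, while the entropic Bregman divergence is only locally smooth. This is the step that forces the specific, somewhat delicate, choice of the four exponents $(1/2, 2/7, 2/7, 4/7)$ and the $V_*$ boundedness hypothesis, and verifying that these choices are mutually consistent — in particular that $\nu_k$ decays fast enough for the $\theta$-bias but slowly enough for the $x$-recursion — is the technical heart of the argument.
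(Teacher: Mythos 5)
Your outline follows essentially the same route as the paper: a mirror-descent three-point identity plus variational stability for the lower level, the mixing step to keep the entropic Bregman divergence smooth with constant $1/\nu_k$ (the paper's Lemma \ref{lemma:lipschitz-kl}), the sensitivity bound $\tilde H_*$ for the moving-target drift absorbed via $\alpha/\beta^{3/2}\lesssim 1$, Lemma \ref{lemma:kl-tilde}-type bounds of order $\nu_k$ and $\nu_k\log(1/\nu_k)$ for the gap between $\tilde x_*(\theta_k)$ and $x_*(\theta_k)$, and the $V_*$ hypothesis to bound the second moment of the feedback (needed here, unlike the unconstrained case, because on the simplex $v_\theta(x_*(\theta))\neq 0$). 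All of this matches Lemmas \ref{lemma:epsilonx-kl} and \ref{lemma:epsilontheta-kl}.

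The one step that would fail as written is the final combination. You propose a single Lyapunov function $\Phi_k=\epsilon^\theta_k+\eta\,\tilde\epsilon^x_k$ with a \emph{constant} weight $\eta$ and a scalar recursion $\Phi_{k+1}\le(1-a_k)\Phi_k+b_k$ with $a_k\asymp(k+1)^{-1/2}$. But the two blocks contract at different rates ($\mu\alpha_k\asymp k^{-1/2}$ versus $\beta_k/8\asymp k^{-2/7}$), so a constant-weight combination can only contract at the slower rate $a_k\asymp\mu\alpha_k$, while $b_k$ necessarily contains the lower-level noise term $\eta\,\delta_u^2\|\lambda\|_2^2\beta_k^2\asymp k^{-4/7}$. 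The recursion then yields only $\Phi_k=O(b_k/a_k)=O(k^{-4/7+1/2})=O(k^{-1/14})$, far short of the claimed $O(k^{-2/7})$; your own requirement $b_k\asymp k^{-2/7-1/2}$ is not met. The paper avoids this by decoupling: it first closes the $x$-recursion alone (whose drift depends on $\theta$ only through $\|\theta_k-\theta_{k-1}\|_2\le\alpha_{k-1}\|\hat\nabla f_{k-1}\|_2$, not through $\epsilon^\theta_k$, so it is self-contained) to obtain $\tilde\epsilon^x_k=O(\beta_k)$, and only then substitutes this into the $\theta$-recursion, where the cross term $\tilde H^2(\alpha_k/\mu)\tilde\epsilon^x_{k+1}\asymp\alpha_k\beta_k$ against the contraction $\mu\alpha_k$ gives $O(\beta_k)=O(k^{-2/7})$. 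A time-varying weight $\eta_k$ could also be made to work, but the constant-$\eta$ Lyapunov argument as stated cannot. A second, minor imprecision: the step-size condition $\alpha/\beta^{3/2}\le\frac{1}{7}\tilde H\tilde H_*$ is used to absorb a drift term of order $\alpha_{k-1}^2/(\beta_k\nu_k^2)$, not $\alpha_k^2/\beta_k$; the extra $\nu_k^{-2}$ from the entropy's smoothness constant is exactly why $\nu_k$ cannot decay too fast, and is part of what pins down the exponent $4/7$.
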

\begin{proof}
See Appendix \ref{appendix:proof-kl} for detailed proof and a detailed form of the convergence rates.
\end{proof}

\section{Extensions to Games with Multiple Equilibria}
\label{sec:extension}

We then briefly discuss how to apply our algorithms when the lower-level game has multiple equilibria.

\textbf{Case I}: If the function $v_{\theta}(x) = (v_{\theta}^i(x))_i$ is strongly monotone in the neighbourhood of each equilibrium, then all equilibria are strongly stable in a neighbourhood and hence isolated \citep{nagurney2013network}. In this case, our algorithms can be directly applied as $\nabla_x v_{\theta}(x)$ is non-singular in these neighborhoods. It is commonly believed that the most likely equilibrium is the one reached by the game dynamics \citep{weibull1997evolutionary}; our algorithm naturally 
converges to this one. 

\textbf{Case II}: If the function $v_{\theta}(x) = (v_{\theta}^i(x))_i$ is monotone but \textit{not} strongly monotone, then the equilibrium set is a convex and closed region \citep{mertikopoulos2019learning}. This case is challenging, as the matrix $\nabla_x v_{\theta}(x)$ needed to be inverted would become singular. Nevertheless, we can simply assume the agents are \textit{bounded rational} \citep{prashker2004route, ahmed2019understanding, melo2021uniqueness}. The bounded rationality would result in a quantal response equilibrium for predicting the agents' response. We refer the readers to Appendix \ref{app:extension} for a detailed explanation (with numerical examples for illustration). Here we briefly sum up the key takeaways: (1) it is equivalent to add a regularizer $\eta^i \cdot (\log(x^i + \epsilon) + 1)$ to $v_{\theta}^i(x)$ for some $\eta^i > 0$ and $\epsilon \geq 0$; (2) as long as $\theta > 0$, the strong stability condition in Assumption \ref{assumption:game} would then be satisfied, hence a unique equilibrium would exist; (3) as long as $\epsilon > 0$, the Lipschitz continuous condition in Assumption \ref{assumption:game} will also not be violated. In a nutshell, the bounded rationality assumption can simultaneously make our model more realistic and satisfy the assumptions in Section \ref{sec:convergence}.

\section{Numerical Experiments}
\label{sec:experiments}

In this section, we conduct two numerical experiments to test our algorithms. All numerical results reported in this section were produced either on a MacBook Pro (15-inch, 2017) with 2.9 GHz Quad-Core Intel Core i7 CPU.

\subsection{Pollution Control via Emission Tax} We first consider the oligopoly model introduced in Example \ref{eg:cournot}. We assume that producing $a_i$ units of output, firm $i$ would generate $e_i = d^i a_i$ units of emissions. We consider the following social welfare function \citep{requate1993pollution}
$$
    W(a) = \int_{0}^{\sum_{i = 1}^n a^i} (p_0 - \gamma \cdot q) \dif q - \sum_{i = 1}^n c^i \cdot q^i - \tau \cdot \sum_{i = 1}^n d^i a^i,
$$
where the first term is the consumers' surplus, the second term is the total production cost, and the third term is the social damage caused by pollution. To maximize social welfare, an authority can impose emission taxes on the productions. Specifically, whenever producing $a_i$ units of output, firm $i$ could be charged $\pi^i \cdot a^i$, where $\pi^i$ is specialized for the firm $i$. In the experiment, we set $n = 100$, $p_0 = 100$, $\gamma = 1$, $\tau = 10$, and $d^i = 10 - \exp(c^i) + \epsilon^i$, where $\{c^i\}_{i = 1}^{100}$ are evenly spaced between 1 and 2 and $\{\epsilon^i\}_{i = 1}^{100}$ are the white noises. Under this setting, $c_i$ and $e_i$ are negatively correlated, which is realistic because if a firm hopes to reduce their pollution by updating their emission control systems, the production cost must be increased accordingly. 

Through this small numerical example, we hope to illustrate that the single-loop scheme developed in this paper is indeed much more efficient than previous double-loop algorithms. To this end, we compare our algorithm with two double-loop schemes proposed by \citet{li2020end}. In both approaches, the lower-level equilibrium problem is solved exactly first at each iteration. But afterwards, the upper-level gradients are respectively obtained via automatic differentiation (AD) and implicit differentiation (ID). To make a fair comparison, the \textit{same} hyperparameters --- including the initial solutions, the learning rates, and the tolerance values for both upper- and lower-level problems --- are employed for the tested algorithms (double-loop AD, double-loop ID, and our algorithm).

Table \ref{tab:1} reports statistics related to the computation performance, including the total CPU time, the total iteration number, and the CPU time per iteration. The results reveal that all tested algorithm take a similar number of iterations to reach the same level of precision.  However,  the running time per iteration required by our algorithm is significantly lower than the two double-loop approaches. Hence, in general, our scheme is more efficient. 

\begin{table}[htbp]
  \centering
  \footnotesize
  \caption{Performance of the tested algorithms for solving the emission tax design problem.}
  \vspace{-2pt}
    \begin{tabular}{cccc}
    \toprule
    Method & Double-loop AD & Double-loop ID & Our algorithm \\
    \midrule
    Total CPU time (s) & 71.56 & 11.29 & 2.09 \\
    Total iteration number & 519   & 626   & 813 \\
    Time per iteration (ms) & 137.89 & 18.04 & 2.57 \\
    \bottomrule
    \end{tabular}%
  \label{tab:1}%
\end{table}%

\subsection{Second-Best Congestion Pricing}

We then consider the routing game model introduced in Example \ref{eg:routing}. To minimize the total travel delay, an authority on behalf of the public sector could impose appropriate tolls on selected roads \citep{vickrey1969congestion}. This problem of determining tolls is commonly known as the congestion pricing problem. The second-best scheme assumes that only a subset of links can be charged \citep{verhoef2002second}. Specifically, we write $\pi \in \mathbb R_+^{|\mathcal{E}|}$ as the toll imposed on all the links and $\mathcal{E}_{\text{toll}}$ as the set of tollable links. We model total cost for a traveler selecting a path $a \in \mathcal{A}$ as
\begin{equation*}
    c^{ia}(\pi, q) = \sum_{e \in \mathcal{E}} \bigl( t^{e}(x^e) + \pi^e \bigr) \cdot \delta^{eia} + \eta \cdot \bigl(\log(q^{ia}) + 1\bigr),
\end{equation*}
where we add an extra term $(\log(q^{ia}) + 1)$ to characterize the uncertainties in travelers' route choices \citep{hazelton1998some, bar1999route}. It results in a quantal response equilibrium, as discussed in Section \ref{sec:extension}.
We test our algorithm on a real-world traffic network: the Sioux-Falls network (See \citet{lawphongpanich2004mpec} for its structure). We select 20 links (11, 35, 32, 68, 46, 21, 65, 52, 71, 74, 33, 64, 69, 14, 18, 39, 57, 48, 15, 51) for imposing congestion tolls.

We run Algorithm \ref{alg:kl} to solve the problem and compare 4 different settings on the selection of step sizes. \text{Setting A}: $\alpha_k = \alpha/(k+1)^{1/2}$, $\beta_k = \beta/(k+1)^{2/7}$, and $\nu_k = \nu/(k+1)^{4/7}$. It ensures convergence according to Theorem \ref{thm:main-kl}. \text{Setting B}: $\alpha_k = \alpha/(k+1)^{1/2}$, $\beta_k = \beta/(k+1)^1$, and $\nu_k = \nu/(k+1)^{4/7}$. It only increases the decreasing rate of the step size in the inner loop. \text{Setting C}: $\alpha_k = \alpha/(k+1)^1$, $\beta_k = \beta/(k+1)^1$, and $\nu_k = \nu/(k+1)^1$. It assumes that all step sizes decrease based on the classic $O(1/k)$ rate.  \text{Setting D}: $\alpha_k = \alpha/(k+1)^{1/2}$, $\beta_k = \beta/(k+1)^{2/7}$, and $\nu_k = 0$. It does not adopt the mixing step proposed in our paper. We add white noise to the costs received by the agents based on Gaussian distributions and run our algorithm under each setting with 10 times. The mean value of upper-level optimality gaps and lower-level equilibrium gaps are reported in Figure \ref{fig:experiment-2} (the shadowed areas are plotted based on ``mean $\pm$ std" area over all sampled trajectories).

\begin{figure}[ht]
\begin{center}
\centerline{\includegraphics[width=0.7\columnwidth]{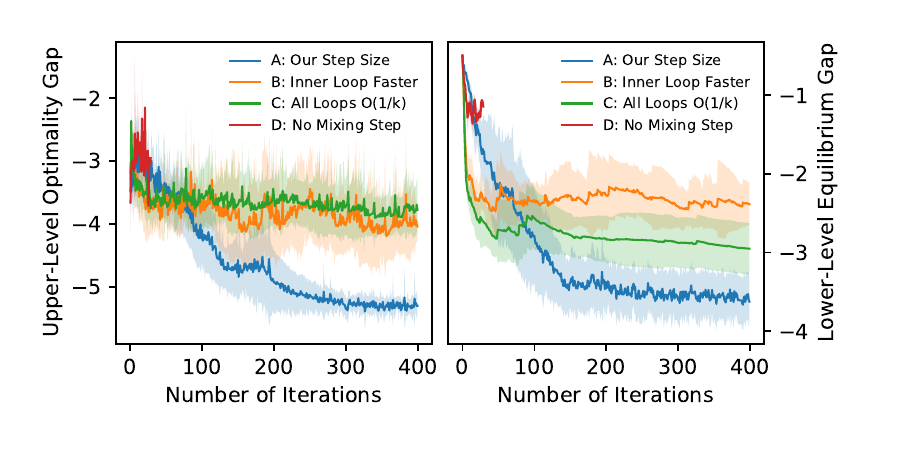}}
\caption{Comparison across different algorithms and step sizes when solving the second-best congestion pricing problem.}
\label{fig:experiment-2}
\end{center}
\vskip -0.2in
\end{figure}

Below we summarize a few observations. First, without the mixing step proposed in our paper, the algorithm does hit the boundary too early. The algorithm fails after just a few iterations. It verifies our earlier claim that directly extending previous methods \citep{hong2020two, chen2021single} designed for bilevel optimization problems to MPECs is problematic.  Second, the step size given by Theorem \ref{thm:main-kl} ensures the fastest and the most stable convergence.

\bibliographystyle{ims}
\begin{small}
\bibliography{graphbib}
\end{small}

\newpage
\begin{appendix}

\section{Notations}\label{appendix:notation}
We collect the most frequently used notations here.

\begin{center}
\begin{tabular}{ c|c|c } 
\hline
Category & Symbol & Definition \\
\hline
\multirow{6}{8em}{Game \\ (lower level)} & $\cN$ & set of all $N$ agents\\
                                & $x^i \in \cX^i \subseteq \RR^{d^i}$ & agent $i$'s strategy\\ 
                                & $u^i: \cX^i \to \RR$ & reward function of agent $i$ \\ 
                                & $v^i: \cX^i \to \RR^{d^i}$ & $\nabla_{x^i} u^i$, unit reward of agent $i$ \\
                                & $\lambda^i \in \RR_+$ & variational stability coefficient of agent $i$ \\
                                & $u_\theta^i, v_\theta^i$ & incentivized reward and unit reward \\
\hline
\multirow{4}{8em}{Incentive designer\\ (upper level)} & $\theta \in \Theta$ & incentive distributed to the agents \\ 
                                & $f: \cX \times \Theta \to \RR$ & objective of the incentive designer \\ 
                                & $x_* \in S(\theta)$ & equilibrium induced by incentive $\theta$ \\
                                & $f_*: \Theta \to \RR$ & designer's objective value under $x_* \in S(\theta)$ \\
\hline
\multirow{7}{8em}{Algorithm \\ ($k$-th iteration)} & $D_{\psi^i}$ & Bregman divergence in agent $i$'s strategy update \\ 
                                & $\beta_k^i$ & equilibrium learning rate for agent $i$ \\ 
                                & $\alpha_k$ &  incentive learning rate for the designer\\
                                & $\hat{v}_k^i$ & estimate of incentivized unit cost $v_{\theta_k}^i(x_k^i)$ \\ 
                                & $\tilde{\nabla}{f}$ & approximate incentive gradient \\
                                & $h_{\theta_k}^i$ & equilibrium learning dynamic \\
                                & $\tilde{x}^i_{k+1}$ & agent $i$'s strategy mixed with uniform strategy\\
\hline
\multirow{7}{8em}{Analysis \\ (assumptions)} & $H_u$ & Lipschitz continuity coefficient of $v_\theta^i$ \\ 
                                & $\rho_\theta$ & spectral upper bound of $\nabla_\theta v_\theta$ \\ 
                                & $\rho_x$ & spectral lower bound of $\nabla_x v_\theta$ \\
                                & $\mu$ & strong convexity coefficient of $f_*$ \\
                                & $M$ & $\ell_2$ upper bound of $\nabla f_*$ \\ 
                                & $\tilde{H}$ & Lipschitz continuity coefficient of $\tilde{\nabla} f$ \\ 
                                & $\delta_u$ & estimation MSE upper bound of $\hat{v}_k^i$ \\ 
        \hline
\end{tabular}
\end{center}

\section{Detailed Discussions on the Games}\label{appendix:game}

In this section, we present detailed discussions on the properties of the games. 

\subsection{Strong Stability}
\label{app:vs}

We establish the following sufficient condition for strong stability.
\begin{lemma}\label{lemma:sufficient-svs}
Define the matrix $H^\lambda(x)$ as a block matrix with $(i, j)$-th block taking the form of
\$%
H^\lambda_{i,j}(x) = \lambda^i\cdot\nabla_{x^j} v^i(x).
\$
Suppose that $\psi^i$ satisfies  the smooth condition \eqref{eq:smooth-psi}. If $H^\lambda(x) + H^\lambda(x)^\top \prec - 2\cdot H_{\psi} \cdot I_d$ for some $\lambda \in {\mathbb{R}}_+^{N}$, then the Nash equilibrium $x_*$ is $\lambda$-strongly stable with respect to the Bregman divergence $\overline{D}_{\psi}$.
\end{lemma}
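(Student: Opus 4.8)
The plan is to unfold the definition of variational stability and reduce it to a statement about the quadratic form of $H^\lambda(x)$. Recall that $x_*$ is $\lambda$-variationally stable with respect to $\overline{D}_\psi$ if $\sum_{i \in \cN} \lambda^i \cdot \la v^i_\theta(x), x^i_*(\theta) - x^i\ra \geq \overline{D}_\psi(x_*, x)$ for all $x \in \cX$. Since $x_*$ solves the VI \eqref{eq:game-eq}, we also have $\sum_{i} \lambda^i \la v^i(x_*), x^i_* - x^i\ra \leq 0$. Subtracting, it suffices to prove the stronger inequality $\sum_i \lambda^i \la v^i(x_*) - v^i(x), x^i_* - x^i\ra \geq \overline{D}_\psi(x_*, x)$, i.e., a one-sided monotonicity-type bound for the map $x \mapsto (\lambda^i v^i(x))_i$. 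First I would write $g(x) = (\lambda^i v^i(x))_{i \in \cN}$ so that $\nabla g(x) = H^\lambda(x)$, and express the left-hand side via the fundamental theorem of calculus along the segment $x_t = x_* + t(x - x_*)$, $t \in [0,1]$: namely $\la g(x_*) - g(x), x_* - x\ra = -\int_0^1 (x_* - x)^\top H^\lambda(x_t)(x_* - x)\, \dif t = \int_0^1 (x - x_*)^\top\bigl(-H^\lambda(x_t)\bigr)(x - x_*)\, \dif t$.

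Next I would symmetrize: $(x - x_*)^\top(-H^\lambda(x_t))(x - x_*) = \tfrac12 (x-x_*)^\top\bigl(-H^\lambda(x_t) - H^\lambda(x_t)^\top\bigr)(x-x_*)$. Invoking the hypothesis $H^\lambda(x_t) + H^\lambda(x_t)^\top \prec -2 H_\psi I_d$, this quadratic form is at least $H_\psi \|x - x_*\|_2^2$ for every $t$, so the integral is at least $H_\psi \|x - x_*\|_2^2$. It then remains to control $\overline{D}_\psi(x_*, x)$ from above by $H_\psi \|x - x_*\|_2^2$. For this I would use the smoothness assumption \eqref{eq:smooth-psi}: since each $\psi^i$ has $H_\psi$-Lipschitz gradient, the standard descent-lemma bound gives $D_{\psi^i}(x^i_*, x^i) = \psi^i(x^i_*) - \psi^i(x^i) - \la \nabla\psi^i(x^i), x^i_* - x^i\ra \leq \tfrac{H_\psi}{2}\|x^i_* - x^i\|_2^2$, and summing over $i \in \cN$ yields $\overline{D}_\psi(x_*, x) \leq \tfrac{H_\psi}{2}\|x_* - x\|_2^2 \leq H_\psi \|x_* - x\|_2^2$. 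Chaining the two bounds closes the argument.

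I would also note two technical points to be careful about. First, the segment $x_t$ stays in $\cX$ by convexity, so $H^\lambda(x_t)$ and the negative-definiteness hypothesis apply along the whole path; and the differentiability of $v^i$ (hence $C^1$ regularity of $g$) is what licenses the fundamental-theorem-of-calculus step — this is already assumed in the game setup. Second, there is a slight mismatch of constants (the factor $2$ versus $1$, and a spare factor of $2$ from the descent lemma): the hypothesis is strong enough that even the crude bound $\overline{D}_\psi(x_*,x) \le H_\psi\|x_*-x\|_2^2$ suffices, so no sharpening is needed. The main obstacle, such as it is, is bookkeeping: making sure the block structure of $H^\lambda$ matches the concatenated vector $x = (x^i)_{i \in \cN}$ correctly so that $(x - x_*)^\top H^\lambda(x)(x - x_*) = \sum_{i,j} \lambda^i (x^i - x^i_*)^\top \nabla_{x^j} v^i(x)(x^j - x^j_*)$ is exactly the derivative of $t \mapsto \la g(x_t), x - x_*\ra$, and that the $\lambda^i$ weights land in the right place. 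Everything else is a routine combination of the fundamental theorem of calculus, symmetrization of a quadratic form, and the descent lemma.
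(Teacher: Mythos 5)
Your proof follows essentially the same route as the paper's: write $g = (\lambda^i v^i)_{i}$, express $g(x) - g(x_*)$ as a line integral of $H^\lambda$ along the segment, symmetrize the quadratic form and invoke the definiteness hypothesis to get strong monotonicity, then bound $\overline{D}_\psi(x_*, x)$ by $H_\psi\|x_* - x\|_2^2$ using the smoothness of $\psi^i$. The only issue is sign bookkeeping: \eqref{eq:game-eq} gives $\sum_i \lambda^i \la v^i(x_*), x^i_* - x^i\ra \geq 0$ (not $\leq 0$), so the sufficient "monotonicity" inequality is $\sum_i \lambda^i \la v^i(x) - v^i(x_*), x^i_* - x^i\ra \geq \overline{D}_\psi(x_*, x)$ rather than the orientation you display — and this is in fact the quantity your fundamental-theorem computation ends up bounding, since your identity for $\la g(x_*) - g(x), x_* - x\ra$ carries a compensating spurious minus sign. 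With the signs straightened out the argument is exactly the paper's.
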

\begin{proof}
We define the $\lambda$-weighted gradient of the game as
\$
v^\lambda(x) = {\rm Diag}(\lambda^1 \cdot I_{d^1}, \dots, \lambda^N\cdot I_{d^N}) \cdot v(x).
\$
By definition, we can verify that $H^\lambda(x)$ is the Jacobian matrix of $v^\lambda(x)$ with respect to $x$. For any $x, x' \in \mathcal{X}$, let $x_\omega = \omega \cdot x + (1 - \omega) \cdot x'$. We have
\#\label{eq:sufficient1}
v^\lambda(x) - v^\lambda(x') & = \int_0^1\frac{\ud v^\lambda (x_\omega)}{\ud \omega}\cdot  \ud \omega\notag\\
&= \int_0^1 H^\lambda (x_\omega) \cdot \frac{\ud x_\omega}{\ud \omega}\cdot \ud \omega = \int_0^1 H^\lambda (x_\omega) (x - x')\cdot \ud \omega.
\#
Taking additional dot product $\la \,\cdot\,, x - x' \ra$ on both sides of \eqref{eq:sufficient1}, we have
\begin{equation}
\begin{split}
\label{eq:proof-sensitivity0}
\bigl\la v^\lambda(x) - v^\lambda(x'), x - x' \bigr\ra  &= 1/2\cdot \int_0^1 (x - x')^\top \bigl(H^\lambda (x_\omega) + H^\lambda (x_\omega)^\top\bigr)(x - x')\cdot \ud \omega \\
 &\leq -\int_0^1 H_{\psi} \cdot \|x - x'\|_2^2 \cdot  \ud \omega = -H_{\psi}\cdot \|x - x'\|_2^2,
 \end{split}
\end{equation}
where the inequality follows from $H^\lambda (x_\omega) + H^\lambda (x_\omega)^\top \preceq -2\cdot I_d$. Setting $x' = x_*$, we further have
\begin{equation}
\bigl\la v^\lambda(x), x - x_* \bigr\ra  \leq \bigl\la v^\lambda(x_*), x - x_* \bigr\ra-H_{\psi} \cdot \|x - x_*\|_2^2 \leq -H_{\psi} \cdot \|x - x_*\|_2^2,
 \label{eq:A-A-1}
\end{equation}
where the second inequality follows from the fact that $x_*$ is a Nash equilibrium. Eventually, as $\psi^i$ satisfies \eqref{eq:smooth-psi}, we have
\begin{equation}
 \overline{D}_{\psi}(x_*, x) = \sum_{i \in \mathcal{N}}D_{\psi_i}(x_*^i, x^i) \leq H_{\psi}\cdot \sum_{i \in \mathcal{N}} \|x^{i} - x^i_*\|_2^2 = H_{\psi}\cdot \|x - x_*\|_2^2.
 \label{eq:A-A-2}
\end{equation}
Combining \eqref{eq:A-A-1} and \eqref{eq:A-A-2}, we then conclude the proof.
\end{proof}

\begin{lemma}\label{lemma:sufficient-svs-kl}
Define the matrix $\tilde{H}^\lambda(x)$ as
\$
\tilde{H}^\lambda_{i,j}(x) = \lambda^i\cdot\nabla_{x^j} \tilde{v}^i(x), \quad \text{where}~ \tilde{v}^i(x) = {v}^i(x) + 1/\lambda^i\cdot \log x^i.
\$
Suppose that $\psi^i$ is the negative entropy function.
If $\tilde{H}^\lambda(x) + \tilde{H}^\lambda(x)^\top \prec 0$ for some $\lambda \in {\mathbb{R}}_+^{N}$, then Nash equilibrium $x_*$ is $\lambda$-strongly stable with respect to the KL divergence.
\end{lemma}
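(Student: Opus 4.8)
The plan is to mimic the argument of Lemma~\ref{lemma:sufficient-svs}, but with the entropy-regularized gradient field $\tilde v$ in place of $v$, exploiting the fact that the extra term $1/\lambda^i \cdot \log x^i$ is exactly the gradient of the Shannon entropy $-\psi^i$ and hence generates the KL-divergence lower bound for free. First I would introduce the $\lambda$-weighted regularized field $\tilde v^\lambda(x) = {\rm Diag}(\lambda^1 I_{d^1},\dots,\lambda^N I_{d^N})\cdot \tilde v(x)$ and observe that, componentwise, $(\tilde v^\lambda)^i(x) = \lambda^i v^i(x) + \nabla_{x^i}\psi^i(x^i)$ up to sign conventions, so that $\tilde H^\lambda(x)$ is its Jacobian in $x$. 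Along the segment $x_\omega = \omega x + (1-\omega)x'$, the same fundamental-theorem-of-calculus computation as in \eqref{eq:sufficient1}--\eqref{eq:proof-sensitivity0} gives
\#
\bigl\la \tilde v^\lambda(x) - \tilde v^\lambda(x'), x - x'\bigr\ra = \tfrac12 \int_0^1 (x-x')^\top\bigl(\tilde H^\lambda(x_\omega) + \tilde H^\lambda(x_\omega)^\top\bigr)(x-x')\,\ud\omega \le 0,
\#
using the hypothesis $\tilde H^\lambda(x) + \tilde H^\lambda(x)^\top \prec 0$. Thus $\tilde v^\lambda$ is a monotone operator.

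Next I would set $x' = x_*$ and unpack the definition of $\tilde v^\lambda$. Monotonicity gives $\la \tilde v^\lambda(x), x - x_*\ra \le \la \tilde v^\lambda(x_*), x - x_*\ra$. On the left, $\la \tilde v^\lambda(x), x - x_*\ra = \sum_i \lambda^i \la v^i(x), x^i - x_*^i\ra + \sum_i \la \nabla\psi^i(x^i)\cdot(-1)\ldots\ra$; here I must be careful with the sign of $\psi^i$ versus the entropy, since $\psi^i$ is the negative entropy and the paper writes $\tilde v^i = v^i + 1/\lambda^i\log x^i$ with $\log x^i = -\nabla\psi^i(x^i)$ up to the constant-$1$ shift that drops out on the simplex tangent space (because $x^i - x_*^i$ is orthogonal to ${\bf 1}$). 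Carrying the $\psi$-terms to the right-hand side, the inner products involving $\nabla\psi^i(x^i)$ and $\nabla\psi^i(x_*^i)$ recombine into $-\overline D_\psi(x_*, x) - \overline D_\psi(x, x_*) \le -\overline D_\psi(x_*, x)$ by the three-point identity for Bregman divergences and nonnegativity of $\overline D_\psi$. The $v$-terms on the right vanish: $\sum_i\lambda^i\la v^i(x_*), x^i - x_*^i\ra \le 0$ since $x_*$ is a Nash/Wardrop equilibrium satisfying \eqref{eq:game-eq}. Collecting everything yields $\sum_i \lambda^i \la v^i(x), x^i - x_*^i\ra \le -\overline D_\psi(x_*, x)$, which after rearranging to $\sum_i\lambda^i\la v^i(x), x_*^i - x^i\ra \ge \overline D_\psi(x_*,x)$ is exactly $\lambda$-(strong) variational stability with respect to the KL divergence.

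The main obstacle I expect is bookkeeping around the Shannon entropy on the boundary of the simplex: $\nabla\psi^i(x^i) = \log x^i + {\bf 1}$ blows up as coordinates hit $0$, so the fundamental-theorem-of-calculus step and the identity $\la\nabla\psi^i(x^i) - \nabla\psi^i(x_*^i), x^i - x_*^i\ra = D_\psi(x_*^i,x^i) + D_\psi(x^i,x_*^i)$ must be justified for interior $x$ and then extended by continuity, or one must restrict attention to $x$ in the relative interior (which is where the algorithm's mixing step in \eqref{eq:xmix} keeps the iterates anyway). The rest is a routine adaptation of the previous lemma's proof; the only genuinely new ingredient is recognizing that the $\log x^i$ correction in $\tilde v$ is precisely what converts the squared-norm bound of Lemma~\ref{lemma:sufficient-svs} into the sharper KL bound, at the cost of needing the stronger negative-definiteness hypothesis $\tilde H^\lambda + (\tilde H^\lambda)^\top \prec 0$ (with no $-2H_\psi I$ slack) because the divergence control now comes entirely from the regularizer rather than from the curvature of $v$.
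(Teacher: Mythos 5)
Your proposal is correct and follows essentially the same route as the paper's proof: integrate the Jacobian $\tilde{H}^\lambda$ along the segment $x_\omega$ to get monotonicity of the regularized field $\tilde{v}^\lambda$, peel off the $\log$ terms via the identity $\la \log x^i - \log x^{i\prime}, x^i - x^{i\prime}\ra = {\rm KL}(x^i, x^{i\prime}) + {\rm KL}(x^{i\prime}, x^i)$, bound $\la v^\lambda(x_*), x - x_*\ra \le 0$ by the equilibrium VI, and discard the extra nonnegative KL term. Your added caveat about justifying the fundamental-theorem-of-calculus step near the boundary of the simplex is a point the paper silently glosses over, but it does not change the argument.
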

\begin{proof}
For any $x, x' \in \mathcal{X}$, let $x_\omega = \omega \cdot x + (1 - \omega) \cdot x'$. Similar to \eqref{eq:proof-sensitivity0}, we first obtain
\begin{equation}
\begin{split}
\label{eq:proof-sensitivity-kl}
\bigl\la \tilde{v}^\lambda(x) - \tilde{v}^\lambda(x'), x - x' \bigr\ra  &= 1/2\cdot \int_0^1 (x - x')^\top \bigl(\tilde{H}^\lambda (x_\omega) + \tilde{H}^\lambda (x_\omega)^\top\bigr)(x - x')\cdot \ud \omega \leq 0,
\end{split}
\end{equation}
Noted that
\$%
\la \log x^i -  \log x^{i\prime}, x^i - x^{i\prime} \ra & = \la \log x^i/x^{i\prime}, x^i \ra + \la \log x^{i\prime}/x^i, x^{i\prime} \ra\notag  = {\rm KL}(x^i, x^{i\prime}) + {\rm KL}(x^{i\prime}, x^i),
\$
by setting $x' = x_*$ in \eqref{eq:proof-sensitivity-kl}, we further have
\$
\bigl\la v^\lambda(x), x - x_* \bigr\ra  \leq \bigl\la v^\lambda(x_*), x - x_* \bigr\ra- \sum_{i = 1}^n {\rm KL}(x^i_*, x^i) -\sum_{i = 1}^n  {\rm KL}(x^i, x^i_*) \leq - \sum_{i = 1}^n {\rm KL}(x^i_*, x^i),
\$
where the second inequality follows from the fact that $x_*$ is a Nash equilibrium.
\end{proof}

\subsection{Sensitivity of Nash Equilibrium}\label{appendix:sensitivity}

\noindent{\bf Unconstrained vame.} We now provide the proof of Lemma \ref{lemma:jacobian}.
\begin{proof}[Proof of Lemma \ref{lemma:jacobian}]
Since $\cX^i = {\mathbb{R}}^{d^i}$ for all $i \in \cN$, by the definition of $x_*(\theta)$, we have $v_{\theta} (x_*(\theta)) = 0$ for all $\theta \in {\mathbb{R}}^d$. Then, differentiating this equality with respect to $\theta$ on both ends, for any $i \in \cN$, we have
\$
\nabla^2 _{x, x^i} u_{\theta}^i\bigl(x_*({\theta})\bigr) \nabla_{\theta} x_*({\theta}) + \nabla^2 _{\theta, x^i} u^i_{\theta} \bigl(x_*(\theta)\bigr) = 0.
\$
Thus, we have
\$
\nabla_{\theta} x_*({\theta})  = - \bigl[\nabla_{x} v_{\theta}(x)\bigr]^{-1} \nabla_{\theta} v_{\theta} (x) \big\vert_{x = x_*(\theta)}.
\$
\end{proof}

As a consequence of Lemma \ref{lemma:jacobian}, we have the following lemma addressing the sensitivity of Nash equilibrium with respect to the incentive parameter $\theta$.
\begin{lemma}\label{lemma:sensitivity}
Under Assumption \ref{assumption:game}, we have for $\{\tilde{x}_k\}_{k \geq 0}$ generated by Algorithm \ref{alg_BG},
\$
\bigl\|x_*(\theta_k) - x_*(\theta_{k-1})\bigr\|_2 \leq  {H}_*\cdot \|\theta_k - \theta_{k-1}\|_2,
\$
where ${H}_* = \rho_\theta/\rho_x$.
\end{lemma}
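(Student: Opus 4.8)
The plan is to bound the operator norm of the Jacobian $\nabla_\theta x_*(\theta)$ uniformly over $\Theta$ using Lemma \ref{lemma:jacobian}, and then to integrate this bound along the segment joining $\theta_{k-1}$ to $\theta_k$.

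First I would invoke Lemma \ref{lemma:jacobian}. In the setting of Algorithm \ref{alg_BG} we have $\cX^i = \RR^{d^i}$ for all $i \in \cN$, and Assumption \ref{assumption:game} guarantees $\|[\nabla_x v_\theta(x)]^{-1}\|_2 \leq 1/\rho_x < \infty$, so $\nabla_x v_\theta$ is non-singular everywhere, in particular at $x_*(\theta)$. Hence the hypotheses of Lemma \ref{lemma:jacobian} hold and
\[
\nabla_\theta x_*(\theta) = -\bigl[\nabla_x v_\theta\bigl(x_*(\theta)\bigr)\bigr]^{-1}\nabla_\theta v_\theta\bigl(x_*(\theta)\bigr).
\]
Applying submultiplicativity of the spectral norm together with the two bounds $\|\nabla_\theta v_\theta(x)\|_2 < \rho_\theta$ and $\|[\nabla_x v_\theta(x)]^{-1}\|_2 \leq 1/\rho_x$ from Assumption \ref{assumption:game} yields
\[
\bigl\|\nabla_\theta x_*(\theta)\bigr\|_2 \;\leq\; \frac{\rho_\theta}{\rho_x} \;=\; H_* \qquad \text{for all } \theta \in \Theta.
\]

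Next, since $\Theta$ is convex (Assumption \ref{assumption:f}), the segment $\theta_t := \theta_{k-1} + t(\theta_k - \theta_{k-1})$, $t \in [0,1]$, lies entirely in $\Theta$, and $t \mapsto x_*(\theta_t)$ is continuously differentiable (the differentiability of $x_*$ being exactly the byproduct of the implicit function theorem argument used to prove Lemma \ref{lemma:jacobian}). By the fundamental theorem of calculus,
\[
x_*(\theta_k) - x_*(\theta_{k-1}) = \int_0^1 \nabla_\theta x_*(\theta_t)\,(\theta_k - \theta_{k-1})\,\mathrm{d}t.
\]
Taking Euclidean norms, moving the norm inside the integral, and using the uniform bound $\|\nabla_\theta x_*(\theta_t)\|_2 \leq H_*$ gives $\|x_*(\theta_k) - x_*(\theta_{k-1})\|_2 \leq H_* \|\theta_k - \theta_{k-1}\|_2$, as claimed.

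The only point that needs mild care is justifying that $x_*$ is $C^1$ on all of $\Theta$ so that the integral representation is legitimate; this follows from the standing assumption that the equilibrium is unique together with the non-singularity of $\nabla_x v_\theta$ (the implicit function theorem applies at every $\theta$, exactly as in the proof of Lemma \ref{lemma:jacobian}), and if one wishes to sidestep differentiability on $\partial\Theta$ one can run the argument on the relative interior and extend by continuity of $x_*$. I do not expect a genuine obstacle here — all of the substance is in the uniform Jacobian bound, which is immediate from Lemma \ref{lemma:jacobian} and Assumption \ref{assumption:game}.
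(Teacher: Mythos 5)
Your proposal is correct and follows essentially the same route as the paper: both bound $\|\nabla_\theta x_*(\theta)\|_2$ by $\rho_\theta/\rho_x$ via Lemma \ref{lemma:jacobian} and the norm bounds in Assumption \ref{assumption:game}, then pass from the Jacobian bound to the Lipschitz estimate along the segment joining $\theta_{k-1}$ and $\theta_k$ (the paper states this via a mean-value point $\overline{\theta}$, you via the integral form of the mean value inequality, which is the more careful rendering of the same step). No gap.
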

\begin{proof}
We have for some $\overline{\theta} = \omega \cdot \theta_k + (1- \omega) \cdot \theta_{k-1}$ with $\omega \in [0, 1]$, 
\$
\bigl\|x_*(\theta_k) - x_*(\theta_{k-1})\bigr\|_2 & = \bigl\|\nabla_\theta x_*(\overline{\theta})\bigr\|_2\cdot \|\theta_k - \theta_{k-1}\|_2\\
& = \biggl\|\Bigl[\nabla_{x} v_{\theta}\bigl(x_*(\overline{\theta})\bigr)\Bigr]^{-1} \nabla_{\theta} v_{\theta} \bigl(x_*(\overline{\theta})\bigr)\biggr\|_2\cdot \|\theta_k - \theta_{k-1}\|_2\\
& \leq \biggl\|\Bigl[\nabla_{x} v_{\theta}\bigl(x_*(\overline{\theta})\bigr)\Bigr]^{-1}\biggr\|_2\cdot\Bigl\| \nabla_{\theta} v_{\theta} \bigl(x_*(\overline{\theta})\bigr)\Bigr\|_2\cdot \|\theta_k - \theta_{k-1}\|_2\\
& \leq \rho_x/\rho_\theta \cdot \|\theta_k - \theta_{k-1}\|_2,
\$
where the last inequality follows from Assumption \ref{assumption:game}.
\end{proof}

\noindent{\bf Simplex-Constrained Game.} We first provide the following lemma. 

\begin{lemma}[Theorem 1 in \citep{parise2017sensitivity}]\label{lemma:jacobian-full}
When $\cX^i = \Delta([d^i])$ and $\nabla_{x} v_{\theta}(x_*(\theta))$ is non-singular, the Jacobian of $x_*(\theta)$ takes the form
\$
\nabla_\theta x_*(\theta) = - J_\theta \cdot \nabla_\theta v_\theta(x)\big\vert_{x = x_*(\theta)},
\$
where
\$
J_\theta = L - LA^\top [ALA^\top]^{-1}AL, \quad L = \Bigl[\nabla_x v_\theta\bigl(x_*(\theta)\bigr)\Bigr]^{-1}.
\$
Here $A = [\mathrm{blkdiag}(B^1, \ldots, B^n); \mathrm{blkdiag}({\bf 1}_{d^1}, \ldots, {\bf 1}_{d^n})]$, where $B^i$ is obtained by deleting the rows $j$ in the identity matrix $I_{d^i}$ with $[x_*^i(\theta)]_j = 0$.
\end{lemma}

As a consequence of Lemma \ref{lemma:jacobian-full}, we have the following lemma as the simplex constrained version of Lemma \ref{lemma:sensitivity}.
\begin{lemma}%
Under Assumption \ref{assumption:game}, we have
\$
\bigl\|x_*(\theta_k) - x_*(\theta_{k-1})\bigr\|_1 \leq  \tilde{H}_*\cdot \|\theta_k - \theta_{k-1}\|_2,
\$
where $\tilde{H}_* = (1+d)\rho_\theta/\rho_x$.
\end{lemma}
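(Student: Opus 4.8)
The plan is to mimic the proof of Lemma~\ref{lemma:sensitivity}, replacing the closed-form unconstrained Jacobian by the constrained formula of Lemma~\ref{lemma:jacobian-full} and keeping track of the change from the Euclidean norm on $\cX$ to the $\ell_1$ norm. First I would fix $\theta_k,\theta_{k-1}\in\Theta$, set $\theta_\omega=\omega\,\theta_k+(1-\omega)\,\theta_{k-1}$ for $\omega\in[0,1]$, and apply the integral form of the mean value theorem along this segment,
\$
x_*(\theta_k)-x_*(\theta_{k-1})=\int_0^1\nabla_\theta x_*(\theta_\omega)\,(\theta_k-\theta_{k-1})\,\dif\omega ,
\$
which is legitimate because $x_*(\cdot)$ is piecewise $C^1$ along the segment: wherever the active set of the simplex constraints is locally constant, Lemma~\ref{lemma:jacobian-full} together with the implicit function theorem give differentiability, and the finitely many parameters at which the active set changes contribute nothing to the integral. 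Taking $\ell_1$ norms and pushing the norm inside the integral yields
\$
\bigl\|x_*(\theta_k)-x_*(\theta_{k-1})\bigr\|_1\le\sup_{\omega\in[0,1]}\bigl\|\nabla_\theta x_*(\theta_\omega)\,(\theta_k-\theta_{k-1})\bigr\|_1 ,
\$
so it remains to bound the $\ell_2\to\ell_1$ operator norm of $\nabla_\theta x_*(\theta)$ uniformly in $\theta$.

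For that I would substitute $\nabla_\theta x_*(\theta)=-J_\theta\,\nabla_\theta v_\theta(x_*(\theta))$ from Lemma~\ref{lemma:jacobian-full} and estimate the factors one by one. Assumption~\ref{assumption:game} directly gives $\|\nabla_\theta v_\theta(\cdot)\|_2\le\rho_\theta$ and $\|L\|_2=\bigl\|[\nabla_x v_\theta(x_*(\theta))]^{-1}\bigr\|_2\le1/\rho_x$, while $\|y\|_1\le\sqrt d\,\|y\|_2$ handles the conversion from $\ell_1$ to $\ell_2$; the only nontrivial quantity is an operator norm of $J_\theta=L-LA^\top[ALA^\top]^{-1}AL$. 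Writing $J_\theta=\Pi L$ with $\Pi=I_d-LA^\top[ALA^\top]^{-1}A$, one checks $\Pi^2=\Pi$ and $A\Pi=0$, i.e.\ $\Pi$ is the (generally oblique) projection onto $\ker A$ along $\mathrm{range}(LA^\top)$. Since an oblique projection need not have norm $\le1$, I would open up the structure of $A=[\text{blkdiag}(B^1,\dots,B^n);\text{blkdiag}({\bf 1}_{d^1},\dots,{\bf 1}_{d^n})]$ — whose rows are the coordinate-selector rows inherited from the $B^i$ together with the block all-ones vectors — to bound $\bigl\|LA^\top[ALA^\top]^{-1}A\bigr\|$ by a $d$-dependent constant; combining this with $\|\Pi\|\le1+\bigl\|LA^\top[ALA^\top]^{-1}A\bigr\|$, with $\|L\|_2\le1/\rho_x$, and with the $\sqrt d$ from the norm conversion gives $\sup_\omega\|\nabla_\theta x_*(\theta_\omega)\|_{\ell_2\to\ell_1}\le(1+d)\rho_\theta/\rho_x=\tilde H_*$. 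Feeding this into the two displays above closes the argument.

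The step I expect to be the main obstacle is exactly this operator-norm estimate for $J_\theta$. In the unconstrained case the Jacobian is merely $-[\nabla_x v_\theta]^{-1}\nabla_\theta v_\theta$, and the single bound $\|[\nabla_x v_\theta]^{-1}\|_2\le1/\rho_x$ closes Lemma~\ref{lemma:sensitivity} with a constant; here the restriction of the sensitivity to the active manifold of the simplex introduces the projection $\Pi$, which can inflate norms, and the honest, $\theta$-uniform accounting of that inflation over all possible active sets is what forces the dimension factor $(1+d)$ in $\tilde H_*$ — everything else is the same bookkeeping as in Lemma~\ref{lemma:sensitivity}.
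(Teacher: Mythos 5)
Your skeleton---mean value theorem along the segment, the constrained Jacobian formula from Lemma \ref{lemma:jacobian-full}, and factor-by-factor operator-norm bounds---matches the paper's, but the one step that actually produces the constant $(1+d)$ is precisely the step you leave as a plan, and the route you sketch for it is unlikely to work. You write $J_\theta = \Pi L$ with $\Pi = I_d - LA^\top[ALA^\top]^{-1}A$ and propose to bound $\bigl\|LA^\top[ALA^\top]^{-1}A\bigr\|$ by a purely $d$-dependent constant by inspecting the rows of $A$. But $LA^\top[ALA^\top]^{-1}A$ is an \emph{oblique} idempotent whose spectral norm is governed by the angle between $\ker A$ and ${\rm range}(LA^\top)$, hence by $L = [\nabla_x v_\theta(x_*(\theta))]^{-1}$ itself, not by the dimension: the rank-one idempotent $\bigl(\begin{smallmatrix}1 & t\\ 0 & 0\end{smallmatrix}\bigr)$ has trace $1$ but norm $\sqrt{1+t^2}$. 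So no amount of ``opening up the structure of $A$'' alone yields a bound independent of the conditioning of $L$, and the target constant cannot be recovered this way. On top of that, your bookkeeping inserts an extra $\sqrt d$ from $\|y\|_1\le\sqrt d\,\|y\|_2$, so even granting a bound $\|\Pi\|\le 1+c_d$ you would land on $\sqrt d\,(1+c_d)\,\rho_\theta/\rho_x$, which overshoots $\tilde H_* = (1+d)\rho_\theta/\rho_x$ unless $c_d\le (1+d)/\sqrt d - 1$---a bound you have not established.

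The paper sidesteps the oblique projection by keeping $L$ on both sides: it bounds $\|J_\theta\|_2 \le \|L\|_2 + \bigl\|LA^\top[ALA^\top]^{-1}AL\bigr\|_2 \le \|L\|_2 + \|L\|_2\cdot{\rm tr}\bigl(A^\top[ALA^\top]^{-1}AL\bigr)$ and then uses cyclicity of the trace, ${\rm tr}\bigl(A^\top[ALA^\top]^{-1}AL\bigr) = {\rm tr}\bigl([ALA^\top]^{-1}ALA^\top\bigr)$, which is the trace of an identity matrix of size equal to the number of rows of $A$, i.e.\ at most $d$ up to the paper's accounting. This gives $\|J_\theta\|_2\le(1+d)\|L\|_2\le(1+d)/\rho_x$ and the claim follows after multiplying by $\|\nabla_\theta v_\theta\|_2\le\rho_\theta$; note the paper works with $\|\cdot\|_2$ throughout and never pays the $\sqrt d$ conversion you introduce. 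To close your argument you need either this trace identity for the idempotent $A^\top[ALA^\top]^{-1}AL$ (whose trace equals its rank, the number of rows of $A$) or some other mechanism that ties the size of the projection term to $\|L\|_2$ rather than to an $L$-free constant.
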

\begin{proof}
Recall that $J_\theta$ is defined in Lemma \ref{lemma:jacobian-full} and that $\|\cdot\|_2$ is the spectral norm when operating on a matrix. We have
\$
\|J_\theta\|_2 & \leq \|L\|_2 + \bigl\|LA^\top [ALA^\top]^{-1}AL\bigr\|_2\\
& \leq \|L\|_2 + \|L\|_2\cdot{{\rm tr}\bigl(A^\top [ALA^\top]^{-1}AL\bigr)}\\
& \leq \|L\|_2 + {\|L\|_2\cdot {\rm tr}\bigl( [ALA^\top]^{-1}ALA^\top\bigr)}\\
& = (1+d) \cdot\|L\|_2 \leq (1+d)/\rho_x.
\$
Thus, by Lemma \ref{lemma:jacobian-full}, we have for some $\overline{\theta} = \omega\cdot \theta_k + (1 - \omega) \cdot \theta_{k-1}$ with $\omega \in [0, 1]$,
\$
\|x_*(\theta_k) - x_*(\theta_{k-1})\| & \leq \bigl\|\nabla_\theta x_*(\overline{\theta})\bigr\|_2\cdot\|\theta_k - \theta_{k-1}\|_2\\
& \leq \|J_\theta\|_2\cdot\bigl\|\nabla_\theta v_\theta(x)\bigr\|_2\cdot\|\theta_k - \theta_{k-1}\|_2\\
& \leq (1+d)\rho_\theta/\rho_x \cdot \|\theta_k - \theta_{k-1}\|_2,
\$
where the last inequality follows from Assumption \ref{assumption:game}.
\end{proof}

\section{Proof of Theorem \ref{thm:main}}\label{appendix:proof-main}

We first present the following two lemmas under the conditions presented in Theorem \ref{thm:main}.

\begin{lemma}\label{lemma:epsilonx}
For all $k \geq 0$, we have
\$
 \epsilon_{k+1}^x & \leq  \epsilon_0^x\cdot\prod^k_{j =0}(1 - \beta_j/8)  + \bigl[  M^2/4\tilde{H}^2 + \delta_u^2\|\lambda\|_2^2\bigr] \cdot \sum_{l = 0}^k \biggl[\beta^2_l\cdot\prod^k_{j =l+1}(1 - \beta_j/8)\biggr].
\$
\end{lemma}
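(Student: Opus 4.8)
The plan is to establish a one-step contraction recursion for the equilibrium gap $\epsilon^x_{k+1}$ of the form
\#\label{eq:plan-recursion}
\epsilon^x_{k+1} \leq (1 - \beta_k/8)\cdot \epsilon^x_k + C\cdot \beta_k^2,
\#
with $C = (\delta_f^2 + 2M^2)/(8\tilde H^2) + \delta_u^2\|\lambda\|_2^2$, and then unroll it. The unrolling is routine: iterating \eqref{eq:plan-recursion} gives exactly the claimed bound, with the product $\prod_{j=0}^k(1-\beta_j/8)$ multiplying $\epsilon^x_0$ and the sum $\sum_{l=0}^k \beta_l^2 \prod_{j=l+1}^k(1-\beta_j/8)$ multiplying $C$; so the real work is proving \eqref{eq:plan-recursion}.

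To prove \eqref{eq:plan-recursion}, I would start from the mirror-descent update \eqref{eq:xupdate} and apply the standard three-point (prox) inequality for Bregman divergences: for the update $x^i_{k+1} = \argmax_{x^i}\{\la \hat v^i_k, x^i\ra - (1/\beta^i_k) D_{\psi^i}(x^i_k, x^i)\}$ and any comparator $y^i$,
\$
D_{\psi^i}(y^i, x^i_{k+1}) \leq D_{\psi^i}(y^i, x^i_k) - D_{\psi^i}(x^i_{k+1}, x^i_k) + \beta^i_k\la \hat v^i_k, y^i - x^i_{k+1}\ra.
\$
Summing over $i$ with the weights and choosing $y = x_*(\theta_k)$, I would split the inner-product term as $\la \hat v^i_k, y^i - x^i_k\ra + \la \hat v^i_k, x^i_k - x^i_{k+1}\ra$. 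The first piece, after taking conditional expectation (using unbiasedness from Assumption \ref{assumption:estimate}), becomes $\la v^i_{\theta_k}(x_k), x^i_*(\theta_k) - x^i_k\ra$, which the variational stability part of Assumption \ref{assumption:game} bounds above by $-\overline D_\psi(x_*(\theta_k), x_k)$. The second piece is controlled by Young's inequality together with the Lipschitz bound on $v^i_\theta$ (again Assumption \ref{assumption:game}) and strong convexity of $\psi^i$, absorbing it against the $-D_{\psi^i}(x^i_{k+1}, x^i_k)$ term; this is where the constant restriction $\beta \leq 1/(N H_u^2\|\lambda\|_2^2)$ enters, ensuring the leftover negative term dominates. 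The stochastic error terms contribute $\delta_u^2\|\lambda\|_2^2\,\beta_k^2$ after using the bounded-MSE assumption.

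The subtle point — and what I expect to be the main obstacle — is handling the mismatch between comparators at consecutive iterations: the recursion naturally produces $\overline D_\psi(x_*(\theta_k), x_k)$ on the left at step $k+1$ but $\overline D_\psi(x_*(\theta_{k-1}), x_k)$ at step $k$, since $\theta$ also moves. I would bridge these using Lemma \ref{lemma:sensitivity}: $\|x_*(\theta_k) - x_*(\theta_{k-1})\|_2 \leq H_*\|\theta_k - \theta_{k-1}\|_2$, combined with smoothness of $\psi$ (inequality \eqref{eq:smooth-psi}) to convert the Bregman divergence perturbation into a quadratic in $\|\theta_k - \theta_{k-1}\|_2$. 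The $\theta$-increment in turn is bounded by $\alpha_k$ times the gradient estimate norm, whose expected square is at most $\delta_f^2 + 2M^2$ (splitting $\hat\nabla f_k$ into its mean $\tilde\nabla f(\theta_k, x_{k+1})$, which relates to $\nabla f_*(\theta_k)$ bounded by $M$, plus noise — though one must be careful that $\tilde\nabla f$ at $x_{k+1}$ rather than $x_*(\theta_k)$ introduces a further term controllable via $\tilde H$). This is precisely why the step-size ratio condition $\alpha/\beta^{3/2} \leq 1/(12 H_\psi \tilde H H_*)$ appears: it guarantees that the cross term $\alpha_k^2 \cdot (\text{stuff})$, after comparison with the $\beta_k$-scale contraction factor, can be folded into $C\beta_k^2$ while still leaving the net contraction rate $(1-\beta_k/8)$ rather than a weaker $(1-\beta_k/c)$ for larger $c$. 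Keeping track of which fraction of the available $\beta_k/2$-contraction is spent on (a) the variational-stability gain, (b) absorbing the mirror-descent drift, and (c) absorbing the $\theta$-drift, so that $1/8$ survives, is the bookkeeping crux; the rest is careful but routine application of Young's inequality and conditional expectations.
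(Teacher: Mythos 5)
Your proposal is correct and follows essentially the same route as the paper's proof: the three-point mirror-descent inequality, the split of the inner product with variational stability handling the first piece and Young's inequality plus the Lipschitz bound handling the second (which is exactly where $\beta \leq 1/(NH_u^2\|\lambda\|_2^2)$ enters), the comparator shift from $x_*(\theta_{k-1})$ to $x_*(\theta_k)$ via Lemma \ref{lemma:sensitivity} together with the smoothness of $\psi$ (the paper packages this as Lemma \ref{lemma:lipschitz-bregman} with $\gamma = (4-2\beta_k)/\beta_k$), the bound $\EE[\|\hat\nabla f_{k-1}\|_2^2] \leq \delta_f^2 + 2\tilde H^2\overline D_\psi + 2M^2$ including the $\tilde H$ correction you flag, and the step-size ratio condition to preserve the $(1-\beta_k/8)$ contraction before unrolling. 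The bookkeeping you identify as the crux is indeed where the paper spends its effort; no gaps.
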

\begin{proof}
See Appendix \ref{appendix:epsilonx} for detailed proof.
\end{proof}

\begin{lemma}\label{lemma:epsilontheta}
For all $k \geq 0$, we have
\$
\epsilon_{k+1}^\theta  & \leq \prod^k_{j =0}(1 - \mu\alpha_j) \cdot \epsilon_0^\theta + 2M^2\cdot \sum_{l = 0}^k\biggl[\alpha_l^2\cdot\prod^k_{j =l+1}(1 - \mu\alpha_j)\biggr]\\
&\qquad + \tilde{H}^2\cdot \sum_{l = 0}^k\bigg[(2\alpha_l^2 + \alpha_l/\mu)\cdot \epsilon_{l+1}^x\cdot\prod^k_{j =l+1}(1 - \mu\alpha_j) \biggr].\notag
\$
\end{lemma}
\begin{proof}
See Appendix \ref{appendix:epsilontheta} for detailed proof.
\end{proof}

Now we are ready to present the proof of Theorem \ref{thm:main}.
\begin{proof}[Proof of Theorem \ref{thm:main}]
For $\beta_k = \beta/ (k+1)^{2/3}$, by [\cite{hong2020two}, Lemma C.4], we have
\$
\prod^k_{j =1}(1 - \beta_j/8) \leq \sum_{l = 0}^k \biggl[\beta^2_l\cdot\prod^k_{j =l+1}(1 - \beta_j/8)\biggr] \leq 8\beta_k.
\$
Thus, by Lemma \ref{lemma:epsilonx}, we have
\#\label{eq:main-proof15}
 \epsilon_{k+1}^x & \leq  \epsilon_0^x\cdot\prod^k_{j =0}(1 - \beta_j/8)  + \bigl[  M^2/4\tilde{H}^2 + \delta_u^2\|\lambda\|_2^2\bigr] \cdot \sum_{l = 0}^k \biggl[\beta^2_l\cdot\prod^k_{j =l+1}(1 - \beta_j/4)\biggr]\notag\\
 & \leq  \bigl[ 8\epsilon_0^x + 2M^2/\tilde{H}^2 + 8\delta_u^2\|\lambda\|_2^2\bigr]\cdot \beta_k = O(k^{-2/3}).
\#
Similarly, we have for $\alpha_k = \alpha/(k+1)$,
\$
\prod^k_{j =0}(1 - \mu\alpha_j) \leq  \sum_{l = 0}^k\biggl[\alpha_l^2\cdot\prod^k_{j =l+1}(1 - \mu\alpha_j)\biggr] \leq \sum_{l = 0}^k \biggl[\alpha_l^{5/3}\cdot\prod^k_{j =l+1}(1 - \mu\alpha_j)\biggr]  \leq 2/\mu\cdot \alpha_k^{2/3}.
\$
Thus, combining \eqref{eq:main-proof15} with Lemma \ref{lemma:epsilontheta}, we obtain
\#\label{eq:main-proof16}
\epsilon_{k+1}^\theta  & \leq \prod^k_{j =0}(1 - \mu\alpha_j) \cdot \epsilon_0^\theta + 2M^2\cdot \sum_{l = 0}^k\biggl[\alpha_l^2\cdot\prod^k_{j =l+1}(1 - \mu\alpha_j)\biggr]\\
&\qquad + \tilde{H}^2\cdot \sum_{l = 0}^k\bigg[(2\alpha_l^2 + \alpha_l/\mu)\cdot \epsilon_{l+1}^x\cdot\prod^k_{j =l+1}(1 - \mu\alpha_j) \biggr]\notag\\
& \leq 2\epsilon_0^\theta/\mu\cdot \alpha_k^{2/3} + (2M^2 + 2\tilde{H}^2)\cdot \sum_{l = 0}^k\biggl[\alpha_l^2\cdot\prod^k_{j =l+1}(1 - \mu\alpha_j)\biggr]\notag\\
&\qquad +  \bigl[ 8\epsilon_0^x + 2M^2/\tilde{H}^2 + 8\delta_u^2\|\lambda\|_2^2\bigr]\cdot\tilde{H}^2/\mu\cdot \sum_{l = 0}^k\bigg[\alpha_l \beta_{l+1}\cdot\prod^k_{j =l+1}(1 - \mu\alpha_j) \biggr]\notag\\
& \leq 2\epsilon_0^\theta/\mu\cdot \alpha_k^{2/3} + 4\cdot(M^2 + \tilde{H}^2)/\mu\cdot \alpha_k^{2/3}\notag\\
& \qquad  +  \bigl[ 8\epsilon_0^x + 2M^2/\tilde{H}^2 + 8\delta_u^2\|\lambda\|_2^2\bigr]\cdot 2\beta\tilde{H}^2/\alpha^{2/3}\mu^2\cdot \alpha_{k}^{2/3} = O(k^{-2/3}).\notag
\#
Here the third inequality follows from $\beta_k = \beta/\alpha^{2/3}\cdot \alpha_k^{2/3}$. Therefore, \eqref{eq:main-proof15} and \eqref{eq:main-proof16} conclude the proof of Theorem \ref{thm:main}.
\end{proof}

\subsection{Proof of Lemma \ref{lemma:epsilonx}}\label{appendix:epsilonx}
\begin{proof}
Recall that 
\$
x^i_{k+1} &= \argmax_{x^i \in \cX^i}\bigl\{\la \hat{v}^i_k, x^i \ra  - 1/\beta^i_k\cdot D_{\psi^i}(x^i_k, x^i)\bigr\}.
\$
We have
\$%
D_{\psi^i}\bigl(x^i_*(\theta_k), x^i_{k+1}\bigr) & \leq  D_{\psi^i}\bigl(x^i_*(\theta_k), x^i_{k}\bigr) - \beta^i_k\cdot \bigl\la \hat{v}^i_k, x^i_*(\theta_k) - x^i_{k+1}\bigr\ra - D_{\psi^i}\bigl(x^i_k, x^i_{k+1}\bigr)\notag\\
& \leq D_{\psi^i}\bigl(x^i_*(\theta_k), x^i_{k}\bigr) - \beta^i_k\cdot \bigl\la \hat{v}^i_k, x^i_*(\theta_k) - x^i_{k}\bigr\ra\notag\\
&\qquad - \beta^i_k\cdot \bigl\la \hat{v}^i_k, x^i_k - x^i_{k+1}\bigr\ra - 1/2\cdot\|x^i_k -  x^i_{k+1}\|^2,
\$
where the second inequality follows from the fact that $D_{\psi^i}(x^i, x^{i\prime}) \geq 1/2\cdot\|x^i - x^{i\prime}\|^2$. Taking the conditional expectation given $\cF^x_k$, we obtain
\#\label{eq:proof2}
\EE\Bigl[D_{\psi^i}\bigl(x^i_*(\theta_k), x^i_{k+1}\bigr)\,\Big|\, \cF^x_k \Bigr] & \leq  D_{\psi^i}\bigl(x^i_*(\theta_k), x^i_{k}\bigr) - \beta^i_k\cdot \EE\Bigl[\bigl\la \hat{v}^i_k, x^i_*(\theta_k) - x^i_{k}\bigr\ra\,\Big|\, \cF^x_k \Bigr]\\
&\qquad + \EE\Bigl[ - \beta^i_k\cdot \bigl\la \hat{v}^i_k, x^i_k - x^i_{k+1}\bigr\ra - 1/2\cdot\bigl\|x^i_k -  x^i_{k+1}\bigr\|^2\,\Big|\, \cF^x_k \Bigr]\notag\\
& \leq  D_{\psi^i}\bigl(x^i_*(\theta_k), x^i_{k}\bigr) - \beta^i_k\cdot \bigl\la v^i_{\theta_k}(x_k), x^i_*(\theta_k) - x^i_{k}\bigr\ra + (\beta^i_k)^2/2\cdot\EE\bigl[\|\hat{v}^i_k\|_*^2\,\big|\, \cF^x_k \bigr],\notag
\#
where the second inequality follows from Assumption \ref{assumption:estimate}. Summing up \eqref{eq:proof2} for all $i \in \cN$, we have
\#\label{eq:proof2+}
\EE\Bigl[\overline{D}_{\psi}\bigl(x_*(\theta_k), x_{k+1}\bigr)\,\Big|\, \cF^x_k \Bigr] & \leq  \overline{D}_{\psi}\bigl(x_*(\theta_k), x_{k}\bigr) - \beta_k\cdot\sum_{i \in \cN} \lambda^i\cdot\bigl\la v^i_{\theta_k}(x_k), x^i_*(\theta_k) - x^i_{k}\bigr\ra\notag\\
& \qquad + \beta_k^2/2\cdot\sum_{i \in \cN}(\lambda^i)^2\cdot\EE\bigl[\|\hat{v}^i_k\|_*^2\,\big|\, \cF^x_k \bigr].
\#
By the $\lambda$-strong variational stability of $x_i(\theta_k)$, we have
\#\label{eq:proof3}
- \sum_{i \in \cN}\lambda^i\cdot\bigl\la v^i_{\theta_k}(x_k), x^i_*(\theta_k) - x^i_{k}\bigr\ra \leq  - \overline{D}_{\psi}\bigl(x_*(\theta_k),x_{k}\bigr).
\#
Moreover, by Assumption \ref{assumption:estimate} we have
\$%
\EE\bigl[\|\hat{v}^i_k\|_*^2\,\big|\, \cF^x_k \bigr] & \leq 2\cdot \EE\Bigl[\bigl\| \hat{v}^i_k - v^i_{\theta_k}(x_k)\bigr\|_*^2  +  \bigl\|v^i_{\theta_k}(x_k)\bigr\|_*^2\,\Big|\, \cF^x_k\Bigr] \leq 2\delta_u^2  + 2\bigl\|v^i_{\theta_k}(x_k)\bigr\|_*^2,
\$
summing up which gives
\#\label{eq:proof4}
\sum_{i \in \cN}(\lambda^i)^2\cdot\EE\bigl[\|\hat{v}^i_k\|_*^2\,\big|\, \cF^x_k \bigr] & \leq 2\delta_u^2\cdot \sum_{i \in \cN}(\lambda^i)^2  + 2\cdot\sum_{i \in \cN}(\lambda^i)^2\cdot\Bigl\|v^i_{\theta_k}(x_k) - v^i_{\theta_k}\bigl(x_*(\theta_k)\bigr)\Bigr\|_*^2\notag\\
& \leq 2\delta_u^2 \cdot\sum_{i \in \cN}(\lambda^i)^2 + 2NH_u^2\cdot\sum_{i \in \cN}(\lambda^i)^2\cdot \overline{D}_{\psi}\bigl(x_*(\theta_k), x_{k}\bigr)\notag\\
& = 2\delta_u^2 \|\lambda\|^2_2 + 2NH_u^2\|\lambda\|_2^2\cdot \overline{D}_{\psi}\bigl(x_*(\theta_k), x_{k}\bigr),
\#
where the first inequality follows from the optimality condition that $v_{\theta_k}(x_*(\theta_k)) = 0$, and the second inequality follows from Assumption \ref{assumption:game}. Thus, taking \eqref{eq:proof3} and \eqref{eq:proof4} into \eqref{eq:proof2+}, we obtain
\#\label{eq:proof5}
& \EE\Bigl[\overline{D}_{\psi}\bigl(x_*(\theta_k), x_{k+1}\bigr)\,\Big|\, \cF^x_k \Bigr]\\
 & \quad\leq  (1 - \beta_k )\cdot \overline{D}_{\psi}\bigl(x_*(\theta_k), x_{k}\bigr) + \delta_u^2\|\lambda\|_2^2\beta_k^2+ 2NH_u^2\|\lambda\|_2^2 \beta_k^2\cdot \overline{D}_{\psi}\bigl(x_*(\theta_k), x_{k}\bigr)^2\notag\\
 & \quad\leq \bigl(1 - \beta_k + NH_u^2\|\lambda\|_2^2\beta^2_k\bigr)\cdot \overline{D}_{\psi}\bigl(x_*(\theta_k), x_{k}\bigr) + \delta_u^2\|\lambda\|_2^2\beta^2_k\notag\\
& \quad\leq (1 - \beta_k/2)\cdot \overline{D}_{\psi}\bigl(x_*(\theta_k), x_{k}\bigr) + \delta_u^2\|\lambda\|_2^2\beta_k^2,\notag
\#
where the last inequality holds with $NH_u^2\|\lambda\|_2^2\beta^2_k \leq \beta_k$, which is satisfied by $\beta \leq 1/NH_u^2\|\lambda\|_2^2$.
By Lemma \ref{lemma:lipschitz-bregman}, we have for any $\gamma > \max\{1, H_\psi^2\}$,
\#\label{eq:proof6}
& \overline{D}_{\psi}\bigl(x_*(\theta_k), x_{k}\bigr) -  \biggl( 1+ \frac{1}{\gamma}\biggr)\cdot \overline{D}_{\psi}\bigl(x_*(\theta_{k - 1}), x_{k}\bigr)\\
 &\quad  \leq \frac{H_\psi^2\cdot (1+ \gamma)^2 -(1 + \gamma)}{2\gamma}\cdot \sum_{i \in \cN} \bigl\|x^i_*(\theta_k) - x^i_*(\theta_{k-1})\bigr\|^2\notag\\
  &\quad  \leq \frac{H_\psi^2\cdot (1+ \gamma)^2 -(1 + \gamma)}{2\gamma}\cdot \bigl\|x_*(\theta_k) - x_*(\theta_{k-1})\bigr\|^2\notag\\
 & \quad \leq \frac{H_\psi^2\cdot (1+ \gamma)^2 -(1 + \gamma)}{2\gamma}\cdot H_*^2\cdot \|\theta_k- \theta_{k-1}\|^2_2  \leq \frac{H_\psi^2\cdot (1+ \gamma)^2 -(1 + \gamma)}{2\gamma}\cdot H_*^2\alpha^2_{k-1} \cdot \|\tilde{\nabla} f_{k-1}\|^2_2,\notag
\#
where the third inequality follows from Lemma \ref{lemma:sensitivity}, and the last inequality follows from the fact that proximal mapping is non-expensive. Taking \eqref{eq:proof6} into \eqref{eq:proof5} and choosing $\gamma = (4 - 2\beta_k)/\beta_k$, we obtain
\#\label{eq:proof7}
\EE\Bigl[\overline{D}_{\psi}\bigl(x_*(\theta_k), x_{k+1}\bigr)\,\Big|\, \cF^x_k \Bigr] & \leq (1 - \beta_k/2)\cdot\biggl(1 + \frac{1}{\gamma}\biggr)\cdot \overline{D}_{\psi}\bigl(x_*(\theta_{k-1}), x_{k}\bigr) + \delta_u^2\|\lambda\|_2^2\beta_k^2\\
&\qquad + (1 - \beta_k/2)\cdot\frac{H_\psi^2\cdot (1+ \gamma)^2 -(1 + \gamma)}{2\gamma}\cdot H_*^2\alpha^2_{k-1} \cdot \EE\bigl[\|\tilde{\nabla} f_{k-1}\|^2_2\,\big|\, \cF^x_k \bigr],\notag\\
& \leq (1 - \beta_k/4)\cdot \overline{D}_{\psi}\bigl(x_*(\theta_{k-1}), x_{k}\bigr) + \delta_u^2\|\lambda\|_2^2\beta_k^2\notag\\
&\qquad + (1 - \beta_k/4)\cdot\frac{H_\psi^2\cdot (1+ \gamma) - 1}{2}\cdot H_*^2\alpha^2_{k-1} \cdot \EE\bigl[\|\tilde{\nabla} f_{k-1}\|^2_2\,\big|\, \cF^x_k \bigr].\notag
\#
By Assumptions \ref{assumption:f} and \ref{assumption:estimate}, we have
\#\label{eq:proof8}
& \EE\bigl[\|\tilde{\nabla}f_{k-1}\|^2_2\,\big|\, \cF_{k-1}^{\theta}\bigr]\\
  &\quad \leq 2 \EE\Bigl[\bigl\| \tilde{\nabla} f(\theta_{k-1}, x_k) - \nabla f_*(\theta_{k-1})\bigr\|_2^2 + \bigl\|\nabla f_*(\theta_{k-1})\bigr\|_2^2 \,\Big|\, \cF_{k-1}^\theta\Bigr]\notag\\
 &\quad \leq 2\tilde{H}^2\cdot \overline{D}_\psi\bigl(x_*(\theta_{k-1}), x_k\bigr)  + 2 M^2,\notag
\#
taking which into \eqref{eq:proof7} and taking expectation on both sides, we further obtain
\$%
\epsilon^x_{k+1} & \leq \Bigl\{1 - \beta_k/4  + \bigl[H_\psi^2\cdot (1+ \gamma) - 1\bigr]\cdot \tilde{H}^2H_*^2\alpha^2_{k-1}\Bigr\}\cdot \epsilon_k^x + \delta_u^2\|\lambda\|_2^2\beta_k^2\\
&\qquad + (1 - \beta_k/4)\cdot\frac{H_\psi^2\cdot (1+ \gamma) - 1}{2}\cdot H_*^2\alpha^2_{k-1} \cdot 2M^2.\notag
\$
By $\gamma = (4 - 2\beta_k)/\beta_k$, we have
\$
\bigl[H_\psi^2\cdot (1+ \gamma) - 1\bigr]\cdot \tilde{H}^2H_*^2\alpha^2_{k-1} & = \bigl[H_\psi^2\cdot (4/\beta_k - 1) - 1\bigr]\cdot \tilde{H}^2H_*^2\alpha^2_{k-1}\\
& \leq  16H_\psi^2\tilde{H}^2H_*^2\alpha^2_{k}/\beta_k \leq \beta^2_k/8 \leq \beta_k/8,
\$
where the first inequality follows from $\alpha_{k-1} \leq 2\alpha_k$, and the second inequality follows from $\alpha_k = \alpha/(k+1)$, $\beta_k = \beta/(k+1)^{2/3}$, and $\alpha/\beta^{3/2} \leq 1/12H_\psi\tilde{H}H_*$. Thus, we obtain
\#\label{eq:proof10}
\epsilon^x_{k+1}  \leq (1 - \beta_k/8)\cdot \epsilon_k^x + \bigl[ M^2/4\tilde{H}^2 + \delta_u^2\|\lambda\|_2^2\bigr] \cdot \beta_k^2.
\#
Recursively applying \eqref{eq:proof10}, we obtain
\$
 \epsilon_{k+1}^x & \leq  \epsilon_0^x\cdot\prod^k_{j =0}(1 - \beta_j/8)  + \bigl[  M^2/4\tilde{H}^2 + \delta_u^2\|\lambda\|_2^2\bigr] \cdot \sum_{l = 0}^k \biggl[\beta^2_l\cdot\prod^k_{j =l+1}(1 - \beta_j/4)\biggr].
\$
Thus, we conclude the proof Lemma \ref{lemma:epsilonx}.
\end{proof}

\subsection{Proof of Lemma \ref{lemma:epsilontheta}}\label{appendix:epsilontheta}
\begin{proof}
Since the projection $\argmax_{x\in \cX}$ is non-expansive, we have
\$
\|\theta_{k+1} - \theta_*\|^2 & \leq \|\theta_k + \alpha_k\cdot \hat{\nabla}f_{k} - \theta_*\|_2^2\\
& = \|\theta_{k} - \theta_*\|_2^2 + \alpha_k^2\cdot\|\hat{\nabla}f_{k}\|_2^2 + 2\alpha_k\cdot \la \hat{\nabla}f_{k}, \theta_k - \theta_*\ra.
\$
Taking the conditional expectation given $\cF_k^\theta$, we obtain
\#\label{eq:proof11}
\EE\bigl[\|\theta_{k+1} - \theta_*\|^2\,\big|\,\cF_k^\theta] & \leq \|\theta_{k} - \theta_*\|_2^2 + 2\alpha_k\cdot\bigl\la \nabla_\theta f_*(\theta_k),  \theta_k - \theta_*\bigr\ra+ \alpha_k^2\cdot\EE\bigl[\|\tilde{\nabla}f_{k}\|_2^2\,\big|\,\cF^\theta_k\bigr]\notag\\
&  \qquad +2\alpha_k \cdot \EE\Bigl[ \bigl\la \tilde{\nabla} f(\theta_k, x_{k+1}) - \nabla_\theta f_*(\theta_k),  \theta_k - \theta_*\bigr\ra\,\Big|\,\cF^\theta_k\Bigr]\notag\\
& \leq (1 - 2\mu\alpha_k)\cdot \|\theta_{k} - \theta_*\|_2^2 + \alpha_k^2\cdot\EE\bigl[\|\tilde{\nabla}f_{k}\|_2^2\,\big|\,\cF^\theta_k\bigr]\notag\\
&  \qquad + \alpha_k/\mu \cdot \EE\Bigl[ \bigl\| \tilde{\nabla} f(\theta_k, x_{k+1}) - \nabla_\theta f_*(\theta_k)\bigr\|_2\,\Big|\,\cF^\theta_k\Bigr] + \mu\alpha_k\cdot \|\theta_{k} - \theta_*\|_2^2\notag\\
& \leq (1 - \mu\alpha_k)\cdot \|\theta_{k} - \theta_*\|_2^2 + \tilde{H}^2\alpha_k/\mu \cdot \EE\Bigl[ \overline{D}_\psi\bigl( x_*(\theta_k), x_{k+1}\bigr)\,\Big|\,\cF^\theta_k\Bigr]\notag\\
&\qquad + \alpha_k^2\cdot\EE\bigl[\|\tilde{\nabla}f_{k}\|_2^2\,\big|\,\cF^\theta_k\bigr],
\#
where the second inequality follows from Assumption \ref{assumption:f} and the Cauchy-Schwartz inequality, and the last inequality follows from Assumption \ref{assumption:f}. Applying \eqref{eq:proof8} to \eqref{eq:proof11} and taking expectation on both sides, we get
\#\label{eq:proof12}
\epsilon_{k+1}^\theta  & \leq (1 - \mu\alpha_k) \cdot \epsilon_k^\theta + 2M^2\cdot \alpha_k^2 + \tilde{H}^2\cdot (2\alpha_k^2 + \alpha_k/\mu)\cdot \epsilon_{k+1}^x.
\#
Recursively applying \eqref{eq:proof12}, we obtain
\$%
\epsilon_{k+1}^\theta  & \leq \prod^k_{j =0}(1 - \mu\alpha_j) \cdot \epsilon_0^\theta + 2M^2\cdot \sum_{l = 0}^k\biggl[\alpha_l^2\cdot\prod^k_{j =l+1}(1 - \mu\alpha_j)\biggr]\\
&\qquad + \tilde{H}^2\cdot \sum_{l = 0}^k\bigg[(2\alpha_l^2 + \alpha_l/\mu)\cdot \epsilon_{l+1}^x\cdot\prod^k_{j =l+1}(1 - \mu\alpha_j) \biggr].\notag
\$
Thus, we conclude the proof of Lemma \ref{lemma:epsilontheta}.
\end{proof}

\section{Proof of Theorem \ref{thm:main-kl}}

\label{appendix:proof-kl}
We first present the following two lemmas under the conditions presented in Theorem \ref{thm:main-kl}

\begin{lemma}\label{lemma:epsilonx-kl}
For all $k \geq 0$, we have
\$
\tilde{\epsilon}_{k+1}^x  & \leq  \tilde{\epsilon}_0^x\cdot \prod^k_{j =0}(1 - \beta_j/8) + \bigl[(\delta_u^2 + 3V_*^2)\cdot\|\lambda\|_2^2 + (M^2 + 3N\tilde{H}^2)/4\tilde{H}^2\bigr]\cdot\biggl[\sum_{l = 0}^k\beta_l^2\cdot\prod^k_{j =l+1}(1 - \beta_j/8)\biggr]\\
 &\qquad + N\cdot\biggl[\sum_{l = 0}^k\bigl(\beta_l\nu_l\log(1/\nu_l) + 2\nu_{l+1} + 2\nu_l^2\bigr)\cdot\prod^k_{j =l+1}(1 - \beta_j/8)\biggr].
\$
\end{lemma}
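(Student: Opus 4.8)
The plan is to follow the template of the proof of Lemma~\ref{lemma:epsilonx}, but to carry out every estimate in the $\ell_1$/KL geometry and to use the mixing step \eqref{eq:xmix} exactly where the non-smoothness of the Shannon entropy on the boundary of the simplex would otherwise break the argument. First I would start from the mirror-descent update for $x^i_{k+1}$, which is anchored at the mixed iterate $\tilde x^i_k$, and apply the three-point identity for the Bregman divergence with reference point $x^i_*(\theta_k)$ to get
\$
D_{\psi^i}\bigl(x^i_*(\theta_k),x^i_{k+1}\bigr)\le D_{\psi^i}\bigl(x^i_*(\theta_k),\tilde x^i_k\bigr)-\beta^i_k\bigl\la\hat v^i_k,\,x^i_*(\theta_k)-x^i_{k+1}\bigr\ra-D_{\psi^i}\bigl(\tilde x^i_k,x^i_{k+1}\bigr).
\$
Splitting the inner product at $\tilde x^i_k$, taking the conditional expectation (unbiasedness, Assumption~\ref{assumption:estimate}), and bounding the remaining anchor terms with Young's inequality together with the Pinsker-type bound $2D_{\psi^i}(x,x')\ge\|x-x'\|_1^2$ leaves the quadratic error $(\beta^i_k)^2\EE[\|\hat v^i_k\|_\infty^2]$. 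Summing over $i\in\cN$ with the weights $\lambda^i$ (recall $\beta^i_k=\lambda^i\beta_k$), the surviving term $\sum_i\lambda^i\la v^i_{\theta_k}(\tilde x_k),x^i_*(\theta_k)-\tilde x^i_k\ra$ is handled by the variational-stability bullet of Assumption~\ref{assumption:game} (with respect to the KL divergence), turning it into $-\overline{D}_{\psi}(x_*(\theta_k),\tilde x_k)$ and giving a contraction factor $1-\beta_k+O(\beta_k^2)$, tightened to $1-\beta_k/2$ when $\beta\le1/(6NH_u^2\|\lambda\|_2^2)$.

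Two features distinguish the KL case. First, on the simplex the equilibrium condition does \emph{not} force $v_\theta(x_*(\theta))=0$, so I would bound $\|\hat v^i_k\|_\infty\le\|\hat v^i_k-v^i_{\theta_k}(\tilde x_k)\|_\infty+\|v^i_{\theta_k}(\tilde x_k)-v^i_{\theta_k}(x_*(\theta_k))\|_\infty+\|v^i_{\theta_k}(x_*(\theta_k))\|_\infty$ and estimate the three pieces by $\delta_u$ (Assumption~\ref{assumption:estimate}), $H_u\overline{D}_{\psi}(x_*(\theta_k),\tilde x_k)^{1/2}$ (Assumption~\ref{assumption:game}), and $V_*$ (the extra hypothesis $\|v_\theta(x_*(\theta))\|_\infty\le V_*$); squaring and summing produces the coefficient $(\delta_u^2+3V_*^2)\|\lambda\|_2^2$ on the $\beta_k^2$ term, while the Lipschitz piece, again a multiple of $\overline{D}_{\psi}(x_*(\theta_k),\tilde x_k)$, folds back into the contraction. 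Second, I must relate $\overline{D}_{\psi}(x_*(\theta_k),\tilde x_k)$ to the tracked quantity $\tilde\epsilon^x_k=\EE[\overline{D}_{\psi}(\tilde x_*(\theta_{k-1}),\tilde x_k)]$, and symmetrically pass from $\overline{D}_{\psi}(x_*(\theta_k),x_{k+1})$ to $\overline{D}_{\psi}(\tilde x_*(\theta_k),\tilde x_{k+1})$; the latter is cheap by joint convexity of KL, $\overline{D}_{\psi}(\tilde x_*(\theta_k),\tilde x_{k+1})\le(1-\nu_k)\overline{D}_{\psi}(x_*(\theta_k),x_{k+1})$, and it is the former that is genuinely delicate. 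Shifting the first argument of the divergence from $x_*(\theta_k)$ to $\tilde x_*(\theta_{k-1})$ involves an $\ell_1$-displacement of size at most $\tilde H_*\|\theta_k-\theta_{k-1}\|_2+2\nu_{k-1}$ (the $\theta$-part controlled by the $\ell_1$-sensitivity bound of Appendix~\ref{appendix:sensitivity}, the other part by the definition of the mixing). Unlike for the smooth potentials of Lemma~\ref{lemma:epsilonx}, $y\mapsto{\rm KL}(y\|x)$ is only Lipschitz with constant of order $\log(1/\min_j x_j)$ (plus an $\ell_1$-modulus-of-continuity term from the entropy), but because $\tilde x^i_k=(1-\nu_{k-1})x^i_k+\nu_{k-1}{\bf 1}_{d^i}/d^i$ obeys $[\tilde x^i_k]_j\ge\nu_{k-1}/d$, each such penalty is at most $\log(d/\nu_{k-1})$ times the $\ell_1$-displacement. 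The $\nu_{k-1}$-part of the displacement, carried through the recursion by the stability step's factor $\beta_k$, is what produces the $\beta_k\nu_k\log(1/\nu_k)$ residual (up to the $O(1)$ ratio of consecutive $\nu$'s); the $\theta$-part is split by Young's inequality against the gradient second moment $\EE\|\hat\nabla f_{k-1}\|_2^2\le\delta_f^2+2M^2+2\tilde H^2\overline{D}_{\psi}(\cdot)$ (the analogue of \eqref{eq:proof8}), which accounts for the remaining $(\delta_f^2+2M^2+6N\tilde H^2)/(8\tilde H^2)$ coefficient on $\beta_k^2$, and the joint-convexity passages leave the $\nu_{k+1}$ and $\nu_k^2$ terms.

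Putting these together with the step-size relations $\alpha_k=\alpha/(k+1)^{1/2}$, $\beta_k=\beta/(k+1)^{2/7}$, $\nu_k=(k+1)^{-4/7}$, and with $\alpha/\beta^{3/2}\le1/(7\tilde H\tilde H_*)$, the numerical checks that make every term land in the advertised bucket are $\nu_k=O(\beta_k)$ and $\alpha_{k-1}^2\log^2(1/\nu_{k-1})=O(\beta_k^2)$, after which the one-step recursion reads
\$
\tilde\epsilon^x_{k+1}&\le(1-\beta_k/8)\tilde\epsilon^x_k+\bigl[(\delta_u^2+3V_*^2)\|\lambda\|_2^2+(\delta_f^2+2M^2+6N\tilde H^2)/(8\tilde H^2)\bigr]\beta_k^2\\
&\qquad+N\bigl(\beta_k\nu_k\log(1/\nu_k)+2\nu_{k+1}+2\nu_k^2\bigr),
\$
and unrolling it gives the stated bound. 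The hard part is exactly this boundary control: one must pick the mixing schedule $\nu_k$ decaying slowly enough that the iterates stay off the boundary (so the $\log(1/\nu_k)$-inflated perturbation estimates are meaningful) yet fast enough, relative to $\alpha_k$ and $\beta_k$, that the $\log(1/\nu_k)$-inflated residuals do not swamp the $\beta_k$-contraction. This three-way calibration of $(\alpha_k,\beta_k,\nu_k)$ is the simplex-constrained counterpart of the single condition $\alpha/\beta^{3/2}\le\text{const}$ in Theorem~\ref{thm:main}, and verifying it for the exponents $1/2$, $2/7$, $4/7$ is the crux of the argument.
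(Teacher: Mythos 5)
Your overall architecture matches the paper's: a three-point identity for the KL mirror step anchored at the mixed iterate, variational stability to produce the $(1-\beta_k/2)$ contraction, the extra hypothesis $\|v_\theta(x_*(\theta))\|_\infty\le V_*$ to replace the vanishing-gradient identity of the unconstrained case (yielding the $(\delta_u^2+3V_*^2)\|\lambda\|_2^2$ coefficient), the $2N\nu_k\log(1/\nu_k)$ and $2N\nu_{k+1}$, $2N\nu_k^2$ corrections for passing between mixed and unmixed quantities, and a drift correction for the moving target $\tilde{x}_*(\theta_{k-1})\to \tilde{x}_*(\theta_k)$ before unrolling. Your use of joint convexity of the KL divergence for the second-argument mixing passage is, if anything, cleaner than the paper's explicit log-ratio computation in Lemma \ref{lemma:kl-tilde}.

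The genuine gap is in the drift-correction step. The paper's Lemma \ref{lemma:lipschitz-kl} exploits the \emph{smoothness} of $\nabla\psi$ on the truncated simplex (Lipschitz constant $1/\nu_k$, since the mixed iterates satisfy $[\tilde{x}^i_k]_j\gtrsim \nu_k$) to obtain a bound of the form $D(\tilde{x}_*(\theta_k),\tilde{x}_k)\le (1+1/\gamma)\,D(\tilde{x}_*(\theta_{k-1}),\tilde{x}_k)+C_{\gamma,\nu_k}\|\tilde{x}_*(\theta_k)-\tilde{x}_*(\theta_{k-1})\|_1^2$: the multiplicative inflation $(1+1/\gamma)$ is absorbed into the contraction by taking $\gamma\sim 1/\beta_k$, and the penalty is \emph{quadratic} in the displacement, hence of order $\alpha_{k-1}^2\|\hat\nabla f_{k-1}\|_2^2/(\beta_k\nu_k^2)$, which the step-size coupling forces below $\beta_k^2$. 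You instead invoke an additive Lipschitz bound for $y\mapsto \mathrm{KL}(y\,\|\,x)$ with constant $\log(d/\nu_{k-1})$, which produces a penalty that is only \emph{linear} in the $\ell_1$-displacement, hence linear in $\alpha_{k-1}$. A linear term $\EE[\log(d/\nu_{k-1})\cdot\tilde{H}_*\alpha_{k-1}\|\hat\nabla f_{k-1}\|_2]$ cannot be converted into an $O(\alpha_{k-1}^2\EE\|\hat\nabla f_{k-1}\|_2^2)$ contribution: Young's inequality leaves behind an irreducible additive remainder, and only the $\tilde{H}\sqrt{\overline{D}_\psi}$ portion of $\|\hat\nabla f_{k-1}\|_2$ can be absorbed into the $\beta_k$-contraction. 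What survives is a term of order $\alpha_{k-1}\log(1/\nu_{k-1})\cdot(M+\delta_f)$, so the check you actually need is $\alpha_{k-1}\log(1/\nu_{k-1})=O(\beta_k^2)$, not the condition $\alpha_{k-1}^2\log^2(1/\nu_{k-1})=O(\beta_k^2)$ you state (the latter is the verification appropriate to a quadratic penalty, which your route does not produce). With $\alpha_k\propto (k+1)^{-1/2}$ and $\beta_k\propto(k+1)^{-2/7}$ the needed first-order condition fails ($k^{-1/2}\log k\not=O(k^{-4/7})$), so your recursion would not close with the stated residual bucket. To repair it, either switch to the paper's smoothness-based inequality (accepting the $1/\nu_k^2$ constant in exchange for a squared displacement), or restrict to a faster-decaying $\alpha_k$ for which the first-order drift term is itself $O(\beta_k^2)$.
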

\begin{proof}
See Appendix \ref{appendix:epsilonx-kl} for detailed proof.
\end{proof}

\begin{lemma}\label{lemma:epsilontheta-kl}
For all $k \geq 0$, we have
\$
\epsilon_{k+1}^\theta  & \leq \prod^k_{j =0}(1 - \mu\alpha_j) \cdot \epsilon_0^\theta + (3M^2 + 6N\tilde{H}^2)\cdot \sum_{l = 0}^k\biggl[\alpha_l^2\cdot\prod^k_{j =l+1}(1 - \mu\alpha_j)\biggr]\\
&\qquad + \tilde{H}^2\cdot \sum_{l = 0}^k\bigg[(2\alpha_l^2 + \alpha_l/\mu)\cdot \epsilon_{l+1}^x\cdot\prod^k_{j =l+1}(1 - \mu\alpha_j) \biggr].\notag
\$
\end{lemma}
\begin{proof}
See Appendix \ref{appendix:epsilontheta-kl} for detailed proof.
\end{proof}

Now we are ready to prove Theorem \ref{thm:main-kl}.

\begin{proof}[Proof of Theorem \ref{thm:main-kl}]
For $\beta_k = \beta/(k+1)^{2/7}$, by [\cite{hong2020two}, Lemma C.4], we have
\$
\prod^k_{j =1}(1 - \beta_j/8) \leq \sum_{l = 0}^k \biggl[\beta^2_l\cdot\prod^k_{j =l+1}(1 - \beta_j/8)\biggr] \leq 8\beta_k.
\$
Thus, by Lemma \ref{lemma:epsilonx}, we have
\#\label{eq:main-proof-kl15}
 \tilde{\epsilon}_{k+1}^x  & \leq  \tilde{\epsilon}_0^x\cdot \prod^k_{j =0}(1 - \beta_j/8) + \bigl[(\delta_u^2 + 3V_*^2)\cdot\|\lambda\|_2^2 + (M^2 + 3N\tilde{H}^2)/4\tilde{H}^2\bigr]\cdot\biggl[\sum_{l = 0}^k\beta_l^2\cdot\prod^k_{j =l+1}(1 - \beta_j/8)\biggr]\notag\\
 &\qquad + N\cdot\biggl[\sum_{l = 0}^k\bigl(\beta_l\nu_l\log(1/\nu_l) + 2\nu_{l+1} + 2\nu_l^2\bigr)\cdot\prod^k_{j =l+1}(1 - \beta_j/8)\biggr]\notag\\
 & \leq  \bigl[8\tilde{\epsilon}_0^x + 8(\delta_u^2 + 3V_*^2)\cdot\|\lambda\|_2^2 + (2M^2 + 6N\tilde{H}^2)/\tilde{H}^2\bigr]\cdot\beta_k\notag\\
  &\qquad + N\cdot\biggl[\sum_{l = 0}^k\bigl(\beta_l\nu_l\log(1/\nu_l) + 2\nu_{l+1} + 2\nu_l^2\bigr)\cdot\prod^k_{j =l+1}(1 - \beta_j/8)\biggr].
\#
Since $\nu_l = 1/(l+1)^{4/7}$, we have
\$
\beta_l\nu_l\log(1/\nu_l) + 2\nu_{l+1} + 2\nu_l^2 \leq (4/\beta^2 + 4/7\beta) \cdot \beta^2_l,
\$
taking which into \eqref{eq:main-proof-kl15}, we obtain
\#\label{eq:main-proof-kl16}
 \tilde{\epsilon}_{k+1}^x & \leq  \bigl[8\tilde{\epsilon}_0^x + 8(\delta_u^2 + 3V_*^2)\cdot\|\lambda\|_2^2 + (2M^2 + 6N\tilde{H}^2)/\tilde{H}^2\bigr]\cdot\beta_k + 32N\cdot(1/\beta^2 + 1/7\beta) \cdot \beta_k\notag\\
 & = \tilde{c}\cdot \beta_k = O(k^{-2/7}),
\#
where
\$
\tilde{c} = 8\tilde{\epsilon}_0^x + 8(\delta_u^2 + 3V_*^2)\cdot\|\lambda\|_2^2 + (2M^2 + 6N\tilde{H}^2)/\tilde{H}^2 + 32N\cdot(1/\beta^2 + 4/7\beta).
\$
Similarly, we have for $\alpha_k = \alpha/(k+1)^{1/2}$,
\$
\prod^k_{j =0}(1 - \mu\alpha_j) \leq  \sum_{l = 0}^k\biggl[\alpha_l^2\cdot\prod^k_{j =l+1}(1 - \mu\alpha_j)\biggr] \leq \sum_{l = 0}^k \biggl[\alpha_l^{11/7}\cdot\prod^k_{j =l+1}(1 - \mu\alpha_j)\biggr]  \leq 2/\mu\cdot \alpha_k^{4/7}
\$
Thus, combining \eqref{eq:main-proof15} with Lemma \ref{lemma:epsilontheta}, we obtain
\#\label{eq:main-proof-kl17}
\epsilon_{k+1}^\theta  & \leq \prod^k_{j =0}(1 - \mu\alpha_j) \cdot \epsilon_0^\theta + (3M^2 + 6N\tilde{H}^2)\cdot \sum_{l = 0}^k\biggl[\alpha_l^2\cdot\prod^k_{j =l+1}(1 - \mu\alpha_j)\biggr]\\
&\qquad + \tilde{H}^2\cdot \sum_{l = 0}^k\bigg[(2\alpha_l^2 + \alpha_l/\mu)\cdot \epsilon_{l+1}^x\cdot\prod^k_{j =l+1}(1 - \mu\alpha_j) \biggr]\notag\\
& \leq 2\epsilon_0^\theta/\mu\cdot \alpha_k^{4/7} + (3M^2 + 6N\tilde{H}^2 + 2\tilde{H}^2)\cdot \sum_{l = 0}^k\biggl[\alpha_l^2\cdot\prod^k_{j =l+1}(1 - \mu\alpha_j)\biggr]\notag\\
&\qquad +  \tilde{c}\cdot\tilde{H}^2/\mu\cdot \sum_{l = 0}^k\bigg[\alpha_l \beta_{l+1}\cdot\prod^k_{j =l+1}(1 - \mu\alpha_j) \biggr]\notag\\
& \leq 2\epsilon_0^\theta/\mu\cdot \alpha_k^{4/7} + 2\cdot(3M^2 + 6N\tilde{H}^2 + 2\tilde{H}^2)/\mu\cdot \alpha_k^{4/7}  +  \tilde{c}\cdot 2\beta\tilde{H}^2/\alpha^{4/7}\mu^2\cdot \alpha_{k}^{4/7}\notag\\
& = O(k^{-2/7}).\notag
\#
Here the third inequality follows from $\beta_k = \beta/\alpha^{4/7}\cdot \alpha_k^{4/7}$. Therefore, \eqref{eq:main-proof-kl16} and \eqref{eq:main-proof-kl17} conclude the proof of Theorem \ref{thm:main-kl}.
\end{proof}

\subsection{Proof of Lemma \ref{lemma:epsilonx-kl}}\label{appendix:epsilonx-kl}

\begin{proof}
We first show that
\begin{equation}
\label{eq:seq}
D_{\psi^i}\bigl(\tilde{x}^i_*(\theta_k), {x}^i_{k+1}\bigr) = D_{\psi^i}\bigl(\tilde{x}^i_*(\theta_k), \tilde{x}^i_{k}\bigr)  - 1/\beta_k^i\cdot \bigl\la\hat{v}_k^i,  \tilde{x}^i_*(\theta_k) - x^i_{k+1} \bigr\ra - D_{\psi^i}(x^i_{k+1}, \tilde{x}_k^i).
\end{equation}
Since
\$
x^i_{k+1} &= \argmax_{x^i \in \cX^i}\bigl\{\la \hat{v}^i_k, x^i \ra  - 1/\beta^i_k\cdot D_{\psi^i}(x^i_k, x^i)\bigr\},
\$
we have the exact form of $x^i_{k+1}$ as
\#\label{eq:update-kl}
x^i_{k+1} \propto \tilde{x}_k^i \cdot \exp\{\beta_k^i \cdot \hat{v}_k^i\}.
\#
By the definition of KL divergence, we have
\#\label{eq:kl-ascent1}
&D_{\psi^i}\bigl(\tilde{x}^i_*(\theta_k), {x}^i_{k+1}\bigr) - D_{\psi^i}\bigl(\tilde{x}^i_*(\theta_k), \tilde{x}^i_{k}\bigr)\\
&\quad  = \bigl\la\log(\tilde{x}^i_k/{x}^i_{k+1}),  \tilde{x}^i_*(\theta_k) - x^i_{k+1} \bigr\ra - D_{\psi^i}(x^i_{k+1}, \tilde{x}_k^i)\notag\\
&\quad  = \bigl\la\log(\tilde{x}^i_k/{x}^i_{k+1}) + \beta_k^i\cdot \hat{v}_k^i,  \tilde{x}^i_*(\theta_k) - x^i_{k+1} \bigr\ra - \beta_k^i\cdot \bigl\la\hat{v}_k^i,  \tilde{x}^i_*(\theta_k) - x^i_{k+1} \bigr\ra - D_{\psi^i}(x^i_{k+1}, \tilde{x}_k^i).\notag
\#
Let ${Z}_k^i = \sum_{j \in [d^i]} [\tilde{x}_k^i]_j \cdot \exp\{1/\beta_k^i \cdot [\hat{v}_k^i]_j\}$ be the normalization factor of the exact form of $x_{k+1}^i$ in \eqref{eq:update-kl}. We have
\#\label{eq:kl-ascent2}
&\bigl\la\log(\tilde{x}^i_k/{x}^i_{k+1}) + \beta_k^i\cdot \hat{v}_k^i,  \tilde{x}^i_*(\theta_k) - x^i_{k+1} \bigr\ra\\
&\quad = \bigl\la\log \tilde{x}^i_k - \log {x}^i_{k+1} + \beta_k^i\cdot \hat{v}_k^i,  \tilde{x}^i_*(\theta_k) - x^i_{k+1} \bigr\ra\notag\\
&\quad = \bigl\la\log \tilde{x}^i_k - \log \tilde{x}^i_k - \beta_k^i \cdot \hat{v}^i_k + \log {Z}_k^i + \beta_k^i\cdot \hat{v}_k^i,  \tilde{x}^i_*(\theta_k) - x^i_{k+1} \bigr\ra\notag\\
&\quad = \log Z_{k+1}^i \cdot \bigl\la {\bf 1}_{d^i}, \tilde{x}^i_*(\theta_k) - x^i_{k+1} \bigr\ra = 0,\notag
\#
where the last equality follows from the fact that $\sum_{j \in [d^i]} [\tilde{x}^i_*(\theta_k)]_j = \sum_{j \in [d^i]} [{x}^i_{k+1}]_j = 1$. Taking \eqref{eq:kl-ascent2} into \eqref{eq:kl-ascent1}, we obtain
\$
D_{\psi^i}\bigl(\tilde{x}^i_*(\theta_k), {x}^i_{k+1}\bigr) = D_{\psi^i}\bigl(\tilde{x}^i_*(\theta_k), \tilde{x}^i_{k}\bigr)  - \beta_k^i\cdot \bigl\la\hat{v}_k^i,  \tilde{x}^i_*(\theta_k) - x^i_{k+1} \bigr\ra - D_{\psi^i}(x^i_{k+1}, \tilde{x}_k^i),
\$
which concludes the proof of \eqref{eq:seq}.

Continuing from \eqref{eq:seq}, we have
\$
D_{\psi^i}\bigl(\tilde{x}^i_*(\theta_k), {x}^i_{k+1}\bigr) & = D_{\psi^i}\bigl(\tilde{x}^i_*(\theta_k), \tilde{x}^i_{k}\bigr)  - \beta_k^i\cdot \bigl\la\hat{v}_k^i,  \tilde{x}^i_*(\theta_k) - x^i_{k+1} \bigr\ra - D_{\psi^i}(x^i_{k+1}, \tilde{x}_k^i)\\
& \leq D_{\psi^i}\bigl(\tilde{x}^i_*(\theta_k), \tilde{x}^i_{k}\bigr)  - \beta_k^i\cdot \bigl\la\hat{v}_k^i,  \tilde{x}^i_*(\theta_k) - \tilde{x}^i_{k} \bigr\ra\\
& \qquad  - \beta_k^i\cdot \bigl\la\hat{v}_k^i,  \tilde{x}^i_k - x^i_{k+1} \bigr\ra - 1/2\cdot \|x^i_{k+1} - \tilde{x}_k^i\|_1^2,
\$
where the second inequality follows from the fact that $D_{\psi^i}(x^i, x^{i\prime}) \geq 1/2\cdot\|x^i - x^{i\prime}\|_1^2$. Taking the conditional expectation given $\cF^x_k$, we obtain
\#\label{eq:proof-kl2}
\EE\Bigl[D_{\psi^i}\bigl(\tilde{x}^i_*(\theta_k), x^i_{k+1}\bigr)\,\Big|\, \cF^x_k \Bigr] &\leq  D_{\psi^i}\bigl(\tilde{x}^i_*(\theta_k), \tilde{x}^i_{k}\bigr) - \beta^i_k\cdot \EE\Bigl[\bigl\la \hat{v}^i_k, \tilde{x}^i_*(\theta_k) - \tilde{x}^i_{k}\bigr\ra\,\Big|\, \cF^x_k \Bigr]\\
& \qquad + \EE\Bigl[ - \beta^i_k\cdot \bigl\la \hat{v}^i_k, \tilde{x}^i_k - x^i_{k+1}\bigr\ra - 1/2\cdot\|\tilde{x}^i_k -  x^i_{k+1}\|_1^2\,\Big|\, \cF^x_k \Bigr]\notag\\
& \leq  D_{\psi^i}\bigl(\tilde{x}^i_*(\theta_k), \tilde{x}^i_{k}\bigr) -  \bigl\la \beta^i_k\cdot v^i_{\theta_k}(\tilde{x}_k), \tilde{x}^i_*(\theta_k) - {x}^i_*(\theta_k)\bigr\ra\notag\\
& \qquad - \beta^i_k\cdot \bigl\la v^i_{\theta_k}(\tilde{x}_k), {x}^i_*(\theta_k) - \tilde{x}^i_{k}\bigr\ra + (\beta^i_k)^2/2\cdot\EE\bigl[\|\hat{v}^i_k\|_\infty^2\,\big|\, \cF^x_k \bigr]\notag\\
& \leq  D_{\psi^i}\bigl(\tilde{x}^i_*(\theta_k), \tilde{x}^i_{k}\bigr)- \beta^i_k\cdot \bigl\la v^i_{\theta_k}(\tilde{x}_k), {x}^i_*(\theta_k) - \tilde{x}^i_{k}\bigr\ra\notag\\
&\qquad + (\beta^i_k)^2/2\cdot\EE\Bigl[\|\hat{v}^i_k\|_\infty^2 + \bigl\|{v}^i_{\theta_k}(\tilde{x}_k)\bigr\|_\infty^2\,\Big|\, \cF^x_k \Bigr] + 2\nu_k^2.\notag
\#
where the last inequality follows from Cauchy-Schwartz inequality and the fact that $\|\tilde{x}_*^i(\theta_k) - x_*^i(\theta_k)\|_1 = \nu_k \cdot \|{x}_*^i(\theta_k) - {\bf 1}_{d^i}/d^i\|_1  \leq 2\nu_k$. By the $\lambda$-strong variational stability of $x_*(\theta_k)$, we have
\#\label{eq:proof-kl3}
- \sum_{i \in \cN}\lambda^i\cdot\bigl\la v^i_{\theta_k}(\tilde{x}_k), x^i_*(\theta_k) - \tilde{x}^i_{k}\bigr\ra \leq  - \overline{D}_{\psi}\bigl({x}_*(\theta_k),\tilde{x}_{k}\bigr).
\#
Moreover, by Assumption \ref{assumption:estimate}, we have
\$
& \EE\Bigl[\|\hat{v}^i_k\|_\infty^2 + \bigl\|{v}^i_{\theta_k}(\tilde{x}_k)\bigr\|_\infty^2\,\Big|\, \cF^x_k \Bigr]\\
&\quad  \leq \EE\Bigl[2\bigl\| \hat{v}^i_k - v^i_{\theta_k}(\tilde{x}_k)\bigr\|_\infty^2  +  3\bigl\|v^i_{\theta_k}(\tilde{x}_k)\bigr\|_\infty^2\,\Big|\, \cF^x_k\Bigr] \leq 2\delta_u^2  + 3\bigl\|v^i_{\theta_k}(\tilde{x}_k)\bigr\|_\infty^2,\notag
\$
summing up which gives
\#\label{eq:proof-kl4}
& \sum_{i \in \cN}(\lambda^i)^2\cdot\EE\Bigl[\|\hat{v}^i_k\|_\infty^2 + \bigl\|{v}^i_{\theta_k}(\tilde{x}_k)\bigr\|_\infty^2\,\Big|\, \cF^x_k \Bigr]\\
 &\quad \leq 2\delta_u^2\cdot \sum_{i \in \cN}(\lambda^i)^2  + 6\cdot\sum_{i \in \cN}(\lambda^i)^2\cdot\biggl(\Bigl\|v^i_{\theta_k}(\tilde{x}_k) - v^i_{\theta_k}\bigl(x_*(\theta_k)\bigr)\Bigr\|_\infty^2 + \Bigl\|v^i_{\theta_k}\bigl(x_*(\theta_k)\bigr)\Bigr\|_\infty^2\biggr)\notag\\
&\quad \leq (2\delta_u^2 + 6V_*^2)\cdot \sum_{i \in \cN}(\lambda^i)^2 + 6NH_u^2\cdot\sum_{i \in \cN}(\lambda^i)^2\cdot \overline{D}_{\psi}\bigl(x_*(\theta_k), \tilde{x}_{k}\bigr)\notag\\
&\quad = (2\delta_u^2 + 6V_*^2)\cdot \|\lambda\|^2_2 + 6NH_u^2\|\lambda\|_2^2\cdot \overline{D}_{\psi}\bigl(x_*(\theta_k), \tilde{x}_{k}\bigr),\notag
\#
where the second inequality follows from $\|v_{\theta}(x_*(\theta))\|_{\infty} \leq V_*$, and the third inequality follows from Assumption \ref{assumption:game}. Thus, taking \eqref{eq:proof-kl3} and \eqref{eq:proof-kl4} into \eqref{eq:proof-kl2}, we obtain for $6NH_u^2\|\lambda\|_2^2\beta^2_k \leq \beta_k$ (i.e., $\beta \leq 1/6NH_u^2\|\lambda\|_2^2$) that,
\$%
& \EE\Bigl[\overline{D}_{\psi}\bigl(\tilde{x}_*(\theta_k), x_{k+1}\bigr)\,\Big|\, \cF^x_k \Bigr]\\
 & \quad\leq \overline{D}_{\psi}\bigl(\tilde{x}_*(\theta_k), \tilde{x}_{k}\bigr) - \bigl(\beta_k - 3NH_u^2\|\lambda\|_2^2\beta^2_k\bigr)\cdot \overline{D}_{\psi}\bigl(x_*(\theta_k), \tilde{x}_{k}\bigr) + (\delta_u^2 + 3V_*^2)\cdot\|\lambda\|_2^2\beta^2_k + 2N\nu_k^2\notag\\
& \quad\leq (1 - \beta_k/2)\cdot \overline{D}_{\psi}\bigl(\tilde{x}_*(\theta_k), \tilde{x}_{k}\bigr) + (\delta_u^2 + 3V_*^2)\cdot \|\lambda\|_2^2\beta_k^2 + N\beta_k\nu_k\log(1/\nu_k) + 2N\nu_k^2,\notag
\$
where the second inequality follows from Lemma \ref{lemma:kl-tilde}. By Lemma \ref{lemma:kl-tilde}, we further have for $\nu_k \leq O(1/k)$,
\#\label{eq:proof-kl6}
&\EE\Bigl[\overline{D}_{\psi}\bigl(\tilde{x}_*(\theta_k), \tilde{x}_{k+1}\bigr)\,\Big|\, \cF^x_k \Bigr] - 2N\nu_{k+1}\\
&\quad \leq \EE\Bigl[\overline{D}_{\psi}\bigl(\tilde{x}_*(\theta_k), x_{k+1}\bigr)\,\Big|\, \cF^x_k \Bigr]\notag\\
&\quad \leq (1 - \beta_k/2)\cdot \overline{D}_{\psi}\bigl(\tilde{x}_*(\theta_k), \tilde{x}_{k}\bigr) + (\delta_u^2 + 3V_*^2)\cdot \|\lambda\|_2^2\beta_k^2 + N\beta_k\nu_k\log(1/\nu_k) + 2N\nu_k^2,\notag
\#
By Lemma \ref{lemma:lipschitz-kl}, we have for any $\gamma > \max\{1, 1/\nu_k^2\}$,
\#\label{eq:proof-kl7}
& \overline{D}_{\psi}\bigl(\tilde{x}_*(\theta_k), \tilde{x}_{k}\bigr) -  \biggl(1 + \frac{1}{\gamma}\biggr)\cdot \overline{D}_{\psi}\bigl(\tilde{x}_*(\theta_{k - 1}), \tilde{x}_{k}\bigr)\\
  &\quad  \leq \frac{(1 + \gamma)^2/\nu_k^2 - (1 + \gamma)}{2\gamma}\cdot \bigl\|\tilde{x}_*(\theta_k) - \tilde{x}_*(\theta_{k-1})\bigr\|_1^2\notag\\
 & \quad \leq \frac{(1 + \gamma)^2/\nu_k^2 - (1 + \gamma)}{2\gamma}\cdot \|\theta_k- \theta_{k-1}\|^2_2  \leq \frac{(1 + \gamma)^2/\nu_k^2 - (1 + \gamma)}{2\gamma}\cdot \tilde{H}_*^2\alpha_{k-1}^2 \cdot \|\tilde{\nabla} f_{k-1}\|^2_2,\notag
\#
where the second inequality follows from Lemma \ref{lemma:sensitivity}. Taking \eqref{eq:proof-kl7} into \eqref{eq:proof-kl6} and choosing $\gamma = (4 - 2\beta_k)/\beta_k$, we obtain
\#\label{eq:proof-kl8}
&\EE\Bigl[\overline{D}_{\psi}\bigl(\tilde{x}_*(\theta_k), \tilde{x}_{k+1}\bigr)\,\Big|\, \cF^x_k \Bigr]\\
 &\quad \leq (1 - \beta_k/2)\cdot\biggl(1 + \frac{1}{\gamma}\biggr)\cdot \overline{D}_{\psi}\bigl(x_*(\theta_{k-1}), x_{k}\bigr) + (\delta_u^2 + 3V_*^2)\cdot\|\lambda\|_2^2\beta_k^2  + N\beta_k\nu_k\log(1/\nu_k) \notag\\
&\quad\qquad + 2N\cdot(\nu_k^2 + \nu_{k+1}) + (1 - \beta_k/2)\cdot\frac{(1 + \gamma)^2/\nu_k^2 - (1 + \gamma)}{2\gamma}\cdot \tilde{H}_*^2\alpha^2_{k-1} \cdot \EE\bigl[\|\tilde{\nabla} f_{k-1}\|^2_2\,\big|\, \cF^x_k \bigr]\notag\\
 &\quad = (1 - \beta_k/4)\cdot \overline{D}_{\psi}\bigl(x_*(\theta_{k-1}), x_{k}\bigr) + (\delta_u^2 + 3V_*^2)\cdot\|\lambda\|_2^2\beta_k^2  + N\beta_k\nu_k\log(1/\nu_k)\notag\\
&\quad\qquad + 2N\cdot(\nu_k^2 + \nu_{k+1}) + (1 - \beta_k/4)\cdot\frac{(1 + \gamma)/\nu_k^2 - 1}{2}\cdot \tilde{H}_*^2\alpha^2_{k-1} \cdot \EE\bigl[\|\tilde{\nabla} f_{k-1}\|^2_2\,\big|\, \cF^x_k \bigr].\notag
\#
By Assumption \ref{assumption:estimate}, we have
\#\label{eq:proof-kl9}
& \EE\bigl[\|\tilde{\nabla}f_{k-1}\|^2_2\,\big|\, \cF_{k-1}^{\theta}\bigr]\\
  &\quad \leq 3\cdot\EE\biggl[\Bigl\| \tilde{\nabla} f(\theta_{k-1}, \tilde{x}_k) - \tilde{\nabla} f\bigl(\theta_{k-1}, \tilde{x}_*(\theta_{k-1})\bigr)\Bigr\|_2^2\,\bigg|\, \cF_{k-1}^\theta\biggr]\notag\\
 &\qquad + 3\cdot\EE\biggl[\Bigl\|\tilde{\nabla} f\bigl(\theta_{k-1}, \tilde{x}_*(\theta_{k-1})\bigr) - \nabla f_*(\theta_{k-1})\Bigr\|_2^2\,\bigg|\, \cF_{k-1}^\theta\biggr] + 3\cdot\EE\Bigl[\bigl\|\nabla f_*(\theta_{k-1})\bigr\|_2^2 \,\Big|\, \cF_{k-1}^\theta\Bigr]\notag\\
 &\quad \leq 3\tilde{H}^2\cdot \overline{D}_\psi\bigl(\tilde{x}_*(\theta_{k-1}), \tilde{x}_k\bigr) + 3\tilde{H}^2\cdot \overline{D}_\psi\bigl({x}_*(\theta_{k-1}), \tilde{x}_*(\theta_{k-1})\bigr)  + 3M^2\notag\\
  &\quad \leq  3\tilde{H}^2\cdot \overline{D}_\psi\bigl(\tilde{x}_*(\theta_{k-1}), \tilde{x}_k\bigr) + 3M^2 + 6N\tilde{H}^2\nu_{k-1},\notag
\#
where the third inequality follows from Assumptions \ref{assumption:f} and \ref{assumption:estimate}, and the last inequality follows from Lemma \ref{lemma:kl-tilde}.
Taking \eqref{eq:proof-kl9} into \eqref{eq:proof-kl8} and taking expectation on both sides, we obtain
\#\label{eq:proof-kl10}
 \tilde{\epsilon}_{k+1}^{x} & \leq  \Bigl\{1 - \beta_k/4 + 3\bigl[(1 + \gamma)/\nu_k^2 - 1\bigr]/2\cdot \tilde{H}^2\tilde{H}_*^2\alpha^2_{k-1}\Bigr\}\cdot \tilde{\epsilon}_k^{x} + (\delta_u^2 + 3V_*^2)\cdot\|\lambda\|_2^2\beta_k^2\notag\\
& \qquad + (1 - \beta_k/4)\cdot \frac{(1 + \gamma)/\nu_k^2 - 1}{2}\cdot \tilde{H}_*^2\alpha^2_{k-1}\cdot (2M^2 + 6N\tilde{H}^2\nu_{k-1})\notag\\
&\qquad - N\beta_k\nu_k\log\nu_k + 2N\cdot(\nu_{k+1} + \nu_k^2).
\#
With $\gamma = (4\beta_k - 2)/\beta_k$, $\alpha_k = \alpha/(k+1)$, $\beta_k = \beta/(k+1)^{2/7}$, $\nu_k = 1/(k+1)^{4/7}$, and $\alpha/\beta^{3/2} \leq 1/7\tilde{H}\tilde{H}_*$, we have
\$
3\bigl[(1 + \gamma)/\nu_k^2 - 1\bigr]/2\cdot \tilde{H}^2\tilde{H}_*^2\alpha^2_{k-1} & = 3\bigl[(4/\beta_k - 1)/\nu_k^2 - 1\bigr]/2\cdot \tilde{H}^2\tilde{H}_*^2\alpha^2_{k-1}\\
& \leq 6 \tilde{H}^2\tilde{H}_*^2\alpha^2_{k}/\beta_k\nu_k^2 \leq \beta^2_k/8 \leq \beta_k/8,
\$
taking which into \eqref{eq:proof-kl10}, we obtain
\#\label{eq:proof-kl11}
\tilde{\epsilon}_{k+1}^{x} & \leq  (1 - \beta_k/8)\cdot \tilde{\epsilon}_k^{x} + (\delta_u^2 + 3V_*^2)\cdot\|\lambda\|_2^2\beta_k^2 - N\beta_k\nu_k\log\nu_k + 2N\cdot(\nu_{k+1} + \nu_k^2)\notag\\
& \qquad + \beta^2_k\cdot (M^2 + 3N\tilde{H}^2\nu_{k-1})/4\tilde{H}^2\notag\\
& \leq  (1 - \beta_k/8)\cdot \epsilon_k^{x} + (\delta_u^2 + 3V_*^2)\cdot\|\lambda\|_2^2\beta_k^2 - N\beta_k\nu_k\log\nu_k + 2N\cdot(\nu_{k+1} + \nu_k^2)\notag\\
& \qquad + \beta^2_k\cdot (M^2 + 3N\tilde{H}^2)/4\tilde{H}^2.
\#
Recursively applying \eqref{eq:proof-kl11}, we obtain
\$
 \epsilon_{k+1}^x  & \leq  \epsilon_0^x\cdot \prod^k_{j =0}(1 - \beta_j/8) + \bigl[(\delta_u^2 + 3V_*^2)\cdot\|\lambda\|_2^2 + (M^2 + 3N\tilde{H}^2)/4\tilde{H}^2\bigr]\cdot\biggl[\sum_{l = 0}^k\beta_l^2\cdot\prod^k_{j =l+1}(1 - \beta_j/8)\biggr]\\
 &\qquad + N\cdot\biggl[\sum_{l = 0}^k(-\beta_l\nu_l\log\nu_l + 2\nu_{l+1} + 2\nu_l^2)\cdot\prod^k_{j =l+1}(1 - \beta_j/8)\biggr]\notag.
\$
\end{proof}

\subsection{Proof of Lemma \ref{lemma:epsilontheta-kl}}\label{appendix:epsilontheta-kl}
\begin{proof}
Applying \eqref{eq:proof-kl9} to \eqref{eq:proof11} and taking expectation on both sides, we get
\#\label{eq:proof-kl12}
\epsilon_{k+1}^\theta  & \leq (1 - \mu\alpha_k) \cdot \epsilon_k^\theta + (3M^2 + 6N\tilde{H}^2\nu_k)\cdot \alpha_k^2 + \tilde{H}^2\cdot (3\alpha_k^2 + \alpha_k/\mu)\cdot \epsilon_{k+1}^x\notag\\
& \leq (1 - \mu\alpha_k) \cdot \epsilon_k^\theta + (3M^2 + 6N\tilde{H}^2)\cdot \alpha_k^2 + \tilde{H}^2\cdot (3\alpha_k^2 + \alpha_k/\mu)\cdot \epsilon_{k+1}^x,
\#
where the second inequality follows from $\nu_k \leq 1$. 
Recursively applying \eqref{eq:proof-kl12}, we obtain
\$
\epsilon_{k+1}^\theta  & \leq \prod^k_{j =0}(1 - \mu\alpha_j) \cdot \epsilon_0^\theta + (3M^2 + 6N\tilde{H}^2)\cdot \sum_{l = 0}^k\biggl[\alpha_l^2\cdot\prod^k_{j =l+1}(1 - \mu\alpha_j)\biggr]\\
&\qquad + \tilde{H}^2\cdot \sum_{l = 0}^k\bigg[(2\alpha_l^2 + \alpha_l/\mu)\cdot \epsilon_{l+1}^x\cdot\prod^k_{j =l+1}(1 - \mu\alpha_j) \biggr].\notag
\$
\end{proof}

\section{Properties of the Bregman Divergence}%

The following lemma is used in the analysis of unconstrained games.
\begin{lemma}\label{lemma:lipschitz-bregman}
Let $\psi(\cdot)$ be $1$-strongly convex with respect to the norm $\|\cdot\|$. Assume that \eqref{eq:smooth-psi} holds. We have for any $\gamma > H_\psi^2 \geq 1$,
\$
D_\psi(x, z) - \biggl(1 + \frac{1}{\gamma}\biggr)\cdot D_\psi(y, z) \leq \frac{H_\psi^2\cdot (1 + \gamma)^2 - (1 + \gamma)}{2\gamma} \cdot \|x - y\|^2.
\$
\end{lemma}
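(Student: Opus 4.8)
The plan is to reduce the statement to the three-point identity for Bregman divergences together with the descent lemma coming from \eqref{eq:smooth-psi}; throughout I take $\|\cdot\|$ to be the Euclidean norm, which is the setting in which this lemma is used (the unconstrained, Mahalanobis case). First I would record the identity $D_\psi(x,z) = D_\psi(x,y) + D_\psi(y,z) + \langle \nabla\psi(y) - \nabla\psi(z),\, x - y\rangle$, which is immediate from the definition $D_\psi(a,b) = \psi(a) - \psi(b) - \langle\nabla\psi(b), a-b\rangle$ after the $\psi$-values cancel. Substituting it into the left-hand side of the claim gives $D_\psi(x,z) - (1 + 1/\gamma)\, D_\psi(y,z) = D_\psi(x,y) + \langle\nabla\psi(y) - \nabla\psi(z),\, x-y\rangle - \tfrac1\gamma D_\psi(y,z)$, so it remains to bound $D_\psi(x,y)$ and the cross term by multiples of $\|x-y\|^2$ while spending the $-\tfrac1\gamma D_\psi(y,z)$ term to absorb any leftover $D_\psi(y,z)$ mass.

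Next I would estimate the two surviving pieces. The descent lemma implied by \eqref{eq:smooth-psi} gives $D_\psi(x,y) \le \tfrac{H_\psi}{2}\|x-y\|^2$. For the cross term, Cauchy-Schwarz and then \eqref{eq:smooth-psi} give $\langle\nabla\psi(y) - \nabla\psi(z),\, x-y\rangle \le H_\psi \|y-z\|\,\|x-y\|$, and $1$-strong convexity gives $\|y-z\|^2 \le 2 D_\psi(y,z)$; applying Young's inequality $ab \le \tfrac{c}{2}a^2 + \tfrac{1}{2c}b^2$ with the weight chosen so that the resulting coefficient of $D_\psi(y,z)$ equals $1/\gamma$ (that is, $c = 1/(\gamma H_\psi^2)$) makes the $-\tfrac1\gamma D_\psi(y,z)$ term cancel the $D_\psi(y,z)$ contribution exactly, at the cost of an extra $\tfrac{\gamma H_\psi^2}{2}\|x-y\|^2$.

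Collecting terms leaves the coefficient $\tfrac{H_\psi}{2} + \tfrac{\gamma H_\psi^2}{2}$ in front of $\|x-y\|^2$, so the proof finishes with the elementary inequality $\tfrac{H_\psi}{2} + \tfrac{\gamma H_\psi^2}{2} \le \tfrac{H_\psi^2(1+\gamma)^2 - (1+\gamma)}{2\gamma}$, which after clearing the denominator reduces to $(H_\psi^2 - 1) + \gamma(2H_\psi+1)(H_\psi - 1) \ge 0$ and hence holds for all $H_\psi \ge 1$ and $\gamma > 0$ (the hypothesis $\gamma > H_\psi^2$ is not actually needed here). The only mildly fiddly point is the bookkeeping of constants and the choice of the Young weight; there is no conceptual obstacle. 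An alternative route avoids the three-point identity entirely: set $g(u) := D_\psi(u,z)$, which is $H_\psi$-smooth and globally minimized at $z$ with $g(z)=0$, apply the descent lemma to get $g(x) \le g(y) + \langle\nabla g(y), x-y\rangle + \tfrac{H_\psi}{2}\|x-y\|^2$, use the minimizer bound $g(y) \ge \tfrac{1}{2H_\psi}\|\nabla g(y)\|_2^2$ to dominate $-\tfrac1\gamma g(y)$, and complete the square in $\nabla g(y)$; this yields the even cleaner coefficient $\tfrac{H_\psi(1+\gamma)}{2}$, which again lies below the claimed bound.
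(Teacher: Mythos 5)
Your proof is correct, but it runs the three-point identity in the opposite direction from the paper, and the difference is worth noting. The paper writes $D_\psi(x,z)-D_\psi(y,z)=-D_\psi(y,x)+\langle\nabla\psi(z)-\nabla\psi(x),y-x\rangle$, lower-bounds the Bregman term via $1$-strong convexity, bounds the cross term by $H_\psi\|z-x\|\,\|y-x\|$ via \eqref{eq:smooth-psi}, and then---since the cross term involves $\|z-x\|$---folds $\tfrac12\|x-z\|^2\le D_\psi(x,z)$ back into the \emph{left-hand} side and rearranges; the $(1+\gamma)^2$ in the stated constant is an artifact of that final rearrangement. You instead expand $D_\psi(x,z)=D_\psi(x,y)+D_\psi(y,z)+\langle\nabla\psi(y)-\nabla\psi(z),x-y\rangle$, control $D_\psi(x,y)$ by the descent lemma (smoothness used as an upper bound rather than strong convexity as a lower bound), and spend the $-\tfrac1\gamma D_\psi(y,z)$ slack to absorb a cross term that involves $\|y-z\|$ rather than $\|z-x\|$. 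Your bookkeeping checks out: the coefficient $\tfrac{H_\psi+\gamma H_\psi^2}{2}$ is dominated by $\tfrac{H_\psi^2(1+\gamma)^2-(1+\gamma)}{2\gamma}$ exactly because $(H_\psi-1)\bigl[(H_\psi+1)+\gamma(2H_\psi+1)\bigr]\ge 0$ for $H_\psi\ge 1$, and your alternative via $g(u)=D_\psi(u,z)$ and the standard $g(y)\ge\tfrac{1}{2H_\psi}\|\nabla g(y)\|_2^2$ bound is also sound. What your route buys is a strictly sharper constant and the observation that the hypothesis $\gamma>H_\psi^2$ is superfluous (only $\gamma>0$ is used); what it costs is nothing, beyond your explicit restriction to the Euclidean norm---but the paper's own proof already mixes $\|\cdot\|_*$ with the $\ell_2$-smoothness in \eqref{eq:smooth-psi}, so you are no less rigorous on that point, and the lemma is only invoked in the unconstrained $\ell_2$/Mahalanobis setting anyway.
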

\begin{proof}
By the definition of Bregman divergence, we have for any $\gamma > 0$,
\$%
D_\psi(x, z) - D_\psi(y, z) & = \psi(x) - \bigl\la \nabla \psi(z), x - z \bigr\ra - \psi(y) + \bigl\la \nabla \psi(z), y - z \bigr\ra\\
& = - D_\psi(x, y) + \bigl \la \nabla \psi(z) - \nabla \psi(x), y - x \bigr\ra\notag\\
& \leq - 1/2\cdot \|x-y\|^2 + \bigl \| \nabla \psi(z) - \nabla \psi(x) \bigr\|_*\cdot  \|y - x \|,\notag
\$
where the inequality follows from $1$-strong convexity of $\psi(\cdot)$ and the Cauchy-Schwartz inequality. By \eqref{eq:smooth-psi}, we further have
\#\label{eq:bregman-proof1}
D_\psi(x, z) - D_\psi(y, z) & \leq - 1/2\cdot \|x-y\|^2 + H_\psi\cdot  \|z - x \|\cdot  \|y - x \|,\notag\\
& \leq \frac{1}{2(1+\gamma)} \cdot \|x - z\|^2 + \frac{H_\psi^2\cdot (1 + \gamma) - 1}{2} \cdot \|x - y\|^2\notag\\
& \leq \frac{1}{1+\gamma} \cdot D_\psi(x, z)+ \frac{H_\psi^2\cdot (1 + \gamma) - 1}{2} \cdot \|x - y\|^2,
\#
where the second inequality follows from $1$-strong convexity of $\psi(\cdot)$. Rearranging the terms in \eqref{eq:bregman-proof1}, we finish the proof of Lemma \ref{lemma:lipschitz-bregman}.
\end{proof}

The following two lemmas are involved in the analysis of simplex constrained games.

\begin{lemma}\label{lemma:lipschitz-kl}
Let $\psi(\cdot)$ be the Shannon entropy. We have for any $\gamma_k > \max\{1, 1/\nu_k^2\}$,
\$
D_{\psi^i}\bigl(\tilde{x}^i_*(\theta_k), \tilde{x}^i_k\bigr) - \biggl(1 + \frac{1}{\gamma}\biggr)\cdot D_{\psi^i}\bigl(\tilde{x}^i_*(\theta_{k-1}), \tilde{x}^i_k\bigr) & \leq \frac{(1 + \gamma)^2/\nu_k^2 - (1 + \gamma)}{2\gamma} \cdot \bigl\|\tilde{x}_*^i(\theta_k) - \tilde{x}_*^i(\theta_{k-1})\bigr\|_1^2,
\$
for $\{\tilde{x}_k\}_{k \geq 0}$ generated by Algorithm \ref{alg:kl}.
\end{lemma}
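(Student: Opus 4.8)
The plan is to follow the template of Lemma~\ref{lemma:lipschitz-bregman} almost verbatim, the only new difficulty being that the Shannon entropy fails the uniform smoothness condition~\eqref{eq:smooth-psi}, so $H_\psi$ must be replaced by a \emph{local} smoothness modulus valid only on the interior region actually visited by Algorithm~\ref{alg:kl}. Write $x=\tilde x^i_*(\theta_k)$, $y=\tilde x^i_*(\theta_{k-1})$, $z=\tilde x^i_k$. First I would record the exact Bregman identity
\[
D_{\psi^i}(x,z)-D_{\psi^i}(y,z)=-D_{\psi^i}(y,x)+\bigl\la\nabla\psi^i(z)-\nabla\psi^i(x),\,y-x\bigr\ra,
\]
obtained by expanding $D_{\psi^i}(a,b)=\psi^i(a)-\psi^i(b)-\la\nabla\psi^i(b),a-b\ra$ and cancelling. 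The first term is handled by Pinsker's inequality, i.e.\ the $1$-strong convexity of the Shannon entropy with respect to $\|\cdot\|_1$, which gives $-D_{\psi^i}(y,x)\le-\tfrac12\|x-y\|_1^2$; the second term is bounded by H\"older as $\|\nabla\psi^i(z)-\nabla\psi^i(x)\|_\infty\cdot\|y-x\|_1$.

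The crux is the estimate on $\|\nabla\psi^i(z)-\nabla\psi^i(x)\|_\infty=\max_j|\log z_j-\log x_j|$. Here the mixing step~\eqref{eq:xmix} is what makes the argument go through: $\tilde x^i_*(\theta_k)$ carries weight $\nu_k$ on the uniform distribution, and $\tilde x^i_k$ carries weight $\nu_{k-1}\ge\nu_k$ (since $\{\nu_k\}$ is nonincreasing), so every coordinate of both $x$ and $z$ is at least $\nu_k/d^i$. Applying the mean value theorem to $\log(\cdot)$ on each coordinate then yields $|\log z_j-\log x_j|\le \nu_k^{-1}|z_j-x_j|$, up to the dimensional factor $d^i$ that---as with $\tilde H_*=(1+d)\rho_\theta/\rho_x$ elsewhere---is absorbed into the constant; hence $\|\nabla\psi^i(z)-\nabla\psi^i(x)\|_\infty\le\nu_k^{-1}\|z-x\|_1$. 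This is exactly the analogue of~\eqref{eq:smooth-psi} with $H_\psi$ replaced by $1/\nu_k$, valid on the region where the iterates live.

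The remainder is the same bookkeeping as in Lemma~\ref{lemma:lipschitz-bregman}. Combining the two bounds gives $D_{\psi^i}(x,z)-D_{\psi^i}(y,z)\le-\tfrac12\|x-y\|_1^2+\nu_k^{-1}\|z-x\|_1\|y-x\|_1$; Young's inequality with parameter $1+\gamma$ turns the cross term into $\tfrac{1}{2(1+\gamma)}\|z-x\|_1^2+\tfrac{(1+\gamma)\nu_k^{-2}}{2}\|y-x\|_1^2$, and Pinsker once more in the form $\tfrac12\|x-z\|_1^2\le D_{\psi^i}(x,z)$ replaces the first piece by $\tfrac{1}{1+\gamma}D_{\psi^i}(x,z)$. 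Rearranging to isolate $\tfrac{\gamma}{1+\gamma}D_{\psi^i}(x,z)-D_{\psi^i}(y,z)$ and multiplying through by $1+\tfrac1\gamma$ produces the claimed inequality with coefficient $\tfrac{(1+\gamma)^2/\nu_k^2-(1+\gamma)}{2\gamma}$; the hypothesis $\gamma_k>\max\{1,1/\nu_k^2\}$ is the analogue of $\gamma>H_\psi^2\ge1$ and is what keeps this coefficient nonnegative and in the form needed when the lemma is invoked at~\eqref{eq:proof-kl7}. I expect the only step requiring genuine care to be the coordinatewise lower bound $\nu_k/d^i$: one must check that the mixing schedule of Algorithm~\ref{alg:kl}, together with the monotonicity of $\{\nu_k\}$, really does push both $\tilde x^i_k$ and $\tilde x^i_*(\theta_k)$ uniformly away from $\partial\Delta([d^i])$, since this is the single ingredient absent from the proof of Lemma~\ref{lemma:lipschitz-bregman}.
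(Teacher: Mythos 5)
Your proof is correct and follows essentially the same route as the paper's: establish the local $1/\nu_k$-Lipschitz continuity of the entropy gradient on the mixed iterates (using that the mixing step keeps every coordinate of $\tilde{x}^i_k$ and $\tilde{x}^i_*(\theta_k)$ bounded below by order $\nu_k$), and then rerun the proof of Lemma \ref{lemma:lipschitz-bregman} with $H_\psi$ replaced by $1/\nu_k$. Your observation that the coordinatewise lower bound is really $\nu_k/d^i$, so that a factor of $d^i$ appears in the Lipschitz modulus, is a fair one --- the paper silently drops it --- but it only perturbs the constant, not the structure of the argument.
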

\begin{proof}
For the Shannon entropy $\psi(\cdot)$, we have $\nabla_{x_j} \psi(\tilde{x}) = 1 + \log [\tilde{x}]_j$, which gives
\$
\Bigl|\nabla_{[x^i]_j} \psi\bigl(\tilde{x}^i_*(\theta_k)\bigr) - \nabla_{[x^i]_j} \psi(\tilde{x}^i_k)\Bigr| = \Bigl|\log \bigl[\tilde{x}^i_*(\theta_k)\bigr]_j - \log [\tilde{x}^i_k]_j\Bigr| \leq 1/\nu_k \cdot \Bigl| \bigl[\tilde{x}^i_*(\theta_k)\bigr]_j -  [\tilde{x}^i_k]_j\Bigr|.
\$
Thus, we have
\$
\Bigl\|\nabla \psi\bigl(\tilde{x}^i_*(\theta_k)\bigr) - \nabla \psi(\tilde{x}^i_k)\Bigr\|_\infty \leq 1/\nu_k \cdot \bigl\|\tilde{x}^i_*(\theta_k) - \tilde{x}^i_k\bigr\|_1.
\$
Replacing $H_\psi$ with $1/\nu_k$ in the proof of Lemma \ref{lemma:lipschitz-bregman}, we conclude the proof of Lemma \ref{lemma:lipschitz-kl}.
\end{proof}

\begin{lemma}\label{lemma:kl-tilde}
Let $\{x_k\}_{k \geq 0}$ and $\{\tilde{x}_k\}_{k \geq 0}$ be the sequences of strategy profiles generated by Algorithm \ref{alg:kl} with $\nu_k \leq O(1/k)$. We have
\#
\overline{D}_\psi\bigl(\tilde{x}_*(\theta_k), \tilde{x}_{k}\bigr) - \overline{D}_\psi\bigl(\tilde{x}_*(\theta_k), {x}_{k}\bigr)  & \leq 2N\nu_{k},\label{eq:kl-tilde1}\\
\overline{D}_\psi\bigl(\tilde{x}_*(\theta_k), \tilde{x}_{k+1}\bigr) - \overline{D}_\psi\bigl(\tilde{x}_*(\theta_k), {x}_{k+1}\bigr)  & \leq 2N\nu_{k+1},\label{eq:kl-tilde2}\\
\overline{D}_\psi\bigl(\tilde{x}_*(\theta_k), \tilde{x}_{k}\bigr) - \overline{D}_\psi\bigl({x}_*(\theta_k), \tilde{x}_{k}\bigr)&\leq 2N\nu_k\log(1/\nu_k).\label{eq:kl-tilde3}
\#
\end{lemma}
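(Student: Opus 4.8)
The three bounds are elementary consequences of the mixing identities
\[
\tilde x^i_k = (1-\nu_k)\,x^i_k + \nu_k\cdot\mathbf 1_{d^i}/d^i,\qquad
\tilde x^i_*(\theta_k) = (1-\nu_k)\,x^i_*(\theta_k) + \nu_k\cdot\mathbf 1_{d^i}/d^i,
\]
together with the two pointwise lower bounds they imply, namely $[\tilde x^i_k]_j\ge(1-\nu_k)[x^i_k]_j$ and $[\tilde x^i_k]_j\ge\nu_k/d^i$ for every coordinate $j$. Write $u^i=\mathbf 1_{d^i}/d^i$. Recall that here $D_{\psi^i}(p,q)=\sum_j p_j\log(p_j/q_j)$ is the KL divergence, so that for fixed $p$ the map $q\mapsto D_{\psi^i}(p,q)$ depends on $q$ only through $-\sum_j p_j\log q_j$, while for fixed $q$ the map $p\mapsto D_{\psi^i}(p,q)$ is convex. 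I treat \eqref{eq:kl-tilde1}--\eqref{eq:kl-tilde2}, where the \emph{second} argument is perturbed, using the first property, and \eqref{eq:kl-tilde3}, where the \emph{first} argument is perturbed, using the second.

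For \eqref{eq:kl-tilde1}, take logarithms in $[\tilde x^i_k]_j\ge(1-\nu_k)[x^i_k]_j$ to get $-\log[\tilde x^i_k]_j\le-\log[x^i_k]_j-\log(1-\nu_k)$; hence, for any distribution $p$,
\[
D_{\psi^i}(p,\tilde x^i_k)-D_{\psi^i}(p,x^i_k)
=\sum_j p_j\bigl(\log[x^i_k]_j-\log[\tilde x^i_k]_j\bigr)\le-\log(1-\nu_k).
\]
Applying this with $p=\tilde x^i_*(\theta_k)$, summing over $i\in\cN$, and using $-\log(1-\nu_k)\le 2\nu_k$ --- which holds once $\nu_k$ is small enough, as guaranteed by the hypothesis $\nu_k\le O(1/k)$ --- gives \eqref{eq:kl-tilde1}. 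Repeating the argument with $(x^i_k,\tilde x^i_k)$ replaced by $(x^i_{k+1},\tilde x^i_{k+1})$, whose mixing parameter is $\nu_{k+1}$, gives \eqref{eq:kl-tilde2}.

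For \eqref{eq:kl-tilde3}, use convexity of $p\mapsto D_{\psi^i}(p,\tilde x^i_k)$ along the convex combination defining $\tilde x^i_*(\theta_k)$:
\[
D_{\psi^i}\bigl(\tilde x^i_*(\theta_k),\tilde x^i_k\bigr)\le(1-\nu_k)\,D_{\psi^i}\bigl(x^i_*(\theta_k),\tilde x^i_k\bigr)+\nu_k\,D_{\psi^i}\bigl(u^i,\tilde x^i_k\bigr).
\]
Subtracting $D_{\psi^i}(x^i_*(\theta_k),\tilde x^i_k)$ and dropping the non-positive term $-\nu_k D_{\psi^i}(x^i_*(\theta_k),\tilde x^i_k)$ leaves $D_{\psi^i}(\tilde x^i_*(\theta_k),\tilde x^i_k)-D_{\psi^i}(x^i_*(\theta_k),\tilde x^i_k)\le\nu_k D_{\psi^i}(u^i,\tilde x^i_k)$. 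Finally, the bound $[\tilde x^i_k]_j\ge\nu_k/d^i$ gives $D_{\psi^i}(u^i,\tilde x^i_k)=\sum_j\frac1{d^i}\log\frac{1/d^i}{[\tilde x^i_k]_j}\le\log(1/\nu_k)$, and summing over $i\in\cN$ yields \eqref{eq:kl-tilde3} (indeed with constant $N$, leaving slack relative to the stated $2N$).

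No step poses a genuine difficulty; the only points requiring attention are matching the right structural property of the KL divergence to the argument being perturbed, and locating where the decay hypothesis on $\nu_k$ enters --- to pass from $-\log(1-\nu_k)$ to $2\nu_k$ in \eqref{eq:kl-tilde1}--\eqref{eq:kl-tilde2}, and, in \eqref{eq:kl-tilde3}, to ensure that the $\log(1/\nu_k)$ blow-up incurred because $[\tilde x^i_k]_j$ may be as small as $\nu_k/d^i$ is dominated by the $\nu_k$ prefactor produced by the mixing, so that the right-hand side $2N\nu_k\log(1/\nu_k)$ still tends to $0$.
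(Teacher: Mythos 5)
Your proof is correct. For \eqref{eq:kl-tilde1}--\eqref{eq:kl-tilde2} you follow essentially the same route as the paper: write the difference as $\sum_j p_j\log([x^i_k]_j/[\tilde x^i_k]_j)$ and bound the log-ratio coordinatewise by $2\nu_k$ (the paper does this via the explicit form of the mixed ratio, you via $[\tilde x^i_k]_j\ge(1-\nu_k)[x^i_k]_j$ and $-\log(1-\nu_k)\le 2\nu_k$; these are the same estimate). For \eqref{eq:kl-tilde3} your argument is genuinely different. The paper expands $D_{\psi^i}(\tilde x^i_*,\tilde x^i_k)-D_{\psi^i}(x^i_*,\tilde x^i_k)$ exactly, drops the term $-D_{\psi^i}(x^i_*,\tilde x^i_*)$, and applies H\"older with the pairing $\|\tilde x^i_*-x^i_*\|_1\le 2\nu_k$ and $\|\log(\tilde x^i_*/\tilde x^i_k)\|_\infty\le\log(1/\nu_k)$. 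You instead use convexity of $p\mapsto D_{\psi^i}(p,q)$ along the mixing line $\tilde x^i_*=(1-\nu_k)x^i_*+\nu_k u^i$, reducing the problem to bounding $\nu_k\,D_{\psi^i}(u^i,\tilde x^i_k)\le\nu_k\log(1/\nu_k)$. Your version is slightly cleaner: in the paper's H\"older step the sup-norm of the log-ratio is really only bounded by $\log(d^i/\nu_k)$ (since $[\tilde x^i_k]_j$ can be as small as $\nu_k/d^i$ while $[\tilde x^i_*]_j$ can be of order one), so a $\log d^i$ term is silently absorbed; in your convexity route the $1/d^i$ factors cancel inside $D_{\psi^i}(u^i,\tilde x^i_k)$ and you get the stated bound with constant $N$ rather than $2N$. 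Both approaches rely on the same two structural facts (the uniform mixing floor $\nu_k/d^i$ and the $\ell_1$-distance $2\nu_k$ between $\tilde x_*$ and $x_*$), so neither is more general, but yours avoids the dimension-dependent slack. The only caveat, which the paper shares, is the indexing of the mixing parameter in \eqref{eq:xmix} ($\nu_k$ versus $\nu_{k+1}$ for $\tilde x_{k+1}$); you adopt the convention consistent with the definition of $\tilde x_*(\theta_k)$, which is the one the rest of the analysis needs.
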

\begin{proof}
By the definition of KL divergence, we have
\#\label{eq:kl-tilde-proof1}
{D}_\psi\bigl(\tilde{x}^i_*(\theta_k), \tilde{x}^i_{k+1}\bigr) - {D}_\psi\bigl(\tilde{x}^i_*(\theta_k), {x}^i_{k+1}\bigr)  = \bigl\la \tilde{x}_*^i(\theta_k), \log(x^i_{k+1}/\tilde{x}_{k+1}^i)\bigr\ra.
\#
By \eqref{eq:xmix}, we have for all $\nu_k \leq O(k^{-1})$ that
\#\label{eq:kl-tilde-proof2}
\log\frac{[x^i_{k+1}]_j}{[\tilde{x}^i_{k+1}]_j} & = \log\biggl\{\frac{[{x}^i_{k+1}]_j}{(1 - \nu_{k+1})\cdot [{x}^i_{k+1}]_j + \nu_{k+1}/d^i}\biggr\}\notag\\
& =  \log\biggl\{1 + \frac{\nu_{k+1}\cdot \bigl([{x}^i_{k+1}]_j - 1/d^i\bigr)}{(1 - \nu_{k+1})\cdot [{x}^i_{k+1}]_j + \nu_{k+1}/d^i}\biggr\} \leq 2\nu_{k+1}.
\#
Taking \eqref{eq:kl-tilde-proof2} into \eqref{eq:kl-tilde-proof1}, we obtain
\$%
\overline{D}_\psi\bigl(\tilde{x}_*(\theta_k), \tilde{x}_{k+1}\bigr) - \overline{D}_\psi\bigl(\tilde{x}_*(\theta_k), {x}_{k+1}\bigr)  \leq \sum_{i \in \cN}\bigl\la \tilde{x}^i_*(\theta_k), {\bf 1}_{d^i}\bigr\ra\cdot 2\nu_{k+1} = 2N\nu_{k+1},
\$
which concludes the proof of \eqref{eq:kl-tilde1}. Similar arguments also yields \eqref{eq:kl-tilde2}. Also, we have
\$
&D_{\psi^i}\bigl(\tilde{x}^i_*(\theta_k), \tilde{x}^i_{k}\bigr) - D_{\psi^i}\bigl({x}^i_*(\theta_k), \tilde{x}^i_{k}\bigr)\\
&\quad  = \Bigl\la \tilde{x}^i_*(\theta_k), \log\bigl(\tilde{x}^i_*(\theta_k)/\tilde{x}^i_{k}\bigr) \Bigr\ra - \Bigl\la {x}^i_*(\theta_k), \log\bigl({x}^i_*(\theta_k)/\tilde{x}^i_{k}\bigr)\Bigr)\\
&\quad  = \Bigl\la \tilde{x}^i_*(\theta_k) - {x}^i_*(\theta_k), \log\bigl(\tilde{x}^i_*(\theta_k)/\tilde{x}^i_{k}\bigr)\Bigr\ra - D_{\psi^i}\bigl(x_*^i(\theta_k), \tilde{x}_*^i(\theta_k)\bigr)\\
& \quad \leq \bigl\| \tilde{x}^i_*(\theta_k) - {x}^i_*(\theta_k)\bigr\|_1\cdot \Bigl\|\log\bigl(\tilde{x}^i_*(\theta_k)/\tilde{x}^i_{k}\bigr)\Bigr\|_\infty \leq 2\nu_k\log(1/\nu_k),
\$
summing up which for $i \in \cN$ gives \eqref{eq:kl-tilde3}.
\end{proof}

\section{Extensions to Games with Multiple Equilibria}
\label{app:extension}

In this section, we provide more details on how to extend our algorithms to games with multiple equilibria. As discussed in Section \ref{sec:extension}, the only challenging case is when the function $v_{\theta(x)}$ is monotone but \textit{not} strongly monotone, so that the equilibrium set is a convex and closed region.
\begin{example}
We consider a specific routing game (see Example \ref{eg:routing}). Specifically, we study a simple network as shown in Figure \ref{fig:2}. Suppose that the number of nonatomic agents aiming to travel from node 1 to node 4 is $d = 10$.
\begin{figure}[ht]
    \centering
    \includegraphics[width=0.5\textwidth]{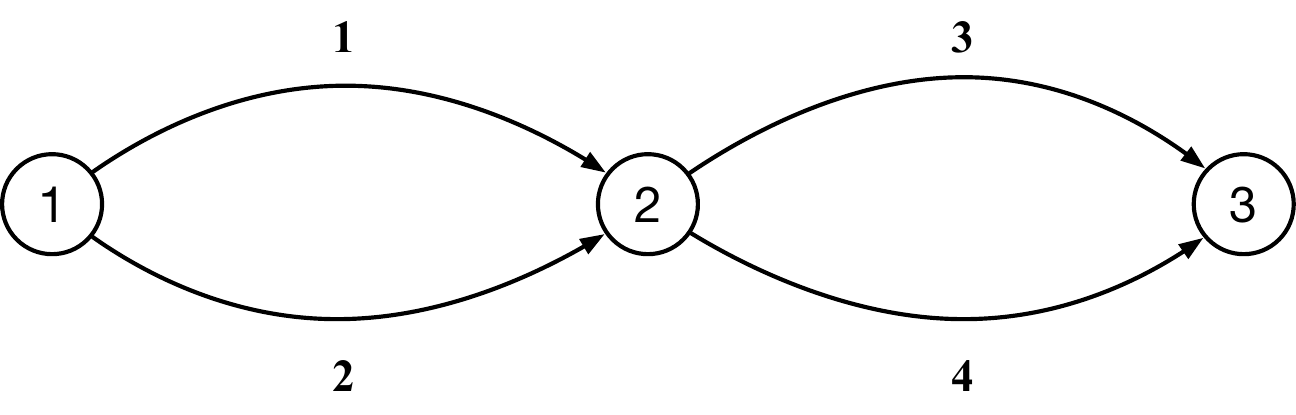}
    \caption{A 3-node-4-link network.}
    \label{fig:2}
\end{figure}
The costs for using the 4 edges are given by $t^1(x^1) = 4 + (x^1)^4$, $t^2(x^2) = 20 + 5(x^2)^4$, $t^3(x^3) = 1 + 30(x^3)^4$ and $t^4(x^4) = 30 + (x^4)^4$, respectively. There are 4 paths connecting node 1 and node 4 (path 1 uses link 1 and link 3, path 2 uses link 2 and link 4, path 3 uses link 1 and link 4, path 4 uses link 2 and 3). Let $q = (q^a)_{a = 1}^4$ be the number of agents using each path and $c(q) = (c^a(q))_{a = 1}^4$ be cost of using each path. Then we have
\begin{equation*}
    c(q) = \Delta^{\T} t(x),
\end{equation*}
where $\Delta = [1, 0, 1, 0; 0, 1, 0, 1; 1, 0, 0, 1; 0, 1, 1, 0]$. Therefore, we have $\nabla c(q) = \Delta^{\T} \text{Diag}({t'(x)}) \Delta$, which is semi-positive definite but not positive definite because the matrix $\Delta$ is singular. Under this setting, the set of equilibria to this routing game can be written as
\begin{equation}
    \{f^* \geq 0: \Delta f^* = x^*  \}, \quad \text{where}~x^* = [6, 4, 3, 7]^{\T}.
\end{equation}
\end{example}

Yet, we can simply add a regularizer $\eta^i \cdot (\log(x^i + \epsilon) + 1)$ to each $v_{\theta}^i(x)$. We note that  the Jacobian of $v_{\theta}^i(x) + \eta^i \cdot (\log(x^i + \epsilon) + 1)$ is  $\nabla v_{\theta}^i(x) + \eta^i \cdot \text{Diag}(1 / (x^i + \epsilon))$. The unique equilibrium then would be strongly stable (see Lemma \ref{lemma:sufficient-svs-kl}). Meanwhile, if $\epsilon > 0$, the Jacobian would also be upper-bounded, hence the function $v_{\theta}^i(x) + \eta^i \cdot (\log(x^i + \epsilon) + 1)$ is still Lipschitz continuous. We summarize a few properties of this regularizer. First,
as long as $\eta^1 > 0$ and $\epsilon > 0$, all conditions in Assumption \ref{assumption:game} would be satisfied. Seond, if $\eta^1 > 0$ and $\epsilon = 0$, the resulting equilibrium is often referred to as a quantal response (response) equilibrium, would is more realistic than a Nash (Wardrop) equilibrium if we believe that human beings are not always fully rational \citep{prashker2004route, ahmed2019understanding, melo2021uniqueness}. Third, if $\eta^1 > 0$ and $\epsilon > 0$, the resulting equilibrium would be a ``smoothed" quantal response equilibrium. It is close to the quantal response equilibrium when $\epsilon$ is sufficiently small, but it preserves the Lipschitz continuity.

\end{appendix}

\end{document}